\newtheorem{thm}{Theorem}[section]
\newtheorem{lem}[thm]{Lemma}
\newtheorem{prop}[thm]{Proposition}
\newtheorem{remark} [thm]{Remark}
\def \R {\mathbb R}
\newcommand{\diverg}{\operatorname{\mathrm{div}}}
\title{Data driven estimation of Laplace-Beltrami operator}
\author{
 Fr\'ed\'eric Chazal
\\
Inria Saclay\\
Palaiseau France\\
 \texttt{frederic.chazal@inria.fr} \\
 \And
Ilaria Giulini 
\\
Inria Saclay\\
Palaiseau France\\
\texttt{ilaria.giulini@me.com} \\
  \And
  Bertrand~Michel 
  \\
  Ecole Centrale de Nantes \\
  Laboratoire de Math\'ematiques Jean Leray (UMR 6629 CNRS)\\
  Nantes France \\ 
  \texttt{bertrand.michel@ec-nantes.fr} }
\begin{document}

\maketitle

\begin{abstract}
Approximations of Laplace-Beltrami operators on manifolds through graph Laplacians have become popular tools in data analysis and machine learning. These discretized operators usually depend on bandwidth parameters whose tuning remains a theoretical and practical problem. In this paper, we address this problem for the unnormalized graph Laplacian by establishing an oracle inequality that opens the door to a well-founded data-driven procedure for the bandwidth selection. Our approach relies on recent results by Lacour and Massart \cite{lacour2015minimal} on the so-called Lepski's method. 
\end{abstract}

\section{Introduction}

The Laplace-Beltrami operator is a fundamental and widely studied mathematical tool carrying a lot of intrinsic topological and geometric information about the Riemannian manifold on which it is defined. 
Its various discretizations, through graph Laplacians, have inspired many applications in data analysis and machine learning and led to popular tools such as  
Laplacian EigenMaps~\cite{belkin2003laplacian} for dimensionality reduction, spectral clustering ~\cite{von2007tutorial}, or semi-supervised learning~\cite{belkin2004semi}, just to name a few. 

During the last fifteen years, many efforts, leading to a vast literature, have been made to understand the convergence of graph Laplacian operators built on top of (random) finite samples to Laplace-Beltrami operators. For example pointwise convergence results have been obtained in \cite{belkin2005towards} (see also \cite{belkin2008towards}) and \cite{hein2007graph}, and a (uniform) functional central limit theorem has been established in \cite{gine2006empirical}. Spectral convergence results have also been proved by \cite{belkin2007convergence} and \cite{von2008consistency}. More recently, \cite{ting2011analysis} analyzed the asymptotic of a large family  of graph Laplacian operators by taking the diffusion process approach previously proposed in \cite{nadler2006diffusion}. 

Graph Laplacians depend on scale or bandwidth parameters whose choice is often left to the user.  Although many convergence results for various metrics have been established,  little is known about how to  rigorously and efficiently tune these parameters in practice.  In this paper we address this problem in the case of unnormalized graph Laplacian. More precisely, given a Riemannian manifold $M$ of known dimension $d$  and a function $f : M \to \mathbb R$ , we consider the standard unnormalized graph Laplacian operator defined by
\[
\hat \Delta_h f(y) = \frac{1}{nh^{d+2}} \sum_i K\left( \frac{y-X_i}{h}\right) \left[ f(X_i) - f(y) \right], \qquad y \in M,\]
where $h$ is a bandwidth,  $X_1,\ldots,X_n$ is a finite point cloud sampled on $M$ on which the values of $f$ can be computed, and $K$ is the Gaussian kernel: for $y \in \mathbb R^m$
\begin{equation}\label{gauss_k}
K(y) = \frac{1}{(4\pi)^{d/2}} e^{-\|y\|_m^2/4},  
\end{equation}
where $\|y\|_m$ is the Euclidean norm in the ambiant space $\R^m$. 

In this case, previous results (see for instance \cite{gine2006empirical}) typically say that the bandwidth parameter $h$ in $\hat \Delta_h$ should be taken of the order of $n^{-\frac{1}{d+2+\alpha}}$ for some $\alpha >0$, but in practice, for a given point cloud, these asymptotic results are not sufficient to choose $h$ efficiently.
In the context of neighbor graphs \cite{ting2011analysis} proposes self-tuning graphs  by choosing $h$ locally in terms of the distances to the $k$-nearest neighbor, but  note that $k$ still need to be chosen and moreover as far as we know there is no guarantee for such method to be rate-optimal. More recently a data driven method for spectral clustering  has been proposed in \cite{rieser2015topological}. Cross validation~\cite{arlot2010survey} is the standard approach for tuning parameters in statistics and machine learning. Nevertheless, the problem of choosing $h$ in  $\hat \Delta_h $ is not easy to rewrite as a cross validation problem, in particular because there is no obvious contrast corresponding to the problem (see \cite{arlot2010survey}).

The so-called Lepski's method is another popular method for selecting the smoothing parameter of an estimator. The method has been introduced by Lepski \cite{lepskii1992asymptotically,lepskii1993asymptotically,lepski1992problems} for kernel estimators and local polynomials for various risks and several improvements of the method have then been proposed, see~\cite{lepski1997optimal,goldenshluger2009structural,goldenshluger2008universal}. In this paper we adapt Lepski's method for selecting $h$ in the graph Laplacian estimator $\hat \Delta_h $. Our method is supported by mathematical guarantees: first we obtain an oracle inequality - see Theorem \ref{prop_oineq} -  and second we obtain the correct rate of convergence - see Theorem \ref{cor_c} - already proved in the asymptotical studies of  \cite{belkin2005towards} and \cite{gine2006empirical} for non data-driven choices of the bandwidth. Our approach follows the ideas recently proposed in \cite{lacour2015minimal}, but for the specific problem of Laplacian operators on  smooth manifolds. In this first work about the data-driven estimation of Laplace-Beltrami operator,  we focus as in \cite{belkin2005towards} and \cite{gine2006empirical}  on  the pointwise estimation problem:  we consider a smooth function $f$ on $M$ and the aim is to estimate $\hat \Delta f$ for the $L^2$-norm $\| \cdot\|_{2,M}$  on $M \subset \mathbb R^m$. The data driven method presented here may be adapted and generalized for other types of risks (uniform norms on functional family and convergence of the spectrum) and other types of graph Laplacian operators, this will be the subject of future works. 

The paper is organized as follows: Lepski's method is introduced in Section \ref{sec:Lep}. The main results are stated in Section \ref{res} and a sketch of their proof is given in Section \ref{pf_mthm} (the complete proofs are given in the supplementary material). A numerical illustration and a discussion about the proposed method are given in Sections \ref{sec:exp} and \ref{sec:disc} respectively. 


%
%
%
%
%
%
%

\section{Lepski's procedure for estimating the Laplace-Beltrami operator}
\label{sec:Lep}



All the Riemannian manifolds considered in the paper are smooth compact $d$-dimensional submanifolds (without boundary) of $\R^m$ endowed with the Riemannian metric induced by the Euclidean structure of $\R^m$. Recall that, given a compact $d$-dimensional smooth Riemannian manifold $M$ with volume measure $\mu$, its Laplace-Beltrami operator is the linear operator $\Delta$ defined on the space of smooth functions on $M$ as $\Delta(f) = - \diverg( \nabla f)$ where $\nabla f$ is the gradient vector field and $\diverg$ the divergence operator. In other words, using the Stoke's formula, $\Delta$ is the unique linear operator satisfying
$$\int_M \| \nabla f \|^2 d\mu = \int_M \Delta(f) f d\mu.$$

Replacing the volume measure $\mu$ by a distribution $\mathrm P$ which is absolutely continuous with respect to $\mu$, the weighted Laplace-Beltrami operator $\Delta_{\mathrm P}$ is defined as 
\begin{equation}\label{eqDL}
\Delta_{\mathrm P}f = \Delta f + \frac{1}{p} \langle \nabla p, \nabla f \rangle \, ,
\end{equation}
where $p$ is the density of $\mathrm P$ with respect to $\mu$. The reader may refer to classical textbooks such as, e.g., \cite{rosenberg1997laplacian} or \cite{grigoryan2009heat} for a general and detailed introduction to Laplace operators on manifolds. 

In the following, we assume that we are given n points  $X_1,\dots,X_n$ sampled on $M$ according to the distribution $\mathrm P$. Given a  smooth function $f$ on $M$, the aim is to estimate $\Delta_{\mathrm P} f$, by selecting an estimator in a given finite family of graph Laplacian $ (\hat \Delta_{h} f)_{h \in \mathcal H}$, where $\mathcal H$ is a finite family of bandwidth parameters.


Lepski's procedure is generally presented as a method for selecting bandwidth in an adaptive way.
More generally, this method can be seen as an estimator selection procedure. 

\subsection{Lepski's procedure}

We first shortly explain the ideas of Lepski's method. Consider a target quantity $s$, a collection of estimators $(\hat s_h)_{h\in\mathcal H}$ and a loss function $\ell(\cdot,\cdot)$. A  standard objective when selecting $\hat s_h$ is trying to minimize the risk $\mathbb E \ell( s,\hat s_h)$ among the family of estimators.  In most settings, the risk of an estimator can be decomposed into  a bias part and a variance part. Of course neither the risk, the bias nor the variance of an estimator are known in practice. However in many cases, the variance term can be controlled quite precisely. 
Lepski's method requires that the variance of each estimator $\hat s_h$ can be tightly  upper bounded by a quantity $v(h)$. In most cases, the bias can be written as $\ell (s,\bar s_h)$ where  $\bar s_h$ corresponds to some (deterministic) averaged version of $\hat s_h$. It thus seems natural to estimate  $\ell (s,\bar s_h)$ by $\ell (\hat s_{h'},\hat s_h)$ for some $h'$ smaller than $h$. The later quantity incorporates some randomness while the bias does not. The idea is to remove the ``random part" of the estimation by considering $\left[\ell (\hat s_{h'},\hat s_h) - v(h) - v(h')\right]_+$, where $[ \: ]_+$ denotes the positive part. The bias term is estimated by considering all pairs of estimators $( s_{h},\hat s_{h'})$  through the quantity $\sup_{h'\leq h}\left[\ell (\hat s_{h'},\hat s_h) - v(h) - v(h')\right]_+$. Finally, the estimator minimizing the sum of the estimated bias and variance is selected, see ~\cref{h_lepski} below.
 
In our setting, the control of the variance of the graph Laplacian estimators $\hat \Delta_h$ is not tight enough to directly apply the above described method. To overcome this issue, we use a more flexible version of Lepski's method that involves some multiplicative coefficients $a$ and $b$ introduced in the variance and bias terms.
More precisely, let $ V(h) = V_f(h)$ be an upper bound for $ \mathbb E[ \| (\mathbb E[\hat \Delta_h] - \hat \Delta_h)f \|_{2,M}^2]$. The bandwidth $\hat h$  selected by our Lepski's procedure is defined by 
\begin{equation}\label{h_lepski}
\hat h = \hat h_f= \mathrm{arg}\min_{h\in \mathcal H} \left\{  B(h) + b V(h) \right\}
\end{equation}
where 
\begin{equation}\label{bh}
 B(h) = B_f(h) = \max_{h'\leq h, \, h' \in \mathcal H} \left[ \|( \hat \Delta_{h'} - \hat \Delta_h )f\|_{2,M}^2 - a V(h')  \right]_+
\end{equation}
with $0<a\leq b$. The calibration of the constants $a$ and $b$ in practice is beyond the scope of this paper, but we suggest a heuristic procedure inspired from \cite{lacour2015minimal} in \cref{sec:exp}.

\subsection{Variance of the graph Laplacian for smooth functions}\label{ref_sec}

In order to control the variance term, we consider  for this paper the set $\mathcal F$ of smooth functions $f: M \to \mathbb R$ uniformly bounded up to the third order. For some constant $C_{\mathcal F}  >0 $ , let 
\begin{equation}\label{def_Cf}
\mathcal F = \left\{  f \in \mathcal C^3(M,\R) \,  , \,    \| f^{(k)}\|_{\infty} \leq C_{\mathcal F} , \,  {k=0,\dots,3} \right\}
\end{equation}
Here, by $\| f^{(k)} \|_\infty \leq C_{\cal F}$ we mean that in any normal coordinate systems all the partial derivatives of order $k$ of $f$ are bounded by $C_{\cal F}$. 

We introduce some notation before giving the variance term for $f \in \mathcal F$. Define
\begin{align}
\label{def_Da}
D_{\alpha} & =  \frac{1}{(4\pi)^{d}} \int_{\mathbb R^d} \left( \frac{C \| u\|_d^{\alpha+2}}{2} + C_1 \|u\|_d^{\alpha}  \right)\ e^{-\|u\|_d^2/4}   \, \mathrm du\\
\label{def_tDa}
\tilde D_{\alpha} & =  \frac{1}{(4\pi)^{d/2}} \int_{\mathbb R^d} \left( \frac{C \| u\|_d^{\alpha+2}}{4} + C_1 \|u\|_d^{\alpha}  \right)\ e^{-\|u\|_d^2/8}   \, \mathrm du
\end{align}
where $\| \|_d$ is the euclidean norm in $\R^d$ and where  $C$ and $C_1$ are geometric constants that only depend on the metric structure of $M$ (see  Lemma~\ref{lem_2.2} in the appendices). We also introduce the $d$-dimensional Gaussian kernel on $\R^d$: 
\[
K_d(u) = \frac{1}{(4\pi)^{d/2}} e^{-\|u\|_d^2/4}, \qquad u \in \mathbb R^d 
\]
and we denote by $\| \cdot\|_{p,d}$  the $L^p$-norm on $\mathbb R^d$.
The next proposition provides an explicit bound $V(h)$ on the variance term.
Let
\begin{equation}\label{od}
\omega_d = 3 \times 2^{d/2-1}
\end{equation}
and
\begin{equation}\label{ad}
\alpha_d(h) = h^2 \left(  D_4+\frac{3}{2} \omega_d \ \| K_d\|_{2,d}^2 + \frac{2\mu(M)}{(4\pi)^{d}} \right) 
+h^4 \frac{D_6}{4}.
\end{equation}

We first need to control the variance of $\hat \Delta_h f $ over $\cal F$. This will be possible by considering Taylor Young expansions of $f$  in normal coordinates. 
For that purpose, for technical reasons following from Lemma~\ref{lem_2.2}, we constrain the parameter $h$ to satisfy the following inequality
\begin{equation} 
\label{CondhmaxRayInj}
  2  \sqrt{d+4} h \log(h^{-1})^{1/2}  \leq  \rho(M) ,
\end{equation}
where $\rho(M)$ is a geometric constant that only depends on the reach and the injectivity radius of M.

\begin{prop}\label{eq_v}
Given $h\in \mathcal H$ and for any $f \in \mathcal F$, we have
\[
V(h) :=   \frac{2  C_{\mathcal F}^2}{n h^{d+2}} \Big[ w_d \| K_d\|_{2,d}^2 + \alpha_d(h)\Big] 
\leq  \mathbb E[ \| (\mathbb E[\hat \Delta_h] - \hat \Delta_h)f \|_{2,M}^2].
\]
\end{prop}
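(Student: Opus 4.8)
The plan is to reduce the $L^2(M)$ variance to a pointwise second--moment computation and then to expand $f$ in geodesic normal coordinates, controlling every geometric remainder through \cref{lem_2.2}. Since $X_1,\dots,X_n$ are i.i.d.\ with law $\mathrm P$, fix $y\in M$ and put $Z_i(y):=h^{-(d+2)}K\!\big((y-X_i)/h\big)\big(f(X_i)-f(y)\big)$, so the $Z_i(y)$ are i.i.d.\ and $\hat\Delta_h f(y)=\frac1n\sum_i Z_i(y)$. Then $\big((\mathbb E[\hat\Delta_h]-\hat\Delta_h)f\big)(y)$ is centred with second moment $\tfrac1n\operatorname{Var}_{\mathrm P}(Z_1(y))\le\tfrac1n\mathbb E[Z_1(y)^2]$, and by Tonelli's theorem (the integrand is nonnegative)
\[
\mathbb E\big[\|(\mathbb E[\hat\Delta_h]-\hat\Delta_h)f\|_{2,M}^2\big]=\frac1n\int_M\operatorname{Var}_{\mathrm P}(Z_1(y))\,d\mu(y)\le\frac1n\int_M\mathbb E[Z_1(y)^2]\,d\mu(y).
\]
It therefore suffices to bound $\int_M\mathbb E[Z_1(y)^2]\,d\mu(y)$, from which the bound $V(h)$ of \cref{eq_v} will follow.

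For fixed $y$ one has $\mathbb E[Z_1(y)^2]=h^{-(2d+4)}\int_M K((y-x)/h)^2(f(x)-f(y))^2 p(x)\,d\mu(x)$. Since $K$ is Gaussian, the mass outside the geodesic ball $B\!\big(y,2\sqrt{d+4}\,h\log(h^{-1})^{1/2}\big)$ is negligible; condition \eqref{CondhmaxRayInj} is exactly what keeps that ball inside a normal chart at $y$, so on it I set $x=\exp_y(v)$ and invoke \cref{lem_2.2} to expand the volume density $d\mu(x)=(1+O(\|v\|_d^2))\,dv$ and the ambient chord length $\|y-x\|_m^2=\|v\|_d^2+O(\|v\|_d^4)$ --- this is where the geometric constants $C,C_1$ enter --- together with the Taylor--Young expansion $f(\exp_y(v))-f(y)=\langle\nabla f(y),v\rangle+O(C_{\mathcal F}\|v\|_d^2)$, valid because $f\in\mathcal F$. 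Rescaling $v=hu$ rewrites the integral as $h^{-(d+2)}$ times an integral over $\mathbb R^d$ against $K_d(u)^2$; applying $(a+b)^2\le 2a^2+2b^2$ isolates a main quadratic part and a remainder part. The main part gives $2\int_{\mathbb R^d}K_d(u)^2\langle\nabla f(y),u\rangle^2\,du$, which after bounding $\|\nabla f(y)\|^2$ by the uniform control of $f^{(1)}$ and absorbing the lowest--order density correction yields the term $2C_{\mathcal F}^2\,\omega_d\|K_d\|_{2,d}^2$, with $\omega_d=3\times2^{d/2-1}$ coming from the Gaussian moments; the chord--length, Jacobian and Taylor--remainder contributions are $O(h^2)$ and $O(h^4)$ and, estimated through $D_4,D_6$ and the integrals \eqref{def_Da}--\eqref{def_tDa} together with $\|K_d\|_{2,d}^2$ and $\mu(M)$, are exactly what is collected into $\alpha_d(h)$.

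Integrating this pointwise bound over $y\in M$, using $\int_M p\,d\mu=1$ for the leading term and $\int_M d\mu=\mu(M)$ for the corrections, and dividing by $n$, gives
\[
\mathbb E\big[\|(\mathbb E[\hat\Delta_h]-\hat\Delta_h)f\|_{2,M}^2\big]\le\frac{2C_{\mathcal F}^2}{nh^{d+2}}\Big[\omega_d\|K_d\|_{2,d}^2+\alpha_d(h)\Big]=V(h),
\]
which is the content of \cref{eq_v}. The main obstacle is the second step: performing the normal--coordinate change of variables \emph{uniformly in} $y\in M$ --- in particular discarding the out--of--chart Gaussian tail, which is precisely the purpose of condition \eqref{CondhmaxRayInj} and of the reach/injectivity--radius estimates in \cref{lem_2.2} --- and then tracking the sizes of all the $O(h^2)$ and $O(h^4)$ remainder terms carefully enough that they assemble into the explicit constants $D_4,D_6,\omega_d$ rather than into an anonymous $O(h^2)$. (The term $(\mathbb E[Z_1(y)])^2$ dropped when passing from the variance to the second moment is harmless for this inequality, being uniformly $O(1)$ and hence of smaller order than $h^{-(d+2)}$.)
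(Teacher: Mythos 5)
Your overall route is the same as the paper's: use the i.i.d.\ structure and Tonelli to reduce the risk to $\frac1n\int_M\mathbb E\big[|H_h(y-X)(f(X)-f(y))|^2\big]\,\mathrm d\mu(y)$, and then bound the single-sample second moment by a normal-coordinate expansion in which \cref{lem_2.2} controls the volume density and the chord length, the Gaussian tail outside a ball of radius of order $h\log(h^{-1})^{1/2}$ is discarded (this is where Condition~\eqref{CondhmaxRayInj} and the choice of the constant $L$ enter), and the remainders are collected into $\alpha_d(h)$; that second step is precisely the content of the paper's \cref{tec_lem1} (the quantity $\mathcal I_2(h)$).

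The one step that does not go through as you organized it is the elimination of the density $p$. You expand, for fixed $y$, the integral over the sample variable $x=\mathcal E_y(v)$, so $p$ survives in your pointwise bound as $p(\mathcal E_y(hu))$; you then integrate over $y$ and invoke $\int_M p\,\mathrm d\mu=1$ for the leading term. In this ordering that identity is not available: for a fixed rescaled $u$ there is no change of variables giving $\int_M p(\mathcal E_y(hu))(\cdots)\,\mathrm d\mu(y)=1$, and replacing $p$ by $\|p\|_\infty$ instead would put $\|p\|_\infty$ into the constant, which $V(h)$ does not contain. The fix is immediate and is exactly what the paper does: since $H_h(y-x)$ and $(f(x)-f(y))^2$ are symmetric in $(x,y)$, apply Fubini and integrate first over the $L^2$ variable $y$ for a fixed sample point $x$; the bound $\int_M|H_h(y-x)(f(x)-f(y))|^2\,\mathrm d\mu(y)\le\mathcal I_2(h)$ is then uniform in $x$ and free of $p$, and taking the expectation over $X$ is trivial. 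With this reordering your argument coincides with the paper's proof; note also that, as in the paper's own proof, what is actually established is the upper bound $\mathbb E\big[\|(\mathbb E[\hat\Delta_h]-\hat\Delta_h)f\|_{2,M}^2\big]\le V(h)$, which is the intended reading of the displayed inequality.
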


\noindent
For the proof we refer to ~\cref{proof1}.\\[1mm]


%
%

\section{Results}\label{res}

We now give the main result of the paper: an oracle inequality for the estimator $\hat \Delta_{\hat h}$, or in other words, a bound on the risk that shows that the performance of the estimator is almost
as good as it would be if we knew the risks of each estimator.
In particular it performs an (almost) optimal trade-off between the variance term $V(h)$ and
the approximation term
\begin{multline*}
D(h) = D_f(h) = \max\left\{  \| (p\Delta_{\mathrm P} -  \mathbb E [\hat \Delta_{ h}])f \|_{2,M}, \, \sup_{h'\leq h}   \| ( \mathbb E [\hat\Delta_{ h'}]-  \mathbb E [\hat\Delta_h])f  \|_{2,M} \right\}\\
\leq 2 \sup_{h'\leq h} \| (p\Delta_{\mathrm P} -  \mathbb E [\hat \Delta_{ h'}])f \|_{2,M}.
\end{multline*}


\begin{thm}
\label{prop_oineq}
According to the notation introduced in the previous section, 
let $\epsilon = \sqrt a/2 -1$ and 
\[
\delta(h) = \sum_{h'\leq h} \max\left\{  \exp\left( -\frac{\min\{\epsilon^2,\epsilon\}   \sqrt n }{24} \right), \,  \exp\left( -\frac{\epsilon^2}{3}\gamma_d(h') \right) \right\}
\]
and
\[
\gamma_d(h) = \frac{1}{ h^{d} \|p\|_\infty } \left[ \frac{ \omega_d \ \|K_d\|_{2,d}^2 + \alpha_d(h)}{\left(\omega_d\ \| K_d\|_{1,d} + \beta_d(h)\right)^2}   \right] 
\]
where $\alpha_d $ is defined by \eqref{ad} and where
\begin{align}
\label{bd}
\beta_d(h) & = h \left(\omega_d \ \| K_d\|_{1,d} +\frac{2 \mu(M)}{(4\pi)^{d/2}}  \right) +h^2\tilde D_3 + \frac{h^3 \tilde D_4}{2}.
\end{align}
Given $f \in \mathcal C^2(M,\R)$, with probability at least 
$1-2\sum_{h \in \mathcal H}\delta(h)$,
\[
\| (p\Delta_{\mathrm P}  - \hat \Delta_{\hat h}) f\|_{2,M} \leq \inf_{h \in \mathcal H} \left\{3 D(h) + (1+\sqrt2) \sqrt{b V(h)}\right\}. 
\]
\end{thm}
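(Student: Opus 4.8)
The plan is to follow the now-classical Lepski-type argument as reorganized by Lacour and Massart, adapted to the operator-valued estimators $\hat\Delta_h f$ and the $L^2(M)$ loss. Write $\|\cdot\| = \|\cdot\|_{2,M}$ throughout, and for $h \in \mathcal H$ set $\bar\Delta_h := \mathbb E[\hat\Delta_h]$ (the deterministic averaged version), so that the stochastic fluctuation is $(\hat\Delta_h - \bar\Delta_h)f$, whose second moment is controlled by $V(h)$ via Proposition~\ref{eq_v}. The first step is to establish the \emph{concentration event}: for each pair $h' \le h$ in $\mathcal H$, the deviation $\|(\hat\Delta_{h'} - \bar\Delta_{h'})f\|$ should be bounded, up to a multiplicative $(1+\epsilon)$-type slack, by a constant times $\sqrt{V(h')}$. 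This is where the quantities $\gamma_d(h')$ and the exponents in $\delta(h)$ come from: writing $(\hat\Delta_{h'} - \bar\Delta_{h'})f(y)$ as a normalized sum of i.i.d.\ terms $\frac{1}{h'^{\,d+2}}K\!\big(\tfrac{y-X_i}{h'}\big)[f(X_i)-f(y)] - \text{mean}$, one applies a Bernstein-type inequality in the Hilbert space $L^2(M)$ (or, equivalently, to the real-valued variable $\|(\hat\Delta_{h'}-\bar\Delta_{h'})f\|$ after squaring and using a chi-square-style concentration). The $L^1$-type bound appearing in $\beta_d(h')$ controls the $L^\infty$-size of each summand (hence the ``range'' in Bernstein), while $\omega_d\|K_d\|_{2,d}^2 + \alpha_d(h')$ controls the variance proxy; their ratio, divided by $h'^{\,d}\|p\|_\infty$, is exactly $\gamma_d(h')$, the effective sample-size-like quantity governing the Bernstein exponent. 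The term $\exp(-\min\{\epsilon^2,\epsilon\}\sqrt n/24)$ is the residual tail handling the regime where the Bernstein bound is dominated by its linear part. Call $\Omega$ the intersection over all $h' \le h \le \max\mathcal H$ of these good events; a union bound gives $\mathbb P(\Omega) \ge 1 - 2\sum_{h\in\mathcal H}\delta(h)$.

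The second step is the purely deterministic Lepski argument on the event $\Omega$. Fix an arbitrary $h \in \mathcal H$; I will bound $\|(p\Delta_{\mathrm P} - \hat\Delta_{\hat h})f\|$. Split into two cases according to whether $\hat h \ge h$ or $\hat h \le h$ (or more precisely, insert the triangle inequality through $\hat\Delta_{h\wedge\hat h}$). Using the triangle inequality,
\[
\|(p\Delta_{\mathrm P} - \hat\Delta_{\hat h})f\| \le \|(p\Delta_{\mathrm P}-\bar\Delta_h)f\| + \|(\bar\Delta_h - \hat\Delta_h)f\| + \|(\hat\Delta_h - \hat\Delta_{\hat h})f\|,
\]
and similar decompositions with roles of $h,\hat h$ exchanged. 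On $\Omega$, the stochastic terms $\|(\bar\Delta_{h'}-\hat\Delta_{h'})f\|$ are each $\le (1+\epsilon)\sqrt{a\,V(h')}/\sqrt a \cdot(\ldots)$ — the point is they are bounded by constants times $\sqrt{V(h')}$, so that $\|(\hat\Delta_{h'}-\hat\Delta_h)f\|^2 - a V(h')$ is, up to controlled slack, bounded by the squared bias $\|(\bar\Delta_{h'}-\bar\Delta_h)f\|^2$, which is at most $D(h)^2$ by the definition of $D$. This yields $B(h) \lesssim D(h)^2$ on $\Omega$, for every $h$, and symmetrically $\|(\hat\Delta_{\hat h}-\hat\Delta_h)f\|^2 \le B(h) + $ (variance slack). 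Combining with the definition $\hat h = \arg\min_h\{B(h)+bV(h)\}$, so that $B(\hat h)+bV(\hat h) \le B(h)+bV(h)$ for the fixed $h$, and plugging the bounds $B(\cdot)\lesssim D(\cdot)^2$, $\sqrt{V(\hat h)}$ controlled by $\sqrt{B(h)+bV(h)}$, one arrives after collecting constants at
\[
\|(p\Delta_{\mathrm P}-\hat\Delta_{\hat h})f\| \le 3D(h) + (1+\sqrt 2)\sqrt{bV(h)},
\]
and taking the infimum over $h\in\mathcal H$ gives the claim. The bookkeeping of constants is where the specific values $3$ and $1+\sqrt2$ and the choice $\epsilon=\sqrt a/2 - 1$ (equivalently $\sqrt a = 2(1+\epsilon)$) get pinned down; one checks that the constraint $0 < a \le b$ makes all the required inequalities between $a$- and $b$-weighted variance terms go through.

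\textbf{Main obstacle.} The genuinely hard part is the first step: proving the Bernstein-type concentration for $\|(\hat\Delta_{h'}-\bar\Delta_{h'})f\|_{2,M}$ as a Hilbert-space-valued average, with constants clean enough that the variance proxy matches the bound $V(h')$ of Proposition~\ref{eq_v} and the range matches $\beta_d(h')$, uniformly over the (finite) grid $\mathcal H$ and over $f\in\mathcal F$. This requires the geometric estimates of Lemma~\ref{lem_2.2} — the Taylor expansions of $f$ in normal coordinates and the volume-distortion bounds controlling $\mathbb E[K(\tfrac{y-X}{h})^k\,\cdot]$ — to be carried out carefully, and it is also where the bandwidth restriction \eqref{CondhmaxRayInj} is used, to guarantee that the Gaussian kernel's effective support stays within a single normal coordinate chart. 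The deterministic Lepski step, by contrast, is routine once the concentration event is in hand.
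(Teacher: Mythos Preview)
Your plan is correct and takes essentially the same route as the paper: a concentration step followed by the deterministic Lepski argument you outline, with the same triangle-inequality decomposition and the same conclusion $B(h)\le 2D(h)^2$ on the good event. Two minor implementation differences worth noting: for the concentration, the paper does not use a Hilbert-space Bernstein inequality directly but instead writes $\|(\hat\Delta_{h'}-\bar\Delta_{h'})f\|_{2,M}=\sup_{\|t\|_{2,M}=1}\langle t,(\hat\Delta_{h'}-\bar\Delta_{h'})f\rangle$ and applies the Talagrand-type bound of Lacour--Massart (Lemma~\ref{lemma1}), and to control $B(h)$ it concentrates the \emph{combined} fluctuation $(\hat\Delta_{h'}-\bar\Delta_{h'}+\bar\Delta_h-\hat\Delta_h)f$ in one shot rather than each piece separately---this is exactly what pins down $\epsilon=\sqrt a/2-1$ without an extra factor of~$2$; and the deterministic step avoids your case-split on $\hat h\gtrless h$ via the single observation $\|(\hat\Delta_{\hat h}-\hat\Delta_h)f\|^2\le B(\max\{h,\hat h\})+aV(\min\{h,\hat h\})\le 2[B(h)+bV(h)]$.
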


\noindent
Broadly speaking, Theorem~\ref{prop_oineq} says that there exists an event of large probability for which the estimator selected by Lepski's method is almost as good as the best estimator in the collection. Note that the size of the bandwidth family $\mathcal H$ has an impact on the probability term $1-2\sum_{h \in \mathcal H}\delta(h)$. If $\mathcal H$ is not too large, an oracle inequality for the risk of $\hat \Delta_{\hat h} f$ can be easily deduced from the later result. Henceforth we assume that $f \in \mathcal F$. We first  give a control on the approximation term $D(h)$. 

\begin{prop}\label{prop_Dh} 
Assume that the density $p$ is $\mathcal C^2$. 
It holds that 
\[
D(h) \leq \gamma \ C_{\mathcal F} h
\]
where $C_{\mathcal F}$ is defined in ~\cref{def_Cf} and $\gamma>0$ is a constant depending on $M$, $\| p\|_{\infty}$,  $\| p'\|_{\infty}$ and $\| p''\|_{\infty}$, where $\| p^{(k)} \|_\infty$ denotes the supremum of the absolute value of the partial derivatives of $p$ in any normal coordinates system. 
\end{prop}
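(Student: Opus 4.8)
The plan is to reduce the claim to a uniform pointwise control of the bias of $\hat\Delta_h$. The inequality $D(h)\le 2\sup_{h'\le h}\|(p\Delta_{\mathrm P}-\mathbb E[\hat\Delta_{h'}])f\|_{2,M}$ stated just before the theorem already bounds both quantities defining $D(h)$ once the single-scale bias $\|(p\Delta_{\mathrm P}-\mathbb E[\hat\Delta_{h'}])f\|_{2,M}$ is understood, and since $h'\le h$ a bound of the form $\|(p\Delta_{\mathrm P}-\mathbb E[\hat\Delta_{h'}])f\|_{2,M}\le \gamma_0\,C_{\mathcal F}\,h'$ yields $D(h)\le 2\gamma_0\,C_{\mathcal F}\,h$ at once. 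As $M$ is compact, $\|g\|_{2,M}\le \mu(M)^{1/2}\|g\|_{\infty,M}$, so it suffices to prove $\sup_{y\in M}\big|\mathbb E[\hat\Delta_h f](y)-p(y)\Delta_{\mathrm P}f(y)\big|\le \gamma_1\,C_{\mathcal F}\,h$ with $\gamma_1$ depending only on $M$, $\|p\|_\infty$, $\|p'\|_\infty$ and $\|p''\|_\infty$.

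To estimate this supremum, I would write $\mathbb E[\hat\Delta_h f](y)=\frac{1}{(4\pi)^{d/2}h^{d+2}}\int_M e^{-\|y-x\|_m^2/(4h^2)}\big(f(x)-f(y)\big)p(x)\,d\mu(x)$ and split $M$ into a geodesic ball $B_M(y,r_h)$ with $r_h$ of order $h\log(h^{-1})^{1/2}$ and its complement. Condition~\eqref{CondhmaxRayInj} together with Lemma~\ref{lem_2.2} keeps $r_h$ below the injectivity radius and the reach of $M$, so normal coordinates are available on the ball; on the complement the ambient distance $\|y-x\|_m$ is bounded below by a positive geometric constant, so the Gaussian factor is super-polynomially small in $h^{-1}$ and the far contribution is $o(h)$ uniformly in $y$. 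On $B_M(y,r_h)$ I would pass to normal coordinates $x=\exp_y(u)$, $u\in\mathbb R^d$, picking up the volume density $J_y(u)=1+O(\|u\|_d^2)$ and the relation $\|y-x\|_m^2=\|u\|_d^2+O(\|u\|_d^4)$ (the order $\|u\|^2$ term of the exponential map is normal to $M$, hence orthogonal to $u$), both with remainders uniform over $y$ by compactness; then I would Taylor-expand $f$ to second order with third-order Lagrange remainder $|R_f(u)|\le\tfrac16 C_{\mathcal F}\|u\|_d^3$ (using $f\in\mathcal C^3$ and $\|f^{(3)}\|_\infty\le C_{\mathcal F}$) and $p$ to first order with second-order remainder $|R_p(u)|\le\tfrac12\|p''\|_\infty\|u\|_d^2$ (using $p\in\mathcal C^2$).

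After the substitution $u=hv$ the prefactor becomes $\frac{1}{(4\pi)^{d/2}h^2}$ and the integrand is a polynomial in $v$ against $e^{-\|v\|_d^2/4}$ plus controlled corrections. The odd-degree monomials integrate to zero by symmetry of the Gaussian; the degree-two part, evaluated with $\frac{1}{(4\pi)^{d/2}}\int_{\mathbb R^d}v_iv_j\,e^{-\|v\|_d^2/4}\,dv=2\delta_{ij}$, produces the main term, which one identifies with $p(y)\Delta_{\mathrm P}f(y)$ through~\eqref{eqDL} (this is the computation underlying the pointwise convergence results of~\cite{belkin2005towards,gine2006empirical}). Each remaining contribution — the term $p(y)R_f(u)$, the term $\langle\nabla f(y),u\rangle R_p(u)$, the $O(\|u\|_d^2)$ volume-density correction, and the $O(\|u\|_d^4/h^2)$ correction coming from $e^{-(\|y-x\|_m^2-\|u\|_d^2)/(4h^2)}$ — is, after the $h^{-2}$ prefactor, bounded by a constant times $C_{\mathcal F}\,h$; the first-order rate in $h$ (rather than $h^2$) is forced by the cubic-in-$\|u\|$ terms $R_f$ and $\langle\nabla f(y),u\rangle R_p(u)$ and is sharp at this level of regularity. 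The Gaussian moments $\int_{\mathbb R^d}\|v\|_d^k e^{-\|v\|_d^2/4}\,dv$ are finite constants, and every geometric quantity entering the remainders (bounds on the second fundamental form and the curvature, injectivity radius, reach, $\mu(M)$) is finite and uniform since $M$ is compact; collecting these constants gives $\gamma_1$, hence the claim with $\gamma=2\mu(M)^{1/2}\gamma_1$.

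The main obstacle is the geometric bookkeeping rather than any isolated hard inequality: one has to produce the normal-coordinate expansions of the exponential map, the volume density and the ambient-versus-geodesic distance with remainders that are explicitly uniform over all base points $y\in M$, and one has to cleanly excise the part of $M$ lying outside the injectivity radius and reach — where normal coordinates fail — using only the Gaussian tail bound together with~\eqref{CondhmaxRayInj} and Lemma~\ref{lem_2.2}. Once these expansions are in place, the moment computations and the identification of the leading term with $p\,\Delta_{\mathrm P}f$ are routine.
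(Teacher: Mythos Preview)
Your approach matches the paper's: reduce to a uniform pointwise bound on $|p(y)\Delta_{\mathrm P}f(y)-\mathbb E[\hat\Delta_{h'}f](y)|$, localize in a ball of radius $\sim h'\log(h'^{-1})^{1/2}$, pass to normal coordinates via Lemma~\ref{lem_2.2}, Taylor-expand $f$ to third order and $p$ to second order, identify the quadratic contribution with $p\Delta_{\mathrm P}f$, and bound each remainder by $C_{\mathcal F}\,O(h')$; the paper carries this out by splitting the near integral into six explicit pieces $\mathrm I_1,\dots,\mathrm I_6$ and choosing $L=2\sqrt{d+3}$. One small correction: on the complement of your ball the ambient distance is only bounded below by a constant times $r_h$ (not by a fixed geometric constant), so the tail contribution is of order $h^{cL^2}$ rather than super-polynomially small---still $o(h)$ for $L$ large enough, which is all that is needed.
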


We consider the following grid of bandwidths:
\[
\mathcal H = \left\{ e^{-k}\ , \ \lceil \log\log(n)\rceil \leq k \leq \lfloor \log(n) \rfloor  \right\}.
\]
Note that this choice ensures that Condition~\eqref{CondhmaxRayInj} is always satisfied for $n$ large enough.
The previous results lead to the pointwise rate of convergence of the graph Laplacian selected by Lepski's method:
\begin{thm}\label{cor_c}
Assume that the density $p$ is $\mathcal C^2$. For any $f \in \mathcal F$, we have
\begin{equation}
\label{cor_a1}
\mathbb E \left[ \| (p\Delta_{\mathrm P}  - \hat \Delta_{\hat h}) f\|_{2,M}  \right] \lesssim n^{-\frac{1}{d+4}}. 
\end{equation}
\end{thm}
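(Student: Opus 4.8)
\emph{Proof plan.} I would combine the oracle inequality of Theorem~\ref{prop_oineq} with the bias control of Proposition~\ref{prop_Dh} and the explicit variance $V(h)$ of Proposition~\ref{eq_v}, optimise the resulting bound over the grid $\mathcal H$, and finally convert the high-probability statement into a bound in expectation. For the first step, note that every $h\in\mathcal H$ satisfies $h\le e^{-1}<1$ for $n$ large, so the quantity $\alpha_d(h)$ in~\eqref{ad} is at most a constant depending only on $d$, $M$ and the geometric constants, whence $V(h)\le c_1\,C_{\mathcal F}^2/(n\,h^{d+2})$ for a constant $c_1=c_1(d,M)$. Combining this with $D(h)\le\gamma\,C_{\mathcal F}\,h$ from Proposition~\ref{prop_Dh} (applicable since $p\in\mathcal C^2$ and $f\in\mathcal F$), Theorem~\ref{prop_oineq} provides an event $\Omega_n$ with $\mathbb P(\Omega_n)\ge 1-2\sum_{h\in\mathcal H}\delta(h)$ on which, for every $h\in\mathcal H$,
\[
\|(p\Delta_{\mathrm P}-\hat\Delta_{\hat h})f\|_{2,M}\ \le\ 3\gamma\,C_{\mathcal F}\,h\ +\ (1+\sqrt2)\sqrt{b\,c_1}\;\frac{C_{\mathcal F}}{\sqrt{n\,h^{d+2}}}.
\]

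\emph{Optimising over $\mathcal H$.} Viewed as a function of $h>0$, the right-hand side has the form $h+1/\sqrt{n\,h^{d+2}}$ up to constants, and is minimised, up to a universal factor, at $h\asymp n^{-1/(d+4)}$, where it has order $n^{-1/(d+4)}$. I would then check that the grid contains a bandwidth of that order: letting $k^\star$ be the nearest integer to $\tfrac{1}{d+4}\log n$, one has $\lceil\log\log n\rceil\le k^\star\le\lfloor\log n\rfloor$ once $n$ is large enough (because $\log\log n=o(\log n)$ and $\tfrac{d+3}{d+4}\log n\ge\tfrac32$), so $h^\star:=e^{-k^\star}\in\mathcal H$ with $h^\star\asymp n^{-1/(d+4)}$. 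Plugging $h=h^\star$ into the last display yields a constant $C_2$ (depending on $d$, $M$, $\|p\|_\infty$ and its derivatives, $C_{\mathcal F}$, $a$, $b$) such that, on $\Omega_n$, $\|(p\Delta_{\mathrm P}-\hat\Delta_{\hat h})f\|_{2,M}\le C_2\,n^{-1/(d+4)}$.

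\emph{From high probability to expectation.} On the complement $\Omega_n^c$ I would use a crude deterministic bound: since $K\le(4\pi)^{-d/2}$ and $|f(X_i)-f(y)|\le 2C_{\mathcal F}$ for $f\in\mathcal F$, one gets $\|\hat\Delta_h f\|_{2,M}\le 2C_{\mathcal F}\,\mu(M)^{1/2}(4\pi)^{-d/2}\,h^{-(d+2)}$ for every $h$; as the smallest element of $\mathcal H$ is $e^{-\lfloor\log n\rfloor}\ge 1/n$ and $\|p\Delta_{\mathrm P}f\|_{2,M}$ is bounded by a constant (because $p,f\in\mathcal C^2$ and $M$ is compact), this gives $\|(p\Delta_{\mathrm P}-\hat\Delta_{\hat h})f\|_{2,M}\le C_3\,n^{d+2}$ everywhere, for $n$ large. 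Splitting the expectation according to whether $\Omega_n$ occurs and using the previous step on $\Omega_n$,
\[
\mathbb E\big[\|(p\Delta_{\mathrm P}-\hat\Delta_{\hat h})f\|_{2,M}\big]\ \le\ C_2\,n^{-1/(d+4)}\ +\ 2C_3\,n^{d+2}\sum_{h\in\mathcal H}\delta(h),
\]
and it remains to prove the second term is negligible. For that I would bound $\delta(h)$ by $|\mathcal H|\le\log n$ times $\max\{\exp(-c\sqrt n),\exp(-\tfrac{\epsilon^2}{3}\gamma_d(h'))\}$ for an appropriate $h'\le h$; since $\gamma_d(h)\sim c_0\,h^{-d}$ as $h\to0$ and every $h'\in\mathcal H$ satisfies $h'\le e^{-\lceil\log\log n\rceil}\le(\log n)^{-1}$, one gets $\gamma_d(h')\ge\tfrac{c_0}{2}(\log n)^d$ uniformly on $\mathcal H$, hence $\sum_{h\in\mathcal H}\delta(h)\le(\log n)^2\max\{\exp(-c\sqrt n),\exp(-\tfrac{c_0\epsilon^2}{6}(\log n)^d)\}$.

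\emph{Main obstacle.} The genuinely delicate point is this last estimate: one must ensure $n^{d+2}\sum_{h\in\mathcal H}\delta(h)\to0$. For $d\ge 2$ this is automatic because $\exp(-c'(\log n)^d)$ decays faster than any negative power of $n$. For $d=1$, however, $\gamma_d$ grows only like $\log n$ at the coarsest scale present in $\mathcal H$, so $\exp(-\tfrac{c_0\epsilon^2}{6}\log n)=n^{-c_0\epsilon^2/6}$ is merely polynomially small, and one must take the tuning constant $a$ large enough that $c_0\epsilon^2/6>d+2+\tfrac{1}{d+4}$ (recall $\epsilon=\sqrt a/2-1$); equivalently, the coarsest bandwidth in $\mathcal H$ could be shrunk. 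Granting this, the remainder term is $o(n^{-1/(d+4)})$ and~\eqref{cor_a1} follows by combining the two displays above.
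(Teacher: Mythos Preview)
Your approach is essentially the paper's: split the expectation according to the oracle event of Theorem~\ref{prop_oineq}, balance bias against variance at $h\asymp n^{-1/(d+4)}$, and on the complement multiply a crude deterministic bound by the failure probability $\sum_{h}\delta(h)$. The only cosmetic differences are that the paper first bounds the \emph{squared} norm (its Lemma just before the final argument) and, for the deterministic bound on $\|\hat\Delta_h f\|_{2,M}$, invokes the estimate $\|\hat\Delta_h f\|_{2,M}^2\le\mathcal I_2(h)$ from Lemma~\ref{tec_lem1} rather than your cruder $K\le(4\pi)^{-d/2}$ bound; both give the same order $h_{\min}^{-(d+2)}\sim n^{d+2}$, so nothing is gained or lost.

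Your ``main obstacle'' paragraph is a genuine observation that the paper glosses over. The paper simply writes $\sum_h\delta(h)\le|\mathcal H|^2\exp(-c/h_{\max}^d)$ and declares the remainder negligible; for $d=1$ and $h_{\max}\sim(\log n)^{-1}$ this is only a power of $n$, and beating the $n^{d+2}$ growth does require $a$ (hence $\epsilon=\sqrt a/2-1$) large enough, exactly as you say. So your proposal is at least as careful as the paper's own argument on this point.
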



\section{Sketch of the proof of ~\cref{prop_oineq}}\label{pf_mthm}
We observe that the following inequality holds
\begin{equation}\label{prop_base}
\|(p\Delta_{\mathrm P} - \hat \Delta_{\hat h})f \|_{2,M} \leq 
 D(h) + \|(\mathbb E[\hat \Delta_h] - \hat \Delta_h)f\|_{2,M} +  \sqrt{2\left(  B(h) + b V(h) \right)} . 
 \end{equation}
Indeed, for $h \in \mathcal H$,
\begin{align*}
\| (p\Delta_{\mathrm P} - \hat \Delta_{\hat h})f \|_{2,M}
& \leq \| (p\Delta_{\mathrm P} -  \mathbb E[\hat \Delta_h]  )f \|_{2,M} +  \| (\mathbb E[\hat \Delta_h]  - \hat \Delta_{ h})f \|_{2,M}
+ \| (\hat \Delta_h - \hat \Delta_{\hat h} )f \|_{2,M} \\
& \leq D(h) + \| (\mathbb E[\hat \Delta_h]  - \hat \Delta_{ h} )f\|_{2,M} 
+  \| (\hat \Delta_h - \hat \Delta_{\hat h} )f \|_{2,M} .
\end{align*}

\noindent
By definition of $B(h)$, for any $h'\leq h$,
\[
\| (\hat \Delta_{h'} - \hat \Delta_{h})f \|_{2,M}^2 \leq B(h) + a V(h') 
\leq  B( \max\{ h, h'\}) + a V( \min\{h,h'\}), 
\]
so that, 
according to the definition of $\hat h$ in ~\cref{h_lepski} and recalling that $a\leq b$,
\[
\| (\hat \Delta_{\hat h} - \hat \Delta_{h})f \|_{2,M}^2 \leq 2 \left[   B(h) +a V(h) \right] \leq 2 \left[   B(h) + b V(h) \right]
\]
which proves ~\cref{prop_base}. 

\vskip2mm
\noindent
We are now going to bound the terms that appear in ~\cref{prop_base}. 
The bound for $D(h)$ is already given in ~\cref{prop_Dh}, so that in the following we focus on
$B(h)$ and 
$\|(\mathbb E[\hat \Delta_h] - \hat \Delta_h)f\|_{2,M}$. 
More precisely the bounds we present in the next two propositions are 
based on the following lemma from \cite{lacour2015minimal}.

\begin{lem}\label{lemma1}  
Let $X_1, \dots, X_n$be an i.i.d. sequence of variables. 
Let $\widetilde{\mathcal S}$ a countable set of functions and let $\eta(s) = \frac{1}{n} \sum_i \left[ g_s(X_i) - \mathbb E[g_s(X_i)]  \right]$ for any $s \in \widetilde{\mathcal S}$. 
Assume that there exist constants $ \theta$ and $v_g$ such that for any $s\in \widetilde{\mathcal S}$
\[
\| g_s\|_{\infty} \leq \theta \quad \text{and} \quad \mathrm{Var}[ g_s(X)]\leq v_g.
\]
Denote $ H =  \mathbb E[ \sup_{s \in \widetilde{\mathcal S}} \eta(s) ]$. Then for any $\epsilon>0$ and any $H'\geq H$
\[
\mathbb P\left[ \sup_{s \in \widetilde{\mathcal S}} \eta(s) \geq (1+\epsilon)H'  \right] 
\leq \max \left\{   \exp\left( -\frac{\epsilon^2 n H'^2}{6v_g} \right), \,  \exp\left( -\frac{\min\{\epsilon^2,\epsilon\}  n H'}{24\ \theta} \right)  \right\}.
\]
\end{lem}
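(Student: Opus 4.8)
The plan is to derive this concentration inequality from Bousquet's version of Talagrand's inequality for the supremum of a centered empirical process, following the strategy of Lacour and Massart~\cite{lacour2015minimal}. Since $\widetilde{\mathcal S}$ is countable there is no measurability obstruction (and one may in any case reduce to finite subfamilies by monotone convergence). First I would center: writing $\bar g_s = g_s - \mathbb E[g_s(X_1)]$, one has $\eta(s) = \frac1n\sum_{i=1}^n \bar g_s(X_i)$, $\|\bar g_s\|_\infty \le 2\theta$, and $\mathrm{Var}[\bar g_s(X_1)] = \mathrm{Var}[g_s(X_1)] \le v_g$. Set $Z = \sup_{s\in\widetilde{\mathcal S}} \sum_{i=1}^n \bar g_s(X_i) = n\sup_{s}\eta(s)$, so that $\mathbb E[Z] = nH$, and the event to be controlled is $\{\,\sup_s\eta(s)\ge (1+\epsilon)H'\,\} = \{\,Z \ge nH' + \epsilon nH'\,\}$.

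Next I would invoke Bousquet's inequality: for every $u>0$,
\[
\mathbb P\!\left[\, Z \ge \mathbb E[Z] + \sqrt{2\,v\,u} + \tfrac{2\theta}{3}\,u \,\right] \le e^{-u}, \qquad v = n\,v_g + 4\theta\,\mathbb E[Z].
\]
The key structural point is that the right-hand side is monotone nondecreasing in $\mathbb E[Z]$, so since $H\le H'$ we may replace $\mathbb E[Z] = nH$ by $nH'$ throughout: $v \le n\,v_g + 4\theta n H' =: \bar v$. Moreover, because $nH' \ge \mathbb E[Z]$ we have the inclusion $\{\,Z \ge nH' + \epsilon nH'\,\} \subseteq \{\,Z \ge \mathbb E[Z] + \epsilon nH'\,\}$. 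Hence it suffices, given the target deviation $\epsilon nH'$, to exhibit the largest $u$ for which $\sqrt{2\bar v u} + \frac{2\theta}{3}u \le \epsilon nH'$; for that $u$, Bousquet's bound gives $\mathbb P[\sup_s\eta(s)\ge (1+\epsilon)H'] \le e^{-u}$. This is exactly where the hypothesis $H'\ge H$ is used, and why the conclusion is phrased for an arbitrary upper bound $H'$ rather than for $\mathbb E[\sup_s\eta(s)]$ itself.

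Finally I would carry out the optimization by the standard splitting trick: impose $\sqrt{2\bar v u}\le \tfrac12\epsilon nH'$ and $\tfrac{2\theta}{3}u\le\tfrac12\epsilon nH'$ separately, which forces $u \le c_1\,\epsilon^2 n^2 H'^2/\bar v$ and $u \le c_2\,\epsilon n H'/\theta$; taking $u$ to be the minimum of the two constraints yields a bound of the form $\max\{e^{-c_1\epsilon^2 n^2H'^2/\bar v},\,e^{-c_2\epsilon nH'/\theta}\}$. Writing $\bar v = n(v_g + 4\theta H')$ and splitting the sum in the first exponent into a maximum via $\frac1{A+B}\ge\frac12\min\{1/A,1/B\}$ isolates the sub-Gaussian term in $v_g$ and absorbs the leftover piece into the sub-exponential term; the crossover between the two regimes is precisely what produces the factor $\min\{\epsilon^2,\epsilon\}$ in the second exponent. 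I expect the only genuine difficulty to be this bookkeeping — selecting the sharpest convenient form of the Talagrand–Bousquet inequality (Bousquet's, or the Klein–Rio refinement) and splitting the deviation and the weak-variance term so that the universal constants come out exactly as $6$ and $24$ rather than merely of the same order; the probabilistic content is entirely contained in the one application of the concentration inequality.
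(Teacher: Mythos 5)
The paper itself does not prove this lemma: it is imported verbatim from the cited reference \cite{lacour2015minimal}, so your task is really to reconstruct that reference's argument, and the skeleton you give is the right one and matches it. Centering, applying Bousquet's inequality to $Z=n\sup_{s}\eta(s)$ with the centered bound $2\theta$ and variance factor $v=nv_g+4\theta\,\mathbb E[Z]$, using $\mathbb E[Z]=nH\le nH'$ both to enlarge $v$ to $\bar v=n(v_g+4\theta H')$ and to pass from $\{Z\ge(1+\epsilon)nH'\}$ to $\{Z\ge\mathbb E[Z]+\epsilon nH'\}$, and finally taking $u$ to be the minimum of two admissible exponents so that $e^{-u}$ becomes the stated maximum: all of this is correct and is exactly where the hypothesis $H'\ge H$ enters.

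The genuine weak point is the final ``bookkeeping'' step, and it is not merely cosmetic: the recipe you describe (force $\sqrt{2\bar vu}\le\tfrac12\epsilon nH'$ and $\tfrac{2\theta}{3}u\le\tfrac12\epsilon nH'$ separately, then apply $\tfrac1{A+B}\ge\tfrac12\min\{1/A,1/B\}$ to $\bar v$) only yields exponents of order $\epsilon^2nH'^2/(16v_g)$ and $\epsilon^2nH'/(64\theta)$, i.e.\ a strictly weaker inequality than the one stated; and the other natural variant, bounding $\sqrt{2\bar vu}\le\sqrt{2nv_gu}+\sqrt{8\theta nH'u}$ and plugging in the claimed exponents, gives $(2/\sqrt3+1/36)\,\epsilon nH'>\epsilon nH'$, so it cannot be patched to reach $6$ and $24$ either. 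The step that actually delivers the stated constants is to keep the variance factor under a single square root: with $u^*=\min\bigl\{\epsilon^2nH'^2/(6v_g),\ \min\{\epsilon,\epsilon^2\}\,nH'/(24\theta)\bigr\}$, bound $2nv_gu^*$ by the first candidate and $8\theta nH'u^*$ by the second, so that $\sqrt{2\bar vu^*}\le nH'\sqrt{\epsilon^2/3+\min\{\epsilon,\epsilon^2\}/3}\le\sqrt{2/3}\,\epsilon nH'$ (since $\min\{\epsilon,\epsilon^2\}\le\epsilon^2$), while $\tfrac{2\theta}{3}u^*\le\tfrac{1}{36}\epsilon nH'$; as $\sqrt{2/3}+\tfrac1{36}<1$, Bousquet's inequality applied at $u^*$ gives precisely the claimed bound, the $\min\{\epsilon,\epsilon^2\}$ arising because the $4\theta H'$ part of the variance forces $u\lesssim\epsilon^2nH'/\theta$ while the linear term forces $u\lesssim\epsilon nH'/\theta$. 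With that correction your plan is complete; as written it only proves the lemma with worse universal constants.
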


\vskip2mm
\begin{prop}\label{main_prop}
Let $\epsilon = \frac{\sqrt a}{2}-1$.
Given $h \in \mathcal H$, define 
\[
\delta_1(h) = \sum_{h'\leq h} \max\left\{  \exp\left( -\frac{\min\{\epsilon^2,\epsilon\}  \sqrt n }{24} \right), \,  \exp\left( -\frac{\epsilon^2}{3}\gamma_d(h') \right)  \right\}.
\]
With probability at least $1-\delta_1(h)$
\[
B(h) \leq 2 D(h)^2.
\]
\end{prop}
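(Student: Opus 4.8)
The plan is to bound $B(h)$ by controlling, uniformly over $h' \le h$, the difference $\|(\hat\Delta_{h'}-\hat\Delta_h)f\|_{2,M}$ in terms of its deterministic counterpart plus a fluctuation term that Lemma~\ref{lemma1} controls. Concretely, I would write
\[
(\hat\Delta_{h'}-\hat\Delta_h)f = \big(\mathbb E[\hat\Delta_{h'}]-\mathbb E[\hat\Delta_h]\big)f + \big(\hat\Delta_{h'}-\mathbb E[\hat\Delta_{h'}]\big)f - \big(\hat\Delta_h-\mathbb E[\hat\Delta_h]\big)f,
\]
so that by the triangle inequality and the definition of $D(h)$ (whose second term dominates $\|(\mathbb E[\hat\Delta_{h'}]-\mathbb E[\hat\Delta_h])f\|_{2,M}$), it suffices to show that with probability at least $1-\delta_1(h)$ the centered fluctuation $\|(\hat\Delta_{h''}-\mathbb E[\hat\Delta_{h''}])f\|_{2,M}^2$ is, simultaneously for $h''\in\{h,h'\}$, bounded by something like $\tfrac{a}{4}V(h'')$ (up to the constants hidden in $\epsilon=\sqrt a/2-1$); combining these and using $(x+y+z)^2 \le$ a suitable multiple of $x^2+y^2+z^2$ would then yield $\|(\hat\Delta_{h'}-\hat\Delta_h)f\|_{2,M}^2 - aV(h') \le 2D(h)^2$ after taking the positive part and the maximum over $h'$.

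The core step is therefore a concentration bound: for each fixed $h$, show
\[
\mathbb P\Big[\|(\hat\Delta_h-\mathbb E[\hat\Delta_h])f\|_{2,M} \ge \tfrac{\sqrt a}{2}\sqrt{V(h)}\Big] \le \max\Big\{\exp\big(-\tfrac{\min\{\epsilon^2,\epsilon\}\sqrt n}{24}\big),\ \exp\big(-\tfrac{\epsilon^2}{3}\gamma_d(h)\big)\Big\}.
\]
To do this I would realize $\|(\hat\Delta_h-\mathbb E[\hat\Delta_h])f\|_{2,M}$ as a supremum of an empirical process: writing $\|g\|_{2,M} = \sup_{\|\phi\|_{2,M}\le 1}\langle g,\phi\rangle_{2,M}$ and discretizing the unit ball of $L^2(M)$ by a countable dense family $\widetilde{\mathcal S}$, one has $\|(\hat\Delta_h-\mathbb E[\hat\Delta_h])f\|_{2,M} = \sup_{\phi}\eta(\phi)$ with $\eta(\phi)=\tfrac1n\sum_i[g_\phi(X_i)-\mathbb E g_\phi(X_i)]$ and
\[
g_\phi(x) = \frac{1}{h^{d+2}}\int_M K\!\Big(\tfrac{y-x}{h}\Big)\big(f(x)-f(y)\big)\phi(y)\,d\mu(y).
\]
Then I need: (i) a sup-norm bound $\|g_\phi\|_\infty \le \theta$ with $\theta \asymp h^{-d-1}(\omega_d\|K_d\|_{1,d}+\beta_d(h))$; (ii) a variance bound $\mathrm{Var}[g_\phi(X)] \le v_g$ with $v_g \asymp \tfrac{\|p\|_\infty}{h^{d+2}}(\omega_d\|K_d\|_{2,d}^2+\alpha_d(h))$; and (iii) an expectation bound $H = \mathbb E[\sup_\phi \eta(\phi)] \le \sqrt{V(h)}$, which follows from Jensen since $\mathbb E[(\sup_\phi\eta(\phi))^2] = \mathbb E[\|(\hat\Delta_h-\mathbb E[\hat\Delta_h])f\|_{2,M}^2] \le V(h)$ by Proposition~\ref{eq_v}. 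Feeding $\theta$, $v_g$, and $H' = \tfrac{\sqrt a}{2}\sqrt{V(h)} \ge (1+\epsilon)H$ into Lemma~\ref{lemma1}, the term $\tfrac{\epsilon^2 nH'^2}{6v_g}$ becomes $\gtrsim \tfrac{\epsilon^2}{3}\gamma_d(h)$ (using $V(h)\ge \tfrac{2C_{\mathcal F}^2}{nh^{d+2}}(\omega_d\|K_d\|_{2,d}^2+\alpha_d(h))$ and the form of $\gamma_d$), and $\tfrac{\min\{\epsilon^2,\epsilon\}nH'}{24\theta}$ becomes $\gtrsim \tfrac{\min\{\epsilon^2,\epsilon\}\sqrt n}{24}$ after noting $nH'/\theta \gtrsim \sqrt{nh^{d+2}V(h)}\cdot(\ldots) \gtrsim \sqrt n$ given the grid $\mathcal H$; a union bound over the (at most $\log n$) values $h'\le h$ gives $\delta_1(h)$.

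The main obstacle is getting the constants in the sup-norm and variance estimates of $g_\phi$ to line up exactly with $\beta_d(h)$, $\alpha_d(h)$, and hence $\gamma_d(h)$: this requires carefully passing from the ambient Gaussian kernel $K$ on $\R^m$ to the intrinsic kernel $K_d$ on $\R^d$ via normal coordinates, using Lemma~\ref{lem_2.2} to absorb the curvature distortion of the volume element and the distance function into the geometric constants $C, C_1$, and controlling the tail of the Gaussian outside the injectivity-radius ball --- which is precisely where Condition~\eqref{CondhmaxRayInj} is used so that the tail contribution is negligible. The rest (the triangle-inequality bookkeeping, the elementary inequality $(x+y+z)^2 \le 2x^2 + 4y^2 + 4z^2$ or similar, and the union bound) is routine. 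I would also double-check that replacing $\epsilon^2$ by $\epsilon^2/3$ in the exponent (rather than $\epsilon^2/6$) absorbs a factor-$2$ slack coming from splitting the $h$ and $h'$ fluctuations and from the relation between $v_g$ and $V(h)$.
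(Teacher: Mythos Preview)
Your high-level plan (dualize the $L^2$-norm, apply Lemma~\ref{lemma1}, union bound over $h'\le h$) is the paper's plan as well, but two things are off.

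First, a structural difference that is harmless but worth noting: the paper does \emph{not} split the fluctuation into two single-bandwidth pieces. It treats
\[
\|(\hat\Delta_{h'}-\Delta_{h'}+\Delta_h-\hat\Delta_h)f\|_{2,M}=\sup_{\|t\|_{2,M}\le 1}\eta(f,t),\qquad
g_{t,f}(x)=\int t(y)\,[H_{h'}-H_h](y-x)\,(f(x)-f(y))\,d\mu(y),
\]
as a \emph{single} empirical process indexed by $t$, with the difference kernel $H_{h'}-H_h$. This avoids your three-term inequality and lets all bounds be expressed in terms of the smaller bandwidth $h'$ directly.

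Second, and this is the real gap: your claimed $\theta$ and $v_g$ are the wrong way round, and with your $v_g$ you cannot recover $\gamma_d(h)$ in the exponent. Since $\phi$ ranges over the $L^2$ unit ball, the only available uniform bound on $g_\phi$ is Cauchy--Schwarz, which gives
\[
\|g_\phi\|_\infty\le \mathcal I_2(h)^{1/2}\quad\text{with}\quad \mathcal I_2(h)=\frac{2C_{\mathcal F}^2}{h^{d+2}}\big[\omega_d\|K_d\|_{2,d}^2+\alpha_d(h)\big],
\]
so $\theta\asymp h^{-(d+2)/2}$, not $h^{-d-1}(\omega_d\|K_d\|_{1,d}+\beta_d(h))$. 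More importantly, your crude variance bound $v_g\asymp \|p\|_\infty\,\mathcal I_2(h)$ yields $nH'^2/v_g\asymp nV(h)/(\|p\|_\infty\mathcal I_2(h))=1/\|p\|_\infty$, a constant independent of $h$; plugging this into Lemma~\ref{lemma1} gives $\exp(-c\epsilon^2)$, not $\exp(-\tfrac{\epsilon^2}{3}\gamma_d(h))$ with $\gamma_d(h)\sim h^{-d}$. The paper obtains the sharp variance bound
\[
\mathrm{Var}[g_{t,f}(X)]\le \|p\|_\infty\,\mathcal I_1(h)^2,\qquad \mathcal I_1(h)=\frac{C_{\mathcal F}}{h}\big[\omega_d\|K_d\|_{1,d}+\beta_d(h)\big],
\]
via a second Cauchy--Schwarz applied \emph{inside} $\mathbb E[g_{t,f}(X)^2]$: one splits off $t(y)^2$ against one copy of the kernel, bounds the remaining copy by $\mathcal I_1$, and then uses $\|t\|_{2,M}\le 1$ together with the density $p$. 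It is precisely the ratio $\mathcal I_2(h)/\mathcal I_1(h)^2$ that produces $\gamma_d(h)$. Without this step your argument does not deliver the probability $1-\delta_1(h)$ as stated.
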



\begin{prop}\label{main_p2}
Let $\tilde \epsilon = \sqrt a -1$. 
Given $h \in \mathcal H$, define
\[
\delta_2(h) = \max\left\{  \exp\left( -\frac{\min\{\tilde \epsilon^2,\tilde \epsilon\}   \sqrt n }{24} \right), \,  \exp\left( -\frac{\tilde \epsilon^2}{3}\gamma_d(h) \right) \right\}.
\]
With probability at least $1-\delta_2(h) $
\[
\| (\mathbb E[\hat \Delta_h] - \hat \Delta_{ h})f \|_{2,M}  \leq \sqrt{a V(h)}.
\]
\end{prop}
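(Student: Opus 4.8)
The plan is to recognize $\|(\mathbb E[\hat\Delta_h]-\hat\Delta_h)f\|_{2,M}$ as the supremum of a centered empirical process and then apply \cref{lemma1}. Since $L^2(M,\mu)$ is separable, fix a countable dense subset $\widetilde{\mathcal S}$ of its closed unit ball and, for $\phi\in\widetilde{\mathcal S}$, set
\[
\psi_x(y)=\frac{1}{h^{d+2}}K\!\left(\frac{y-x}{h}\right)\bigl[f(x)-f(y)\bigr],\qquad
g_\phi(x)=\int_M\psi_x(y)\,\phi(y)\,d\mu(y)=\langle\psi_x,\phi\rangle_{2,M}.
\]
By Fubini, $\langle\hat\Delta_hf,\phi\rangle_{2,M}=\tfrac1n\sum_{i=1}^n g_\phi(X_i)$, so in the notation of \cref{lemma1} we have $\eta(\phi)=\langle(\hat\Delta_h-\mathbb E[\hat\Delta_h])f,\phi\rangle_{2,M}$, and since $\widetilde{\mathcal S}$ is dense in the unit ball,
\[
\sup_{\phi\in\widetilde{\mathcal S}}\eta(\phi)=\|(\hat\Delta_h-\mathbb E[\hat\Delta_h])f\|_{2,M}=\|(\mathbb E[\hat\Delta_h]-\hat\Delta_h)f\|_{2,M}.
\]

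Next I would verify the two hypotheses of \cref{lemma1} with
\[
\theta=\sqrt{nV(h)},\qquad v_g=\frac{C_{\mathcal F}^2\,\|p\|_\infty}{h^2}\bigl(\omega_d\|K_d\|_{1,d}+\beta_d(h)\bigr)^2,
\]
i.e.\ that $\|g_\phi\|_\infty^2\le nV(h)$ and $\mathrm{Var}[g_\phi(X)]\le v_g$ for every $\phi\in\widetilde{\mathcal S}$. For the sup-norm bound, Cauchy--Schwarz gives $|g_\phi(x)|\le\|\psi_x\|_{2,M}$, and $\sup_x\|\psi_x\|_{2,M}^2=\sup_x\int_M h^{-2(d+2)}K((y-x)/h)^2[f(x)-f(y)]^2\,d\mu(y)$ is estimated by the very computation behind \cref{eq_v}: one passes to normal coordinates at $x$ (so $y=\exp_x(hu)$, $d\mu(y)=h^dJ(hu)\,du$), uses the third-order Taylor expansion of $f$ together with $f\in\mathcal F$, and absorbs the Jacobian $J$ and the discrepancy between $\|y-x\|_m$ and the geodesic distance via \cref{lem_2.2}; this yields the bound $\tfrac{2C_{\mathcal F}^2}{h^{d+2}}\bigl(\omega_d\|K_d\|_{2,d}^2+\alpha_d(h)\bigr)=nV(h)$. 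For the variance, write $\mathrm{Var}[g_\phi(X)]\le\mathbb E[g_\phi(X)^2]\le\|p\|_\infty\int_M g_\phi(x)^2\,d\mu(x)=\|p\|_\infty\|T_h\phi\|_{2,M}^2$, where $T_h$ is the integral operator with kernel $(x,y)\mapsto\psi_x(y)$; since this kernel is antisymmetric, the two Schur quantities coincide, and the same normal-coordinate estimate gives $\|T_h\|_{\mathrm{op}}\le\sup_x\int_M|\psi_x(y)|\,d\mu(y)\le\frac{C_{\mathcal F}}{h}\bigl(\omega_d\|K_d\|_{1,d}+\beta_d(h)\bigr)$, whence $\mathrm{Var}[g_\phi(X)]\le\|p\|_\infty\|T_h\|_{\mathrm{op}}^2\le v_g$.

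Now set $H'=\sqrt{V(h)}$. By Jensen's inequality and \cref{eq_v},
\[
H:=\mathbb E\Bigl[\sup_{\phi\in\widetilde{\mathcal S}}\eta(\phi)\Bigr]=\mathbb E\|(\mathbb E[\hat\Delta_h]-\hat\Delta_h)f\|_{2,M}\le\Bigl(\mathbb E\|(\mathbb E[\hat\Delta_h]-\hat\Delta_h)f\|_{2,M}^2\Bigr)^{1/2}\le\sqrt{V(h)}=H',
\]
so $H'$ is admissible in \cref{lemma1}. Taking $\epsilon=\tilde\epsilon=\sqrt a-1>0$ one has $(1+\tilde\epsilon)H'=\sqrt{aV(h)}$, so \cref{lemma1} gives
\[
\mathbb P\bigl[\|(\mathbb E[\hat\Delta_h]-\hat\Delta_h)f\|_{2,M}\ge\sqrt{aV(h)}\bigr]\le\max\left\{\exp\!\left(-\frac{\tilde\epsilon^2\,nV(h)}{6v_g}\right),\ \exp\!\left(-\frac{\min\{\tilde\epsilon^2,\tilde\epsilon\}\,nH'}{24\,\theta}\right)\right\}.
\]
Finally, plugging in the chosen constants: $\theta=\sqrt{nV(h)}=\sqrt n\,H'$ makes the second exponent equal to $\tfrac{\min\{\tilde\epsilon^2,\tilde\epsilon\}\sqrt n}{24}$, and $\tfrac{nV(h)}{2v_g}=\gamma_d(h)$ makes the first exponent equal to $\tfrac{\tilde\epsilon^2}{3}\gamma_d(h)$; hence the right-hand side is exactly $\delta_2(h)$, which proves the proposition.

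The step I expect to be the real obstacle is the second one: producing the explicit bounds $\|g_\phi\|_\infty^2\le nV(h)$ and $\mathrm{Var}[g_\phi(X)]\le v_g$ with \emph{precisely} the constants for which $\theta=\sqrt{nV(h)}$ and $nV(h)/(2v_g)=\gamma_d(h)$. This amounts to re-running, for the sup-norm and for the Schur quantity, the normal-coordinate Taylor computations already carried out for \cref{eq_v}, and carefully tracking all the geometric corrections furnished by \cref{lem_2.2} --- which is exactly where $\alpha_d(h)$, $\beta_d(h)$ and the constants $C,C_1$ originate. By comparison, the empirical-process reduction of the first paragraph, the bound $H\le H'$, and the invocation of \cref{lemma1} are routine bookkeeping.
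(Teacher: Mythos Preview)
Your proposal is correct and follows essentially the same route as the paper: express $\|(\mathbb E[\hat\Delta_h]-\hat\Delta_h)f\|_{2,M}$ as a supremum of a centered empirical process indexed by the unit ball of $L^2(M)$, bound $\|g_\phi\|_\infty$, $\mathrm{Var}[g_\phi(X)]$ and $H$ by the quantities $\mathcal I_2(h)^{1/2}$, $\|p\|_\infty\mathcal I_1(h)^2$ and $\sqrt{V(h)}$, and apply \cref{lemma1} with $\tilde\epsilon=\sqrt a-1$. The only cosmetic difference is your use of the Schur test (via the antisymmetry of the kernel) for the variance bound, whereas the paper splits the square by Cauchy--Schwarz as $\int t^2|\psi_x|\cdot\int|\psi_x|$; both lead to the same $\|p\|_\infty\mathcal I_1(h)^2$, and the ``real obstacle'' you flag is exactly the content of the paper's technical \cref{tec_lem1}.
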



\noindent
Combining the above propositions with ~\cref{prop_base}, we get that,  
for any $h\in \mathcal H$, with probability at least $1-(\delta_1(h) + \delta_2(h) )$,
\begin{align*}
\| (p\Delta_{\mathrm P} - \hat \Delta_{\hat h}) f\|_{2,M}
& \leq D(h) + \sqrt{a V(h)} + \sqrt{4 D(h)^2 + 2b V(h)}\\
& \leq 3 D(h) + (1+\sqrt2) \sqrt{b V(h)}
\end{align*}
where we have used the fact that $a\leq b$.
Taking a union bound on $h \in \mathcal H$ we conclude the proof.

\section{Numerical illustration} \label{sec:exp}

In this section we illustrate the results of the previous section on a simple example. In ~\cref{subsec:practical}, we describe a practical procedure when the data set $\mathbb{X}$ is sampled according to the uniform measure on $M$. A  numerical illustration us given in Section~\ref{subsec:sphere} when $M$ is the unit $2$-dimensional sphere in $\mathbb{R}^3$. 

\subsection{Practical application of the Lepksi's method} \label{subsec:practical}

Lepski's method presented in Section~\ref{sec:Lep} can not be directly applied in practice for two reasons. First, we can not compute the $L^2$-norm  $\| \, \|_{2,M}$ on $M$, the manifold $M$ being unknown. Second, the variance terms involved in Lepski's method are not completely explicit. 

Regarding the first issue, we can approximate $\| \, \|_{2,M}$  by splitting the data into two samples: an estimation sample  $\mathbb X_1$  for computing the estimators and a validation sample $\mathbb X_2$ for evaluating this norm. More precisely, given two estimators $ \hat \Delta_h f $ and $ \hat \Delta_{h'} f $ computed using $\mathbb X_1$, the quantity $   \| (\hat \Delta_h - \hat \Delta_{h'}) f \|_{2,M}^2  /  \mu(M) $ is approximated by the averaged sum  $\frac 1{n_2} \sum_{x \in \mathbb X_2} | \hat \Delta_h f(x)  - \hat \Delta_{h'} f (x) |^2$, where $n_2$ is the number of points in $\mathbb X_2$. We use these approximations to evaluate the bias terms  $B(h)$ defined by \eqref{bh}. 

The second issue comes from the fact that the variance terms involved in Lepski's method  depend  on the metric properties of the manifold and on the sampling density, which are both unknown. Theses variance terms are thus only known up to a multiplicative constant. This situation contrasts with more standard frameworks for which a tight and explicit control on the variance terms can be proposed, as in  \citep{lepskii1992asymptotically,lepskii1993asymptotically,lepski1992problems}. To address this second issue, we follow the calibration strategy recently proposed in \cite{lacour2015minimal} (see also \cite{lacour2016estimator}). In practice we remove all the multiplicative constants from  $V(h)$: all these constants are passed into the terms a and b. This means that we rewrite Lepski's method as follows:
\begin{equation*} 
\hat h(a,b)  = \mathrm{arg}\min_{h\in \mathcal H} \left\{  B(h) + b \frac{1}{n h ^4} \right\}
\end{equation*}
where 
\[
 B(h) = \max_{h'\leq h, \, h' \in \mathcal H} \left[ \|( \hat \Delta_{h'} - \hat \Delta_h )f\|_{2,M}^2 - a \frac{1}{n {h'} ^4}  \right]_+ .
\]
We choose $a$ and $b$ according to the following heuristic: 
\begin{enumerate}
\item Take $b=a$ and consider the sequence of selected models: $\hat h(a,a)$,
\item Starting from large values of $a$, make $a$ decrease and find the location $a_0$ of the main {\it bandwidth jump} in the step function $a \mapsto \hat h(a,a)$,
\item Select the model $\hat h(a_0,2a_0)$.
\end{enumerate}
The justification of this calibration method is currently the subject  of mathematical studies (\cite{lacour2015minimal}). Note that a similar strategy called "slope heuristic" has been proposed for calibrating $\ell_0$ penalties in various settings by strong mathematical results, see for instance \cite{birge2007minimal,arlot2009data,baudry2012slope}.

\subsection{Illustration on the sphere} \label{subsec:sphere}

In this section we illustrate the complete method on a simple example with data points generated uniformly on the sphere $\mathbb S^2$ in $\R^3$. In this case, the weighted Laplace-Beltrami operator is equal to the (non weighted) Laplace-Beltrami operator on the sphere. 

We consider the function $f(x,y,z) = (x^2 + y ^2 + z )  \sin x  \cos x $.  
The restriction of this function on the sphere has the following representation in spherical coordinates:
$$ \tilde f(\theta,\phi) = (\sin ^2  \phi   + \cos \phi )  \sin( \sin \phi \cos \theta) \cos( \sin \phi \cos \theta) .$$
It is well known that the Laplace-Beltrami operator on the sphere satisfies (see  Section~3 in \cite{grigoryan2009heat}):
$$
\Delta_{\mathcal S^2}   u= \frac 1 {\sin^2 \phi} \frac{\partial^2    u }{\partial  \theta^2 } + \frac{1}{\sin \phi }  \frac{\partial  }{\partial  \phi } \left(  \sin \phi   \frac{\partial   u  }{\partial  \phi}  \right) 
$$
for any smooth polar function $u$. This allows us to derive an analytic expression of   $\Delta_{\mathcal S^2} \tilde f$. 

We sample $n_1= 10^6$ points on the sphere for computing the graph Laplacians and we use  $n=10^3$ points for approximating the norms $ \| (\hat \Delta_h - \hat \Delta_{h'}) \tilde f  \|_{2,M}^2$. We compute the graph Laplacians for bandwidths in a grid $\mathcal H$ between 0.001 and 0.8 (see \cref{fig:GraphLap}). The risk of each graph Laplacian is estimated by a standard Monte Carlo procedure (see \cref{fig:risk}). 
\begin{figure}[h]
\centering
	\includegraphics[scale=0.3]{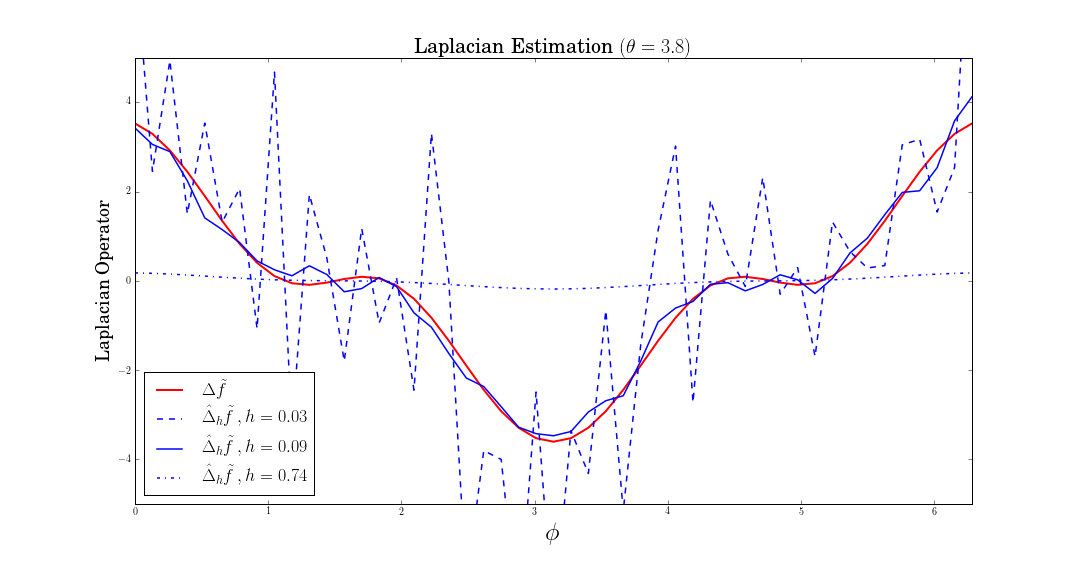}
\caption{Choosing $h$ is crucial for estimating $\Delta_{\mathcal S^2} \tilde f$:  small bandwidth overfits  $\Delta_{\mathcal S^2} \tilde f$ whereas  large bandwidth leads to almost constant  approximation functions of $\Delta_{\mathcal S^2} \tilde f$.}
\label{fig:GraphLap}
\end{figure}
\begin{figure}[h]
\centering
	\includegraphics[scale=0.3]{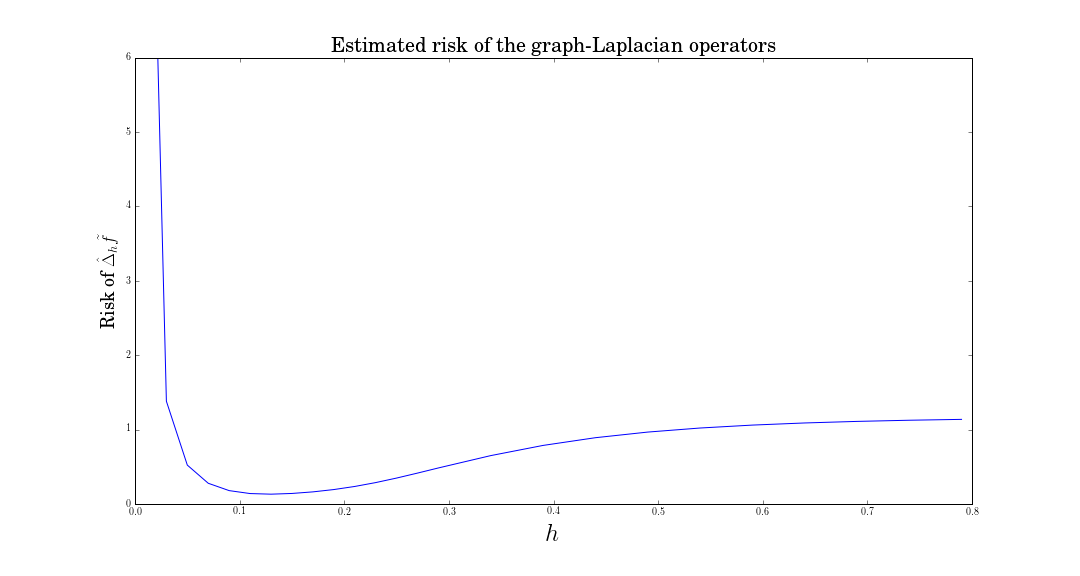}
\caption{Estimation of the risk of each graph Laplacian operator: the oracle Laplacian is for approximatively  $h=0.15$.}
\label{fig:risk}
\end{figure}


 
Figure~\ref{fig:jump} illustrates the calibration method. On this picture, the $x$-axis corresponds to the values of $a$ and the $y$-axis represents the bandwidths. The blue step function represents the function $a \mapsto \hat h(a,a)$. The red step function gives the model selected by the rule $a \mapsto \hat h(a,2 a)$. Following the heuristics given in Section~\ref{subsec:practical}, one could take for this example the value $a_0 \approx 3.5$ (location of the bandwidth  jump for the blue curve) which leads to select the model $\hat h(a_0,2a_0) \approx 0.2$ (red curve).

\begin{figure}[h]
\centering
	\includegraphics[scale=0.3]{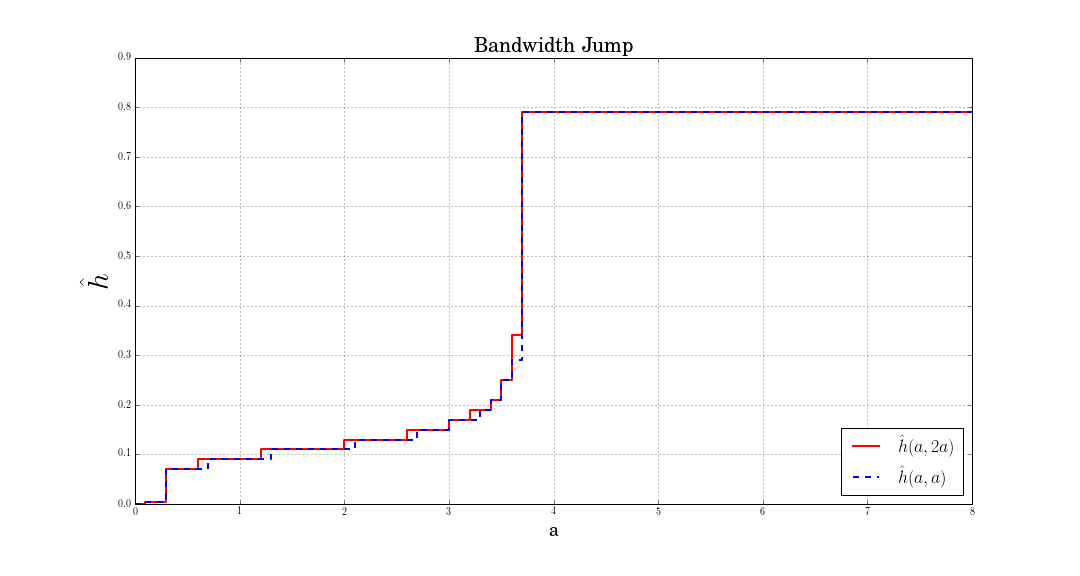}
\caption{Bandwidth jump heuristic: find the location of the jump (blue curve) and deduce the selected bandwidth with the red curve.}
\label{fig:jump}
\end{figure} 
 


\section{Discussion} \label{sec:disc}

This paper is a first attempt for a complete and well-founded data driven method for inferring Laplace-Beltrami operators from data points. Our results suggest various extensions and raised some questions of interest.  
%
%
For instance, other versions of the graph Laplacian have been studied in the literature (see  for instance \cite{hein2007graph,belkin2008towards}), for instance when data is not sampled uniformly. It would be relevant to propose a bandwidth selection method for these alternative estimators also.

From a practical point of view, as explained in ~\cref{sec:exp},  there is a gap between the theory we obtain in the paper and what can be done in practice. To fill this gap, a first objective is to prove an oracle inequality in the spirit of Theorem~\ref{prop_oineq} for a bias term defined in terms of the empirical norms computed in practice. A second objective is to propose mathematically well-founded heuristics for the calibration of the parameters $a$ and $b$.


Tuning bandwidths for the estimation of the spectrum of the Laplace-Beltrami operator is a difficult but important problem in data analysis. We are currently working on the adaptation of our results to the case of operator norms and spectrum estimation.


\section*{Appendix: the geometric constants $C$ and $C_1$} \label{sec:geomConst}

The following classical lemma (see, e.g. \cite{gine2006empirical}[Prop. 2.2 and Eq. 3.20]) relates the constants $C$ and $C_1$ introduced in Equations \eqref{def_Da} and \eqref{def_tDa} to the geometric structure of $M$. 

\begin{lem} \label{lem_2.2}
There exist constants $C, C_1 >0$ and a positive real number $r>0$ such that for any $x \in M$, and any $v \in T_x M$ such that $\| v \| \leq r$, 
\begin{equation}\label{eq_310}
\left|\sqrt{\mathrm{det}(g_{ij})}(v) - 1 \right| \leq  C_1 \| v\|_d^2 
\qquad \text{and}
\qquad \frac{1}{2} \| v\|_d^2 \leq  \| v\|_d^2 - C  \| v\|_d^4 
\leq \|\mathcal E_x(v)- x \|_m^2 \leq  \| v\|_d^2
\end{equation}
where $\mathcal E_x: T_xM \to M$ is the exponential map and $(g_{i,j})_{i,j} \in \{1,\cdots, d\}$ are the components of the metric tensor in any normal coordinate system around $x$. 
\end{lem}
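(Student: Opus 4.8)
\section*{Proof proposal for Lemma~\ref{lem_2.2}}

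The plan is to treat the two estimates in~\eqref{eq_310} separately, extracting the uniform constants from the compactness of $M$, and to argue that the various thresholds on $\|v\|_d$ that appear can all be taken independent of $x$. For the determinant bound I would work in a normal coordinate system centered at $x$. There $g_{ij}(0)=\delta_{ij}$ and $\partial_k g_{ij}(0)=0$ (the Christoffel symbols vanish at the origin of normal coordinates and they are linear in the first derivatives of $g$), so a second--order Taylor expansion gives $g_{ij}(v)=\delta_{ij}+O(\|v\|_d^2)$, the implied constant being controlled by $\sup_M\|\mathrm{Riem}\|$ together with a bound on its first derivatives, both finite since $M$ is smooth and compact. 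Taking determinants yields $\det(g_{ij})(v)=1+O(\|v\|_d^2)$ uniformly in $x$, and since $t\mapsto\sqrt t$ is Lipschitz on a neighbourhood of $1$, after shrinking $r$ so that $\det(g_{ij})$ stays bounded away from $0$ on $\{\|v\|_d\le r\}$ we obtain $\bigl|\sqrt{\det(g_{ij})}(v)-1\bigr|\le C_1\|v\|_d^2$.

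For the distance estimate, recall first that for $\|v\|_d$ below the injectivity radius the number $\|v\|_d$ equals the Riemannian length — hence, the metric being the one induced by $\R^m$, the $\R^m$--arclength — of the geodesic $\gamma(t):=\mathcal E_x\bigl(tv/\|v\|_d\bigr)$, $t\in[0,\ell]$, $\ell:=\|v\|_d$. Since a straight segment is the shortest path in $\R^m$, this already gives the upper bound $\|\mathcal E_x(v)-x\|_m\le\ell$. For the lower bound I would invoke Federer's estimate that the second fundamental form of $M$ has norm at most $1/\mathrm{reach}(M)=:\kappa$: the $\R^m$--acceleration $\gamma''$ of the unit--speed geodesic is normal to $T_\gamma M$ with $\|\gamma''(t)\|_m\le\kappa$. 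From $\|\gamma'(t)-\gamma'(0)\|_m\le\kappa t$ and $\langle\gamma''(t),\gamma'(t)\rangle_m=0$ one gets the refined bound $|\langle\gamma'(t)-\gamma'(0),\gamma'(0)\rangle_m|=|\langle\gamma''\text{-integral},\gamma'(0)-\gamma'(\cdot)\rangle_m|\le\kappa^2t^2/2$; integrating once more, $\langle\mathcal E_x(v)-x,\gamma'(0)\rangle_m\ge\ell-\kappa^2\ell^3/6$, whence by Cauchy--Schwarz $\|\mathcal E_x(v)-x\|_m\ge\ell-\kappa^2\ell^3/6$ and, squaring (valid for $\ell$ small), $\|\mathcal E_x(v)-x\|_m^2\ge\ell^2-\kappa^2\ell^4/3$. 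Thus $C=\kappa^2/3$ works. Finally I would shrink $r$ once more so that $Cr^2\le1/2$, which gives $\|v\|_d^2-C\|v\|_d^4\ge\tfrac12\|v\|_d^2$, and so that $r$ lies below the injectivity radius of $M$; the final $r$ is the minimum of the finitely many positive thresholds produced above, positive by compactness.

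The main obstacle is the lower bound: a naive double integration of $\|\gamma''\|_m\le\kappa$ only produces a correction of order $\|v\|_d^3$, whereas the statement asks for $O(\|v\|_d^4)$; obtaining it requires genuinely using that the ambient acceleration of a geodesic is orthogonal to its velocity, which improves the relevant inner product estimate by one order. The secondary, more routine, difficulty is checking that every constant ($C$, $C_1$, and the radius $r$) can be chosen uniformly over $x\in M$, which is exactly where compactness — boundedness of the curvature and its derivatives, and positivity of the reach and of the injectivity radius — is used.
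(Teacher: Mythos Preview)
Your argument is correct. The paper does not prove this lemma at all: it is stated in the appendix as a classical fact, with a reference to \cite{gine2006empirical} (Prop.~2.2 and Eq.~3.20), and the authors explicitly say that ``the proof of the lemma is beyond the scope of this paper,'' only remarking that $r$ and $C$ can be expressed in terms of the reach and the injectivity radius of $M$. So there is nothing in the paper to compare your proof against; what you have written is a self-contained substitute for the citation.

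Your handling of both estimates is standard and sound. For the determinant bound, using $g_{ij}(0)=\delta_{ij}$ and $\partial_k g_{ij}(0)=0$ in normal coordinates to get a second-order Taylor remainder is exactly the usual route. For the distance estimate, you correctly identify and resolve the one genuine subtlety: a crude integration of $\|\gamma''\|_m\le\kappa$ would only yield a cubic correction, and it is the orthogonality $\langle\gamma''(t),\gamma'(t)\rangle_m=0$ (a consequence of $\gamma$ being a Riemannian geodesic, so that its ambient acceleration is purely normal) that upgrades the bound on $\langle\gamma'(t)-\gamma'(0),\gamma'(0)\rangle_m$ by one order and produces the required $O(\|v\|_d^4)$ loss after squaring. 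Your explicit constant $C=\kappa^2/3$ with $\kappa=1/\mathrm{reach}(M)$ is consistent with the paper's remark that $C$ depends on the reach, and your choice of $r$ as the minimum of the injectivity radius and the thresholds forcing $Cr^2\le 1/2$ and $\det(g_{ij})$ bounded away from zero is exactly what is needed for uniformity in $x$.
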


Although the proof of the lemma is beyond the scope of this paper, notice that one can indeed give explicit bounds on $r$ and $C$ in terms of the reach and injectivity radius of the submanifold $M$. 

\subsubsection*{Acknowledgments} The authors are grateful to Pascal Massart for helpful discussions on Lepski's method and to Antonio Rieser for his careful reading and for pointing a technical bug in a preliminary version of this work. This work was supported by the ANR project TopData ANR-13-BS01-0008 and ERC Gudhi No. 339025

{\small
\bibliographystyle{alpha}
\bibliography{LB}
}

\newpage
\section*{Appendix: Proofs}

\noindent
For the sake of simplicity, we introduce the 
renormalized kernel
\[
H_h(y) = \frac{1}{h^{d+2}} K\left( {y}/{h}\right) = \frac{1}{h^{d+2} (4\pi)^{d/2}} e^{-\|y\|_m^2/4h^2}, \qquad y \in \mathbb R^m,
\] 
so that $\hat \Delta_h$ rewrites as 
\[
\hat \Delta_h f(y) =  \frac{1}{n}  \sum_{i=1}^n  H_h(y-X_i)  \left[ f(X_i) - f(y) \right]
\]
where we recall that  $X_1,\ldots,X_n$ is a finite point cloud (i.i.d.) sampled on $M$.
Note that the expectation $\Delta_h$ of $\hat \Delta_h$ satisfies
\[
\Delta_hf(y) = \mathbb E \hat \Delta_h f(y) = \int  H_h(y-x)  \left[ f(x) - f(y) \right] \, \mathrm d \mathrm P(x).
\]

\vskip2mm
\noindent
We present a technical result that is useful in the following and 
whose proof is postpone 
to ~\cref{pf_lem62}. 

\begin{lem}\label{tec_lem1}
Let $x \in M$ and $h\in \mathcal H$. 
According to the notation introduced in ~\cref{ref_sec} and in~\cref{res}
\begin{equation}
\label{tec_eq1}
\int_M \left| H_h(y-x) (f(x) - f(y)) \right| \, \mathrm d \mu(y) 
\leq \frac{C_{\mathcal F}}{h} \Big[ w_d \| K_d\|_{1,d} + \beta_d(h)\Big] =:\mathcal I_1(h) 
\end{equation}
Moreover
\begin{equation}
\label{tec_eq2}
\int_M \left| H_h(y-x) (f(x) - f(y)) \right|^2 \, \mathrm d \mu(y) 
\leq \frac{2C_{\mathcal F}^2}{h^{d+2}} \Big[ w_d \| K_d\|_{2,d}^2 + \alpha_d(h)\Big]=:\mathcal I_2(h).
\end{equation}
\end{lem}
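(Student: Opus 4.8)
This is a pair of calculus-type bounds on integrals of $H_h(y-x)(f(x)-f(y))$ over $M$, and I would prove them by transporting everything to the tangent space $T_xM$ via the exponential map $\mathcal E_x$ and using the estimates of Lemma~\ref{lem_2.2}. First I would fix $x\in M$ and parametrize $M$ near $x$ by normal coordinates $v\in T_xM\cong\mathbb R^d$; by the cut-off condition \eqref{CondhmaxRayInj} on $h$, the Gaussian factor $e^{-\|y-x\|_m^2/4h^2}$ is small enough that the contribution of the complement of the injectivity-radius ball is negligible (absorbed into the constants), so I only integrate over $\|v\|_d\le r$. On that ball, $\mathrm d\mu(y)=\sqrt{\det(g_{ij})}(v)\,\mathrm dv$, and Lemma~\ref{lem_2.2} gives $\sqrt{\det(g_{ij})}(v)\le 1+C_1\|v\|_d^2$ together with the sandwich $\tfrac12\|v\|_d^2\le\|\mathcal E_x(v)-x\|_m^2\le\|v\|_d^2$. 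The latter is what lets me replace the ambient Gaussian $e^{-\|\mathcal E_x(v)-x\|_m^2/4h^2}$ by $e^{-\|v\|_d^2/8h^2}$ from above for the lower-exponent/first-moment estimate, and by $e^{-\|v\|_d^2/4h^2}$ (using the lower sandwich bound the other way where needed) — matching the two different Gaussian weights appearing in $D_\alpha$ versus $\tilde D_\alpha$ in \eqref{def_Da}–\eqref{def_tDa}.

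The second ingredient is a Taylor expansion of $f$ in the normal coordinates: since $f\in\mathcal F$, writing $g(v)=f(\mathcal E_x(v))$ one has $g(0)=f(x)$ and $|g(v)-g(0)|\le C_{\mathcal F}\|v\|_d + \tfrac{C_{\mathcal F}}{2}\|v\|_d^2$ from the bounds on the first two derivatives (all partials bounded by $C_{\mathcal F}$ in normal coordinates). So $|f(x)-f(y)|\le C_{\mathcal F}(\|v\|_d+\tfrac12\|v\|_d^2)$ on the ball. Plugging this, the volume distortion bound, and the Gaussian bound into the integrand and changing variables $v=hu$ turns $\int H_h(y-x)|f(x)-f(y)|\,\mathrm d\mu(y)$ into $\tfrac{C_{\mathcal F}}{h}$ times an integral over $\mathbb R^d$ of $(\|u\|_d + \tfrac{h}{2}\|u\|_d^2)\cdot(1+C_1 h^2\|u\|_d^2)\cdot\tfrac{1}{(4\pi)^{d/2}}e^{-\|u\|_d^2/\text{(4 or 8)}}$; expanding the product and collecting the resulting Gaussian moments exactly reproduces the $\omega_d\|K_d\|_{1,d}$ leading term plus the correction $\beta_d(h)$, whose defining expression \eqref{bd} is precisely a polynomial in $h$ with coefficients $\tilde D_3,\tilde D_4$ and the $\mu(M)$-term (the latter being the catch-all for the far-away/complement contribution). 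The same computation squared — using $|f(x)-f(y)|^2\le 2C_{\mathcal F}^2(\|v\|_d^2+\tfrac14\|v\|_d^4)$ and $H_h^2$ carrying a factor $h^{-(d+2)}$ after rescaling — yields \eqref{tec_eq2} with $\alpha_d(h)$ from \eqref{ad}, the exponent bookkeeping explaining why $\alpha_d$ uses $D_4,D_6$ (the $e^{-\|u\|_d^2/4}$ Gaussian, since squaring the kernel doubles the exponent) while $\beta_d$ uses the tilded constants.

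\textbf{Main obstacle.} The routine part is the moment bookkeeping; the delicate part is making the reduction to the normal-coordinate ball rigorous and controlling the tail outside the injectivity-radius ball uniformly in $x$. One must check that on $\{\|v\|_d>r\}$ (or rather its image), $\|y-x\|_m$ is bounded below by a fixed positive constant so that $e^{-\|y-x\|_m^2/4h^2}\le e^{-c/4h^2}$, and then verify that $h^{-(d+2)}e^{-c/4h^2}\mu(M)$ is dominated — for all $h$ satisfying \eqref{CondhmaxRayInj} — by the explicitly written $\mu(M)$-terms inside $\alpha_d(h)$ and $\beta_d(h)$. This is exactly where condition \eqref{CondhmaxRayInj}, i.e. $2\sqrt{d+4}\,h\log(h^{-1})^{1/2}\le\rho(M)$, is used: it forces $h^{-(d+4)}e^{-c/4h^2}$ to stay bounded, so the tail is absorbed. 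Handling the two simultaneous uses of Lemma~\ref{lem_2.2}'s sandwich (upper Gaussian bound for $\mathcal I_2$ needing $\|\mathcal E_x(v)-x\|_m^2\ge\tfrac12\|v\|_d^2$, and for $\mathcal I_1$ the same with the weaker exponent $1/8$ after also using $\sqrt{\det g}\le 1+C_1\|v\|^2$) without double-counting constants is the one place where care is genuinely required; everything else is a direct, if lengthy, Gaussian integral computation.
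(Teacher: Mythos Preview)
Your outline---pass to normal coordinates via $\mathcal E_x$, bound $|f(\mathcal E_x(v))-f(x)|\le C_{\mathcal F}(\|v\|_d+\tfrac12\|v\|_d^2)$ from Taylor, control the volume element by $1+C_1\|v\|_d^2$, split into a near ball plus a Gaussian tail---is exactly the skeleton of the paper's proof. Two points of execution differ, and the first one prevents you from landing on the stated constants.

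\emph{Handling the ambient Gaussian.} You propose the crude bound $e^{-\|\mathcal E_x(v)-x\|_m^2/4h^2}\le e^{-\|v\|_d^2/8h^2}$ directly from the lower sandwich in Lemma~\ref{lem_2.2}. The paper does \emph{not} do this: it writes
\[
e^{-\|\mathcal E_x(v)-x\|_m^2/4h^2}(1+C_1\|v\|_d^2)\;=\;e^{-\|v\|_d^2/4h^2}+\mathcal R,
\]
with $\mathcal R\ge 0$ bounded via the elementary inequality $e^{-(a-b)}-e^{-a}\le b\,e^{-a/2}$ (Remark~\ref{nb_2}), and then bounds each leading moment $\|v\|_d^{q}e^{-\|v\|_d^2/4h^2}$ via $t^{q}e^{-t^2/s^2}\le \lambda\,e^{-t^2/2s^2}$ (Remark~\ref{nb_exp}). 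It is precisely this two-stage decomposition that produces $\omega_d=3\times 2^{d/2-1}$ (the factor $3/2$ from Remark~\ref{nb_exp} with $q=2$, the $2^{d/2}$ from rescaling $e^{-\|v\|_d^2/8h^2}$ back to $K_d$) and the specific $\tilde D_\alpha,D_\alpha$ appearing in $\beta_d(h),\alpha_d(h)$. Your direct bound yields a valid estimate of the same order in $h$, but with a different leading constant, so it does not prove the lemma \emph{as stated}.

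\emph{The tail.} The paper does not cut at the fixed injectivity radius and invoke \eqref{CondhmaxRayInj} to absorb the remainder. It cuts at the $h$-dependent ambient ball $\mathcal B=\{y:\|y-x\|_m<Lh\sqrt{\log(h^{-1})}\}$, so that on $\mathcal B^c$ the Gaussian contributes at most $h^{L^2/4-(d+2)}$, and then \emph{chooses} $L=2\sqrt{d+2}$ (resp.\ $L=\sqrt{2(d+4)}$ for $\mathcal I_2$) to make this tail exactly the $\mu(M)$ term sitting inside $\beta_d(h)$ (resp.\ $\alpha_d(h)$). Condition~\eqref{CondhmaxRayInj} is used only to ensure that this $\mathcal B$ lies within the domain where Lemma~\ref{lem_2.2} applies, not to bound the tail itself.
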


\vskip2mm
\noindent
We also observe the following facts. 

\begin{remark}\label{nb_exp}
Given $t\in \mathbb R$ and $s,q>0$,
\[
t^q e^{-t^2/s^2} \leq \lambda e^{-t^2/2s^2}
\]
where $\lambda= s^q q^{q/2} e^{-q/2}.$
\end{remark}

\begin{remark}\label{nb_2}
Given $a,b >0$,
\[
e^{-(a-b)} - e^{-a} \leq b e^{-a/2}
\] 
so that,  under the assumptions of ~\cref{lem_2.2}, given $s>0$,
\begin{align*}
e^{-\|\mathcal E_x(v) - x\|_m^2/s^2} - e^{-\|v\|_d^2/s^2} 
& \leq e^{-(\|v\|_d^2- C  \| v\|_d^4)/s^2} - e^{-\|v\|_d^2/s^2} \leq \frac{C  \| v\|_d^4}{s^2}e^{-\|v\|_d^2/2s^2}.
\end{align*}
\end{remark}

\subsection{Proof of ~\cref{eq_v}}\label{proof1}
In order to get the bound $V(h)$ for $ \mathbb E[ \| (\mathbb E[\hat \Delta_h] - \hat \Delta_h)f \|_{2,M}^2]$
we observe that, according to the definition of $\hat \Delta_h$ and using the fact that the sample is i.i.d., 
\[
\mathbb E[ \| (\mathbb E[\hat \Delta_h] - \hat \Delta_h)f \|_{2,M}^2]
 \leq \frac{1}{n} \mathbb E\left[ \int \left|H_h(y-X)(f(X)- f(y) \right|^2\, \mathrm d\mu(y) \right]
\leq  \frac{1}{n}\ \mathcal I_2(h)
\]
where the last line follows from ~\cref{tec_lem1}.

\subsection{Proof of ~\cref{prop_oineq}}
According to the sketch of the proof provided in ~\cref{pf_mthm}, we only need to prove 
~\cref{main_prop} and ~\cref{main_p2}.

\subsubsection{Proof of ~\cref{main_prop}}\label{proof3}
Recalling the definition of $B(h)$, since, 
for any $h'\leq h$, 
\begin{align*}
\| (\hat \Delta_{h'} - \hat \Delta_h)f\|_{2,M}^2
& \leq 2\bigg[  \| (\hat \Delta_{h'} -\Delta_{h'}+\Delta_h-  \hat \Delta_h)f \|_{2,M}^2 +
 \| ( \Delta_{h'}-\Delta_h)f\|_{2,M}^2 \bigg]\\
& \leq 2\bigg[ \| (\hat \Delta_{h'} -\Delta_{h'}+\Delta_h-  \hat \Delta_h)f \|_{2,M}^2 + D(h)^2 \bigg]
\end{align*}
we get
\[
B(h) \leq \max_{h'\leq h} \left[ 2\| (\hat \Delta_{h'} -\Delta_{h'}+\Delta_h-  \hat \Delta_h)f \|_{2,M}^2 + 2D(h)^2 - a V(h') \right]_+.
\]
Thus it is sufficient to prove that 
\[
2 \| (\hat \Delta_{h'} -\Delta_{h'}+\Delta_h-  \hat \Delta_h)f\|_{2,M}^2 \leq
a V(h').
\]
We can write
\begin{align*}
\| (\hat \Delta_{h'} -\Delta_{h'}+\Delta_h-  \hat \Delta_h)f\|_{2,M}
= \sup_{\substack{t \in B_{2,M}(1)}} \eta(f,t)
\end{align*}
where $B_{2,M}(1) = \left\{ t \in L^2(M) \ | \ \|t\|_{2,M}=1  \right\}$,
\[
\eta(f,t) =  \langle t, (\hat \Delta_{h'} -\Delta_{h'}+\Delta_h-  \hat \Delta_h) f \rangle = \frac{1}{n}\sum_{i=1}^n g_{t,f}(X_i) - \mathbb E[g_{t,f}(X)]
\]
and
\[
g_{t,f}(x) =   \int  t(y) \left[ H_{h'}(y- x) - H_h(y- x)   \right] (f(x) - f(y) ) \, \mathrm d \mu(y).
\]

\noindent
Since the function $t \mapsto \eta(f,t) $ is continuous on $B_{2,M}(1)$, 
we consider $t \in {\mathcal T}$ where
${\mathcal T}$ is a countable set of $B_{2,M}(1)$.
In order to apply ~\cref{lemma1}
we need to compute the quantities $\theta,$ $v_g$ and $H$.\\[1mm]

\begin{lem}\label{lem65}
Let $t \in {\mathcal T}.$ With the same notation as in~\cref{tec_lem1} 
\begin{align*}
 \|g_{t,f}\|_{\infty} 
& \leq 2 \ \mathcal I_2(h')^{1/2} =: \theta \\[1mm]
\mathrm{Var}[g_{t,f}(X)]  
& \leq  4 \| p\|_\infty \ \mathcal I_1(h')^{2}=: v_g \\[1mm]
\mathbb E\left[ \sup_{t\in {\mathcal T}} \eta(f,t) \right] 
& \leq \frac{2 }{\sqrt n}\ \mathcal I_2(h')^{1/2}= : H.
\end{align*}
\end{lem}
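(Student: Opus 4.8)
The plan is to obtain the three estimates directly from Lemma~\ref{tec_lem1} together with Cauchy--Schwarz, Minkowski and Fubini. Write $\varphi_h(x,y) = H_h(y-x)\bigl(f(x)-f(y)\bigr)$, so that $g_{t,f}(x) = \int_M t(y)\bigl(\varphi_{h'}(x,y)-\varphi_h(x,y)\bigr)\,\mathrm d\mu(y)$. Lemma~\ref{tec_lem1} reads $\int_M |\varphi_h(x,y)|\,\mathrm d\mu(y)\le\mathcal I_1(h)$ and $\int_M|\varphi_h(x,y)|^2\,\mathrm d\mu(y)\le\mathcal I_2(h)$ for every $x\in M$, and since $H_h$ is even one has $|\varphi_h(x,y)|=|\varphi_h(y,x)|$, so the same two bounds hold with the integration performed in the first variable. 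I will also use that on the admissible range of bandwidths --- $h$ small enough to satisfy~\eqref{CondhmaxRayInj}, which holds on the grid $\mathcal H$ for $n$ large --- both $\mathcal I_1$ and $\mathcal I_2$ are non-increasing, so $\mathcal I_j(h)\le\mathcal I_j(h')$ whenever $h'\le h$; this is how $h$ gets replaced by $h'$ at the cost of a factor $2$.

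For the sup-norm, I would fix $x$ and apply Cauchy--Schwarz in $L^2(\mu)$ to the product $t(y)\cdot\bigl(\varphi_{h'}(x,y)-\varphi_h(x,y)\bigr)$; using $\|t\|_{2,M}=1$, then Minkowski's inequality to separate the two kernels, and finally~\eqref{tec_eq2} and monotonicity, one gets $|g_{t,f}(x)|\le\mathcal I_2(h')^{1/2}+\mathcal I_2(h)^{1/2}\le 2\,\mathcal I_2(h')^{1/2}$, which is $\theta$ after taking the supremum over $x$ and $t$.

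The variance bound is the delicate step. I would start from $\mathrm{Var}[g_{t,f}(X)]\le\mathbb E[g_{t,f}(X)^2]=\int_M g_{t,f}(x)^2\,p(x)\,\mathrm d\mu(x)\le\|p\|_\infty\int_M g_{t,f}(x)^2\,\mathrm d\mu(x)$, and bound $g_{t,f}(x)^2$ by Cauchy--Schwarz with an \emph{asymmetric} split, $g_{t,f}(x)^2\le\Bigl(\int_M t(y)^2\,|\varphi_{h'}(x,y)-\varphi_h(x,y)|\,\mathrm d\mu(y)\Bigr)\Bigl(\int_M|\varphi_{h'}(x,y)-\varphi_h(x,y)|\,\mathrm d\mu(y)\Bigr)$; by the triangle inequality and~\eqref{tec_eq1} the second factor is at most $\mathcal I_1(h')+\mathcal I_1(h)\le 2\,\mathcal I_1(h')$, uniformly in $x$. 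Plugging this in, exchanging the order of integration, and then using $\int_M|\varphi_{h'}(x,y)-\varphi_h(x,y)|\,\mathrm d\mu(x)\le 2\,\mathcal I_1(h')$ --- here is where the evenness of $\varphi_h$ is needed --- leaves exactly $\int_M t(y)^2\,\mathrm d\mu(y)=1$, so $\mathrm{Var}[g_{t,f}(X)]\le\|p\|_\infty\bigl(\mathcal I_1(h')+\mathcal I_1(h)\bigr)^2\le 4\,\|p\|_\infty\,\mathcal I_1(h')^2=v_g$.

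For $H$, each $\eta(f,t)=\bigl\langle t,\,(\hat\Delta_{h'}-\Delta_{h'}+\Delta_h-\hat\Delta_h)f\bigr\rangle\le\|(\hat\Delta_{h'}-\Delta_{h'}+\Delta_h-\hat\Delta_h)f\|_{2,M}$ by Cauchy--Schwarz, so by Jensen it is enough to bound $\mathbb E\bigl[\|(\hat\Delta_{h'}-\Delta_{h'}+\Delta_h-\hat\Delta_h)f\|_{2,M}^2\bigr]$. Since this operator applied to $f$ at $y$ equals $\frac1n\sum_i\bigl(\varphi_{h'}(X_i,y)-\varphi_h(X_i,y)-\mathbb E[\varphi_{h'}(X,y)-\varphi_h(X,y)]\bigr)$, a normalized sum of centered i.i.d.\ terms, expanding the square kills the cross terms and yields at most $\frac1n\,\mathbb E\bigl[\int_M|\varphi_{h'}(X,y)-\varphi_h(X,y)|^2\,\mathrm d\mu(y)\bigr]\le\frac1n\bigl(\mathcal I_2(h')^{1/2}+\mathcal I_2(h)^{1/2}\bigr)^2\le\frac{4\,\mathcal I_2(h')}{n}$; taking the square root gives $H\le 2\,\mathcal I_2(h')^{1/2}/\sqrt n$. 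The only real obstacle is the variance: a naive Cauchy--Schwarz would leave a factor of order $\mathcal I_2(h')^{1/2}\sim h'^{-(d+2)/2}$, too large for the Bernstein-type conclusion via Lemma~\ref{lemma1}; keeping one factor in $L^1$ form and using Fubini brings \emph{both} remaining kernel integrals down to the milder size $\mathcal I_1$, and the evenness of $H_h$ is precisely what makes the $x$-integral $\int_M|\varphi_h(x,y)|\,\mathrm d\mu(x)\le\mathcal I_1(h)$ available. The remaining steps are routine.
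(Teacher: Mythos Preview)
Your proposal is correct and follows essentially the same route as the paper's proof: Cauchy--Schwarz against $t$ for the sup-norm, the asymmetric Cauchy--Schwarz $g^2\le\bigl(\int t^2|\varphi_{h'}-\varphi_h|\bigr)\bigl(\int|\varphi_{h'}-\varphi_h|\bigr)$ followed by Fubini for the variance, and the i.i.d.\ second-moment computation for $H$. The only cosmetic differences are that the paper uses $(a-b)^2\le 2a^2+2b^2$ where you use Minkowski, and it pulls out $\|p\|_\infty$ after the asymmetric split rather than before; you are in fact more explicit than the paper about the symmetry $|\varphi_h(x,y)|=|\varphi_h(y,x)|$ needed to invoke Lemma~\ref{tec_lem1} for the $x$-integral, and about the monotonicity $\mathcal I_j(h)\le\mathcal I_j(h')$ for $h'\le h$, both of which the paper uses tacitly.
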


\begin{proof}
Using the Cauchy-Schwarz inequality and the fact that $t \in B_{2,M}$ we get
\begin{multline*}
|g_{t,f}(x)| \leq \| t\|_{2,M} \left( \int \Big| \left( H_{h'}(y -   x) - H_h(y - x)   \right) (f(x) - f(y) )\Big|^2 \, \mathrm d \mu(y)\right)^{1/2}\\
\leq \Bigg( 2 \int \Big| H_{h'}(y -   x) (f(x) - f(y) )\Big|^2 \, \mathrm d \mu(y)\\
+ 2 \int \Big| H_h(y - x) (f(x) - f(y) )\Big|^2 \, \mathrm d \mu(y)   \Bigg)^{1/2}.
\end{multline*}
According to ~\cref{tec_lem1} and recalling that $h' \leq h$ we get
\begin{align*}
|g_{t,f}(x)| &\leq  \big( 2 \mathcal I_2(h') + 2 \mathcal I_2(h) \big)^{1/2} \leq 2 \ \mathcal I_2(h')^{1/2}
\end{align*}
which proves the first bound. To prove the second inequality we observe that, by the Cauchy-Schwarz inequality, 
\begin{multline*}
\mathrm{Var}[g_{t,f}(X)] 
 \leq \mathbb E \left[ g_{t,f}(X)^2 \right]\\
 = \mathbb E \left[ \left( \int  t(y) \left[  (H_{h'} - H_h)(y-X)  \right] (f(X) - f(y) ) \, \mathrm d \mu(y)  \right)^2 \right]\\
\leq  \mathbb E \Bigg[
\left| \int t(y)^2 \left( (H_{h'} - H_h)(y-X) \right)(f(X) - f(y) ) \, \mathrm d\mu(y) \right| \times \\
\times \left| \int \left( (H_{h'} - H_h)(y-X) \right) (f(X) - f(y) ) \, \mathrm d\mu(y) \right|
\Bigg].
\end{multline*}
According to ~\cref{tec_lem1}
\[
\left| \int \left( (H_{h'} - H_h)(y-X) \right) (f(X) - f(y) ) \, \mathrm d\mu(y) \right|
\leq  \mathcal I_1(h') + \mathcal I_1(h) \leq 2 \  \mathcal I_1(h')
\]
so that, recalling that the distribution $\mathrm P$ has a density $p$ with respect to $\mu$, 
\begin{align*}
\mathrm{Var}[g_{t,f}(X)] 
& \leq 2 \  \mathcal I_1(h') \ \mathbb E \Bigg[
\left| \int t(y)^2 \left( (H_{h'} - H_h)(y-X) \right)(f(X) - f(y) ) \, \mathrm d\mu(y) \right| \Bigg]\\
& \leq 2 \  \mathcal I_1(h') \ 
\int \left|  t(y)^2 \left( (H_{h'} - H_h)(y-X) \right)(f(X) - f(y) ) \right| \ p(x)  \, \mathrm d\mu(x) \mathrm d\mu(y)\\
& \leq 2 \| p\|_\infty \  \mathcal I_1(h') \ 
\int t(y)^2 \left( \int  \left| (H_{h'} - H_h)(y-X)(f(X) - f(y) ) \right|\, \mathrm d\mu(x) \right) \, \mathrm d\mu(y).
 \end{align*}
 Using again ~\cref{tec_lem1} and the fact that $t \in B_{2,M}$ we conclude that 
 \[
 \mathrm{Var}[g_{t,f}(X)] \leq 4 \| p\|_\infty \  \mathcal I_1(h')^2.
 \] 
 To get the last inequality we follow the proof of~\cref{eq_v}. We observe that 
 \begin{multline*}
 \mathbb E\left[ \sup_{t\in {\mathcal T}} \eta(f,t)^2 \right]
 =  \mathbb E \left[  \left\| (\hat \Delta_{h'} -  \hat \Delta_h)f - \mathbb E\left[ (\hat \Delta_{h'} -  \hat \Delta_h)f\right] \right\|_{2,M}^2\right] \\
 \leq \frac{1}{n}  \mathbb E \left[  \int \left| (H_{h'} - H_h)(y-X)(f(X) - f(y) ) \right|^2\, \mathrm d\mu(y) \right].
 \end{multline*}
According to ~\cref{tec_lem1},
\[
  \mathbb E\left[ \sup_{t\in {\mathcal T}} \eta(f,t)^2 \right]
  \leq \frac{2 }{n} \left( \mathcal I_2(h') +  \mathcal I_2(h) \right) \leq   \frac{4}{n} \mathcal I_2(h') 
\]
  which concludes the proof.
\end{proof}

\vskip2mm
\noindent
According to  ~\cref{lemma1},
with probability at least 
$1- \max\left\{  \exp\left( -\frac{\epsilon^2 n H^2}{6v_g} \right), \,  \exp\left( -\frac{\min\{\epsilon^2,\epsilon\}   n H}{24 \theta} \right) \right\}$, 
\begin{align*}
\|( \hat \Delta_{h'} -\Delta_{h'}+\Delta_h-  \hat \Delta_h )f\|_{2,M} & \leq (1+\epsilon) H
\end{align*}
where
\[
\frac{H}{\theta} = \frac{1}{\sqrt n} \qquad  \text{and} \qquad 
\frac{H^2}{ v_g}  = \frac{\mathcal I_2(h')}{n \mathcal I_1(h')} 
= \frac{2}{n\|p\|_\infty h'^{d}  } \left[ \frac{ \omega_d \ \|K_d\|_{2,d}^2 + \alpha_d(h')}{\left(\omega_d\ \| K_d\|_{1,d} + \beta_d(h')\right)^2}   \right] = \frac{2\gamma_d(h')}{n}.
\]

\vskip2mm
\noindent
Moreover, by definition, 
$H^2 \leq 4 V(h')$, so that 
choosing $\epsilon$ such that $a\geq 4 (1+\epsilon)^2$
we get that 
with probability at least $1- \max\left\{  \exp\left( -\frac{\min\{\epsilon^2,\epsilon\}  \sqrt n }{24} \right), \,  \exp\left( -\frac{\epsilon^2}{3}\gamma_d(h') \right) \right\}$
\begin{align*}
2\|( \hat \Delta_{h'} -\Delta_{h'}+\Delta_h-  \hat \Delta_h )f\|_{2,M}^2 
& \leq 4(1+\epsilon)^2 V(h')  \leq a V(h').
\end{align*}

\vskip2mm
\noindent
In particular we choose $\epsilon = \sqrt{a}/2 -1$.
The result follows taking a union bound on $h'$.

\subsubsection{Proof of ~\cref{main_p2}}\label{proof4}
The proof follows the one of ~\cref{main_prop}. 
We can write
\[
\| (\Delta_h - \hat \Delta_{ h})f \|_{2,M} 
= \sup_{t\in B_{2,M}(1)} \tilde \eta(f,t)
\]
where
\[
 \tilde \eta(f,t) = \frac{1}{n} \sum_{i=1}^n \Big( \tilde g_{t,f}(X_i) - \mathbb E [ \tilde g_{t,f}(X_i)] \Big)
\]
and
\[
 \tilde g_{t,f}(x) = \int t(y) H_h(y-x) (f(x)- f(y) ) \, \mathrm d \mu(y).
\]
Moreover we observe that we can consider $t \in \mathcal T$ where $\mathcal T \subset B_{2,M}(1)$ is a countable set.
\begin{lem}
Let $t \in {\mathcal T}.$ With the notation of ~\cref{tec_lem1}

\begin{align*}
 \|\tilde g_{t,f}\|_{\infty} 
& \leq \mathcal I_2(h)^{1/2}=: \tilde \theta\\[1mm]
\mathrm{Var}[\tilde g_{t,f}(X)]  
& \leq \|p\|_\infty \ \mathcal I_1(h)^2 =: \tilde {v_g}\\[1mm]
\mathbb E\left[ \sup_{t\in {\mathcal T}} \tilde \eta(f,t) \right] 
& \leq \frac{1}{\sqrt n} \ \mathcal I_2(h)^{1/2}= : \tilde H.
\end{align*}
\end{lem}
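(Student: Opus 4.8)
The plan is to bound the three quantities $\|\tilde g_{t,f}\|_\infty$, $\mathrm{Var}[\tilde g_{t,f}(X)]$ and $\mathbb E[\sup_{t\in\mathcal T}\tilde\eta(f,t)]$ exactly as in the proof of Lemma~\ref{lem65}, but now with the single kernel $H_h$ in place of the difference $H_{h'}-H_h$. Each bound follows from Lemma~\ref{tec_lem1}, which controls the $L^1$ and $L^2$ norms over $M$ of $y\mapsto H_h(y-x)(f(x)-f(y))$ by $\mathcal I_1(h)$ and $\mathcal I_2(h)$ respectively.

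First, for $\|\tilde g_{t,f}\|_\infty$: apply Cauchy--Schwarz to
\[
|\tilde g_{t,f}(x)| = \left| \int t(y) H_h(y-x)(f(x)-f(y))\,\mathrm d\mu(y)\right|
\]
using $\|t\|_{2,M}=1$, which gives
\[
|\tilde g_{t,f}(x)| \leq \left( \int \big| H_h(y-x)(f(x)-f(y))\big|^2\,\mathrm d\mu(y)\right)^{1/2} \leq \mathcal I_2(h)^{1/2}
\]
by \eqref{tec_eq2}. Note that here there is no factor $2$ from splitting a difference, which explains why $\tilde\theta = \mathcal I_2(h)^{1/2}$ rather than $2\,\mathcal I_2(h')^{1/2}$.

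Second, for the variance: bound $\mathrm{Var}[\tilde g_{t,f}(X)] \leq \mathbb E[\tilde g_{t,f}(X)^2]$ and, writing $\tilde g_{t,f}(X)^2$ as a product of two integrals and applying Cauchy--Schwarz inside, peel off one factor using \eqref{tec_eq1},
\[
\left| \int H_h(y-X)(f(X)-f(y))\,\mathrm d\mu(y)\right| \leq \mathcal I_1(h),
\]
then pass to the expectation over $X$ using the density bound $p\leq\|p\|_\infty$ and Fubini, and apply \eqref{tec_eq1} a second time together with $\|t\|_{2,M}=1$ to get $\mathrm{Var}[\tilde g_{t,f}(X)] \leq \|p\|_\infty\,\mathcal I_1(h)^2$. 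Third, for $\tilde H$, repeat the argument of the proof of \cref{eq_v}: since $\sup_{t\in\mathcal T}\tilde\eta(f,t) = \|(\Delta_h-\hat\Delta_h)f\|_{2,M}$ on $\mathcal T$, Jensen gives $\mathbb E[\sup_t\tilde\eta(f,t)] \leq (\mathbb E[\sup_t\tilde\eta(f,t)^2])^{1/2}$, and the second moment is bounded by $\frac1n\mathbb E[\int |H_h(y-X)(f(X)-f(y))|^2\mathrm d\mu(y)] \leq \frac1n\mathcal I_2(h)$ using that the sample is i.i.d. and \eqref{tec_eq2}, yielding $\tilde H \leq n^{-1/2}\mathcal I_2(h)^{1/2}$. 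This establishes the lemma; no single step is a real obstacle, the only care needed is tracking the absence of the factor-$2$ losses that appeared in the difference-of-kernels version.
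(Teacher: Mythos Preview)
Your proposal is correct and follows essentially the same approach as the paper: Cauchy--Schwarz with $\|t\|_{2,M}=1$ and \eqref{tec_eq2} for $\tilde\theta$; the Cauchy--Schwarz splitting of $\tilde g_{t,f}(X)^2$ into two factors bounded via \eqref{tec_eq1}, followed by $p\leq\|p\|_\infty$ and Fubini, for $\tilde v_g$; and Jensen plus the i.i.d.\ variance computation from \cref{eq_v} for $\tilde H$. Your remark about the absent factor~$2$ (no kernel difference to split) is exactly the point distinguishing this from \cref{lem65}.
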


\begin{proof}
We proceed as in the proof of ~\cref{lem65}.
By the Cauchy-Schwarz inequality and ~\cref{tec_lem1}, recalling that $t \in B_{2,M},$ we get
\[
| \tilde g_{t,f}(x) |
\leq \| t\|_{2,M} \left( \int \left| H_h(y-x) (f(x)- f(y) ) \right|^2\, \mathrm d \mu(y)\right)^{1/2}
\leq \mathcal I_2(h)^{1/2}
\]
which proves the first inequality. To get the second one, we observe that
\begin{multline*}
\mathrm{Var}[\tilde g_{t,f}(X)] 
 \leq \mathbb E \left[\tilde g_{t,f}(X)^2 \right]\\
  \leq  \mathbb E \Bigg[
\left| \int t(y)^2  H_h(y-X) (f(X) - f(y) ) \, \mathrm d\mu(y) \right| \\
\times \left| \int H_h(y-X) (f(X) - f(y) ) \, \mathrm d\mu(y) \right|
\Bigg]\\
\leq \|p\|_\infty \ \mathcal I_1(h)^2.
\end{multline*}
To prove the last inequality we observe that 
\begin{multline*}
\mathbb E\left[ \sup_{t\in {\mathcal T}} \tilde \eta(f,t)^2 \right] 
 = \mathbb E \left[  \left\|   \hat \Delta_hf - \mathbb E\left[ \hat \Delta_h f\right] \right\|_{2,M}^2\right] \\
 \leq \frac{1}{n} \mathbb E \left[ \int \left| H_h(y-X) (f(X)-f(y) ) \right|^2\, \mathrm d\mu(y) \right] 
 \leq \frac{1}{n} \ \mathcal I_2(h).
\end{multline*}
\end{proof}

\vskip2mm
\noindent
According to ~\cref{lemma1},
with probability at least $1- \max\left\{  \exp\left( -\frac{\tilde \epsilon^2 n \tilde H^2}{6 \tilde v_g} \right), \,  \exp\left( -\frac{\min\{\tilde \epsilon^2,\tilde \epsilon\}   n \tilde H}{24 \ \tilde \theta} \right) \right\}$, we have
\begin{align*}
\| (\Delta_h - \hat \Delta_{ h})f \|_{2,M}  & \leq (1+\tilde \epsilon) \tilde H
\end{align*}
where
\[
\tilde H/\tilde \theta = 1/\sqrt n \qquad  \text{and} \qquad 
\tilde H^2/ \tilde v_g  = \frac{\mathcal I_2(h)}{n\|p\|_\infty \mathcal I_1(h)}= \frac{2}{n \| p\|_{\infty} h^{d}  } \left[ \frac{ 2\omega_d \ \|K_d\|_{2,d}^2 + \alpha_d(h)}{\left(2\omega_d\ \| K_d\|_{1,d} + \beta_d(h)\right)^2}   \right] = \frac{2\gamma_d(h)}{n}.
\]
Moreover, since $\tilde H^2 = V(h)$, 
choosing $\epsilon$ such that $a\geq (1+\tilde \epsilon)^2$
we get that 
with probability at least $1- \max\left\{  \exp\left( -\frac{\min\{\tilde \epsilon^2,\tilde \epsilon\}   \sqrt n }{24} \right), \,  \exp\left( -\frac{\tilde \epsilon^2}{3}\gamma_d(h) \right) \right\}$
\begin{align*}
\| (\Delta_h - \hat \Delta_{ h})f \|_{2,M}^2  \leq a V(h).
\end{align*}

\vskip2mm
\noindent
In particular choosing $\tilde \epsilon = \sqrt a -1$ we conclude the proof.

\subsection{Proof of ~\cref{prop_Dh}}\label{proof7}

We first prove that,
given $x \in M$, 
\begin{equation}\label{monster}
| p(x) \Delta_{\mathrm P}f(x) -  \Delta_{ h'}f(x) | 
\leq \Phi_{\mathcal F}(h')
\end{equation}
where
\begin{multline*}
\Phi_{\mathcal F} (h') =  
 h' \ C_{\mathcal F} \left[ 5\tau_d \left( \frac{\| p\|_\infty}{3} +\|p'\|_\infty + \|p''\|_\infty   \right) + \|p\|_\infty \tilde D_3 + \frac{2}{(4\pi)^{d/2}} \right] \\
 +  h'^2 \ C_{\mathcal F}\left[ \frac{35}{4} \|p''\|_\infty \tau_d+ \left( \|p'\|_\infty + \tilde D_4\frac{\| p\|_\infty}{2} \right) \right] \\
 + \frac{1}{2} h'^3 \ C_{\mathcal F} \tilde D_5 \left( \frac{\| p\|_\infty}{3} +\|p'\|_\infty + \|p''\|_\infty   \right)
 + \frac{1}{4} h'^4\ C_{\mathcal F} \|p''\|_\infty \tilde D_6\\
 +\frac{  \kappa_d \ C_{\mathcal F} \left( \|p'\|_\infty + \| p\|_\infty/2 \right)}{(4\pi)^{d/2}} (4(d+3))^{d/2} \log(h'^{-1})^{d/2} h'^{d+3} \\
 +  \frac{ \kappa_{d-1} \ C_{\mathcal F} \| p\|_\infty}{(4\pi)^{d/2}} (4(d+3))^{(d-1)/2} \log(h'^{-1})^{(d-1)/2} h'^{d+2}
\end{multline*}
with
\begin{align*}
\tau_d &= \frac{1}{(4\pi)^{d/2}} \int_{\mathbb R^d} e^{-\|u\|_d^2/8}\, \mathrm du\\[2mm]
\kappa_{d+\sigma} & = \frac{2}{\Gamma\big((d+\sigma+1)/2\big)}  F_{d+\sigma}(\sqrt{2(d+\sigma)}), \qquad \sigma \in \mathbb Z,
\end{align*}
and
\begin{equation}\label{alphad}
F_{d+\sigma}(x) = \frac{\int_x^\infty t^{d+\sigma+1} e^{-t^2/4}\, \mathrm dt}{x^{d+\sigma}e^{-x^2/4}}.
\end{equation}

Introduce
\[
\mathcal B = \left\{ y \in M \ | \ \| y-x\|_m < L h' \log(h'^{-1})^{1/2}   \right\}
\]
with $L>0$ to be chosen
and observe that 
\begin{multline*}
\Delta_{h'}f(x) = \frac{1}{h'^{d+2}(4\pi)^{d/2}} \left( \int_{\mathcal B} + \int_{\mathcal B^c} e^{- \| y-x\|_m^2/4h'^2} \left( f(y) - f(x) \right)
\, \mathrm d \mathrm P(y) \right)\\
= : \mathcal I_{\mathcal B}f(x)+ \mathcal I_{\mathcal B^c}f(x).
\end{multline*} 
We first recall that the distribution $\mathrm P$ has a density $p$ with respect to $\mu.$ Moreover on $\mathcal B$, 
in the $x$-normal coordinates, $\mu$ has a density $\sqrt{\mathrm{det}(g_{ij})}$ and 
the Taylor expansion of $f$ is
\[
f(\mathcal E_x(v)) - f(x) = f(\mathcal E_x(v)) - f(\mathcal E_x(0)) = \langle f'(x), v\rangle + \frac{1}{2} \langle f''(x) v, v \rangle + \frac{1}{6} f'''(\xi)(v,v,v)
\]
for a suitable $\xi= \xi(x) \in M$ and where $f^{(k)}$ denotes the $k$-th derivate with respect to $v$ of $f \circ \mathcal E_x(v)$. Thus we can write
\begin{multline*}
\mathcal I_{\mathcal B}f(x) = 
\frac{1}{h'^{d+2}(4\pi)^{d/2}}  \int_{\mathcal B} e^{- \| y-x\|_m^2/4h'^2} \left( f(y) - f(x) \right)
\, \mathrm d \mathrm P(y) \\
= \frac{1}{h'^{d+2}(4\pi)^{d/2}} \int_{\mathcal E_x^{-1}(\mathcal B)} e^{- \| \mathcal E_x(v) -x\|_m^2/4h'^2} \left(  \langle f'(x), v\rangle + \frac{1}{2} \langle f''(x) v, v \rangle  \right)\ p(\mathcal E_x(v)) \sqrt{\mathrm{det}(g_{ij})}(v) \, \mathrm d v \\
+ \frac{1}{6 h'^{d+2}(4\pi)^{d/2}} \int_{\mathcal E_x^{-1}(\mathcal B)} e^{- \| \mathcal E_x(v) -x\|_m^2/4h'^2}f'''(\xi)(v,v,v)\ p(\mathcal E_x(v)) \sqrt{\mathrm{det}(g_{ij})}(v) \, \mathrm d v.
\end{multline*} 

\noindent
Using now the Taylor expansion of $p$ in $x$-normal coordinates
\[
p(\mathcal E_x(v)) = p(x) + \langle p'(x), v \rangle + \frac{1}{2} \langle p''(\zeta) v, v\rangle
\]
for a suitable $\zeta = \zeta(x) \in M$ and where as before $p^{(k)}$ denotes $k$-th derivate of $p\circ\mathcal E_x$,
we have that 
\begin{multline*}
| \Delta_{ h'}f(x) - p(x) \Delta_{\mathrm P}f(x) | \\
\leq \mathrm I_1f(x) +\mathrm I_2 f(x) +\mathrm I_3 f(x) +\mathrm I_4 f(x) +\mathrm I_5 f(x) +\mathrm I_6 f(x) + \mathcal I_{\mathcal B^c}f(x)
\end{multline*} 
where, denoting by
\begin{equation}\label{st}
\mathcal S(u) =  \langle f'(x), u\rangle \langle p'(x), u\rangle + \frac{p(x)}{2}\langle f''(x) u, u \rangle,
\end{equation}
\begin{align*}
\mathrm I_1f(x) 
& =   \left|  \frac{1}{h'^{d+2}(4\pi)^{d/2}} \int_{\mathcal E_x^{-1}(\mathcal B)} e^{- \| \mathcal E_x(v) -x\|_m^2/4h'^2}  \mathcal S(v)\sqrt{\mathrm{det}(g_{ij})}(v) \, \mathrm d v - p(x) \Delta_{\mathrm P}f(x) \right| \\
\mathrm I_2f(x) 
& = \frac{1}{h'^{d+2}(4\pi)^{d/2}} \left| \int_{\mathcal E_x^{-1}(\mathcal B)} e^{- \| \mathcal E_x(v) -x\|_m^2/4h'^2} p(x) \ \langle f'(x), v\rangle  \sqrt{\mathrm{det}(g_{ij})}(v) \, \mathrm d v\right|\\
\mathrm I_3f(x) 
& = \frac{1}{2h'^{d+2}(4\pi)^{d/2}} \left| \int_{\mathcal E_x^{-1}(\mathcal B)} e^{- \| \mathcal E_x(v) -x\|_m^2/4h'^2} \langle f'(x), v\rangle\  \langle p''(\zeta) v, v\rangle
\sqrt{\mathrm{det}(g_{ij})}(v) \, \mathrm d v\right| \\
\mathrm I_4f(x) 
& =  \frac{1}{2h'^{d+2}(4\pi)^{d/2}} \left| \int_{\mathcal E_x^{-1}(\mathcal B)} e^{- \| \mathcal E_x(v) -x\|_m^2/4h'^2}\langle f''(x) v, v \rangle \langle p'(x), v \rangle
\sqrt{\mathrm{det}(g_{ij})}(v) \, \mathrm d v\right| \\
\mathrm I_5f(x) 
& =  \frac{1}{4h'^{d+2}(4\pi)^{d/2}} \left| \int_{\mathcal E_x^{-1}(\mathcal B)} e^{- \| \mathcal E_x(v) -x\|_m^2/4h'^2}\langle f''(x) v, v \rangle
\langle p''(\zeta) v, v\rangle
\sqrt{\mathrm{det}(g_{ij})}(v) \, \mathrm d v\right|\\
\mathrm I_6f(x) & = \frac{1}{6 h'^{d+2}(4\pi)^{d/2}} \left| \int_{\mathcal E_x^{-1}(\mathcal B)} e^{- \| \mathcal E_x(v) -x\|_m^2/4h'^2}f'''(\xi)(v,v,v)\ p(\mathcal E_x(v)) \sqrt{\mathrm{det}(g_{ij})}(v) \, \mathrm d v\right|.
\end{align*}
Let us now bound each term separately. 
We first observe that by definition
\[
\mathcal I_{\mathcal B^c}f(x)
\leq \frac{2C_{\mathcal F}}{h'^{d+2-L^2/4}(4\pi)^{d/2}}.
\]
For the proofs of ~\cref{lem1hor} and ~\cref{lem2hor}
we refer to ~\cref{sec_proof_hor}.

\vskip2mm
\begin{lem}\label{lem1hor}
It holds
\[
\mathrm I_1f(x) \leq 
 C_{\mathcal F} \left( \|p'\|_\infty + \| p\|_\infty/2 \right) \left[\frac{ \kappa_d} {(4\pi)^{d/2}} L^d \log(h'^{-1})^{d/2} h'^{L^2/4}  + \tilde D_4 \ h'^2 \right].
\]
\end{lem}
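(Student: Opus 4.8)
The plan is to compare the truncated, ``curved'' integral in the definition of $\mathrm I_1 f(x)$ with the \emph{exact} Gaussian integral taken over the whole tangent space $T_xM\cong\R^d$ with the flat metric. The starting point is the identity
\[
p(x)\,\Delta_{\mathrm P}f(x)\;=\;\frac{1}{h'^{d+2}(4\pi)^{d/2}}\int_{\R^d} e^{-\|v\|_d^2/4h'^2}\,\mathcal S(v)\,\mathrm dv ,
\]
which follows by writing $\Delta_{\mathrm P}$ in $x$-normal coordinates and using the elementary Gaussian moments $\int_{\R^d}K_d=1$ and $\int_{\R^d}v_iv_j\,K_d(v)\,\mathrm dv=2\,\delta_{ij}$: the quadratic form $\mathcal S$ of~\eqref{st} then reproduces exactly the trace of the Hessian of $f$ together with the $\langle\nabla p,\nabla f\rangle$ term of $\Delta_{\mathrm P}f$. (The odd, first-order part of the Taylor expansion is not involved here; it is isolated in the separate term $\mathrm I_2$.) Subtracting this identity from the definition of $\mathrm I_1 f(x)$ and using the triangle inequality reduces the estimate to two contributions: a \emph{truncation} error, from replacing $\R^d$ by $\mathcal E_x^{-1}(\mathcal B)$, and a \emph{distortion} error, from replacing $\|\mathcal E_x(v)-x\|_m^2$ by $\|v\|_d^2$ in the exponent and $\sqrt{\mathrm{det}(g_{ij})}(v)$ by $1$ on $\mathcal E_x^{-1}(\mathcal B)$.

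For the truncation error, \cref{lem_2.2} gives $\|\mathcal E_x(v)-x\|_m\le\|v\|_d$, so $v\notin\mathcal E_x^{-1}(\mathcal B)$ forces $\|v\|_d\ge L\,h'\log(h'^{-1})^{1/2}$, while $|\mathcal S(v)|\le C_{\mathcal F}\big(\|p'\|_\infty+\|p\|_\infty/2\big)\|v\|_d^2$. Thus this error is at most $C_{\mathcal F}\big(\|p'\|_\infty+\|p\|_\infty/2\big)$ times $\frac{1}{h'^{d+2}(4\pi)^{d/2}}\int_{\|v\|_d\ge L h'\log(h'^{-1})^{1/2}}\|v\|_d^2\,e^{-\|v\|_d^2/4h'^2}\,\mathrm dv$. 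Rescaling $v=h'w$ and passing to polar coordinates turns this into a fixed multiple of $\int_R^\infty r^{d+1}e^{-r^2/4}\,\mathrm dr$ with $R=L\log(h'^{-1})^{1/2}$, which by~\eqref{alphad} equals $R^d e^{-R^2/4}F_d(R)=L^d\log(h'^{-1})^{d/2}\,h'^{L^2/4}\,F_d(R)$; for $h'$ small one has $R\ge\sqrt{2d}$ and $F_d$ is non-increasing on $[\sqrt{2d},\infty)$, so $F_d(R)\le F_d(\sqrt{2d})$, and collecting this value with the normalising and spherical-measure constants (which is what $\kappa_d$ records) produces the term $\frac{\kappa_d}{(4\pi)^{d/2}}L^d\log(h'^{-1})^{d/2}h'^{L^2/4}$.

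For the distortion error, write the difference of the two integrands on $\mathcal E_x^{-1}(\mathcal B)$ as $\big(e^{-\|\mathcal E_x(v)-x\|_m^2/4h'^2}-e^{-\|v\|_d^2/4h'^2}\big)\sqrt{\mathrm{det}(g_{ij})}(v)+e^{-\|v\|_d^2/4h'^2}\big(\sqrt{\mathrm{det}(g_{ij})}(v)-1\big)$. By \cref{lem_2.2}, $\|\mathcal E_x(v)-x\|_m^2\ge\|v\|_d^2-C\|v\|_d^4$, so the elementary bound $e^{-(a-b)}-e^{-a}\le b\,e^{-a/2}$ controls the first parenthesis by $\frac{C\|v\|_d^4}{4h'^2}e^{-\|v\|_d^2/8h'^2}$, while $|\sqrt{\mathrm{det}(g_{ij})}(v)-1|\le C_1\|v\|_d^2$ and $\sqrt{\mathrm{det}(g_{ij})}(v)$ stays bounded on $\mathcal E_x^{-1}(\mathcal B)$. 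Using $|\mathcal S(v)|\le C_{\mathcal F}\big(\|p'\|_\infty+\|p\|_\infty/2\big)\|v\|_d^2$ and $e^{-\|v\|_d^2/4h'^2}\le e^{-\|v\|_d^2/8h'^2}$, extending the domain to all of $\R^d$ and rescaling $v=h'w$, one checks that $\frac{1}{h'^{d+2}(4\pi)^{d/2}}\int_{\R^d}\big(\frac{C\|v\|_d^6}{4h'^2}+C_1\|v\|_d^4\big)e^{-\|v\|_d^2/8h'^2}\,\mathrm dv=h'^2\,\tilde D_4$ by~\eqref{def_tDa} with $\alpha=4$, the remaining cross term contributing only $O(h'^4)$ and being absorbed; this yields $C_{\mathcal F}\big(\|p'\|_\infty+\|p\|_\infty/2\big)\tilde D_4\,h'^2$. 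Adding the truncation and distortion bounds gives the lemma.

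The main obstacle is the first step: expressing the coordinate-free operator $\Delta_{\mathrm P}$ correctly in $x$-normal coordinates and identifying it with the Gaussian average of $\mathcal S$, keeping precise track of the factor $2$ in $\int_{\R^d}v_iv_j\,K_d(v)\,\mathrm dv=2\,\delta_{ij}$. Everything else is careful but routine Gaussian-tail bookkeeping; one also needs $\mathcal E_x^{-1}(\mathcal B)$ to lie inside the region where \cref{lem_2.2} is valid, which Condition~\eqref{CondhmaxRayInj} together with the choice of $L$ guarantees.
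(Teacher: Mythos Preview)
Your proposal is correct and follows essentially the same route as the paper's proof: identify $p(x)\Delta_{\mathrm P}f(x)$ with the full Gaussian average of $\mathcal S$ in normal coordinates, then split $\mathrm I_1f(x)$ into a truncation error over $(\mathcal E_x^{-1}(\mathcal B))^c$ (controlled via the tail function $F_d$ and yielding the $\kappa_d$ term) and a distortion error on $\mathcal E_x^{-1}(\mathcal B)$ (controlled via \cref{lem_2.2} and the inequality $e^{-(a-b)}-e^{-a}\le b\,e^{-a/2}$, yielding $\tilde D_4\,h'^2$). The only cosmetic difference is that the paper first bounds $\sqrt{\det g_{ij}}(v)\le 1+C_1\|v\|_d^2$ and then replaces the exponential, which avoids your ``cross term'' altogether; your absorption of that $O(h'^4)$ piece is harmless.
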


\begin{lem}\label{lem2hor}
It holds
\begin{align*}
\mathrm I_2f(x) & \leq C_{\mathcal F} \| p\|_\infty \left[   \frac{\kappa_{d-1}} {(4\pi)^{d/2}} L^{d-1} \log(h'^{-1})^{(d-1)/2} h'^{L^2/4 -1}+\tilde D_3 \ h' \right]\\
\mathrm I_3f(x) & \leq  C_{\mathcal F} \| p''\|_\infty \ h' \left( 5\tau_d + \frac{ \tilde D_5}{2} \ h'^2 \right)\\
\mathrm I_4f(x) & \leq  C_{\mathcal F} \| p'\|_\infty \ h'  \left(5\tau_d+ \frac{ \tilde D_5}{2}  \ h'^2 \right)\\
\mathrm I_5f(x) & \leq C_{\mathcal F} \|p''\|_\infty \ h'^2 \left(  \frac{35}{4}  \tau_d +\frac{ \tilde D_6 }{4} \ h'^2\right)\\
\mathrm I_6f(x) & \leq C_{\mathcal F} \| p\|_\infty h' \left(  \frac{5\tau_d }{3}+  \frac{\tilde D_5}{6} \ h'^2\right).
\end{align*}
\end{lem}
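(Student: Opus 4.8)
The plan is to treat all five estimates by one common scheme and then isolate the single extra ingredient needed for $\mathrm I_2$. Each $\mathrm I_k f(x)$, $k=2,\dots,6$, has the form
\[
\frac{c_k}{h'^{d+2}(4\pi)^{d/2}}\left|\int_{\mathcal E_x^{-1}(\mathcal B)} e^{-\|\mathcal E_x(v)-x\|_m^2/4h'^2}\,P_k(v)\,\sqrt{\mathrm{det}(g_{ij})}(v)\,\mathrm dv\right|,
\]
where $P_k$ is a homogeneous polynomial in $v$ whose coefficients are products of a derivative (at $x$ or at a Taylor remainder point) of $f\circ\mathcal E_x$ and a derivative of $p\circ\mathcal E_x$; in normal coordinates $\|f^{(j)}\|_\infty\le C_{\mathcal F}$ together with the definition of $\|p^{(j)}\|_\infty$ gives $|P_k(v)|\le c\,C_{\mathcal F}\,\|p^{(j_k)}\|_\infty\,\|v\|_d^{m_k}$ with $m_k=\deg P_k\in\{1,3,4\}$ and $j_k\in\{0,1,2\}$. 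The two geometric inputs are \cref{lem_2.2} — which yields $\tfrac12\|v\|_d^2\le\|\mathcal E_x(v)-x\|_m^2\le\|v\|_d^2$ and $|\sqrt{\mathrm{det}(g_{ij})}(v)-1|\le C_1\|v\|_d^2$ — and \cref{nb_2}, which upgrades the first of these to $e^{-\|\mathcal E_x(v)-x\|_m^2/4h'^2}-e^{-\|v\|_d^2/4h'^2}\le\tfrac{C\|v\|_d^4}{4h'^2}e^{-\|v\|_d^2/8h'^2}$. Hence $e^{-\|\mathcal E_x(v)-x\|_m^2/4h'^2}\sqrt{\mathrm{det}(g_{ij})}(v)$ splits, up to the negligible product of two error terms, into a main part carrying the pure Gaussian $e^{-\|v\|_d^2/4h'^2}$ and two corrections carrying an extra factor $C_1\|v\|_d^2$, resp.\ $\tfrac{C}{4}\|v\|_d^4/h'^2$, against $e^{-\|v\|_d^2/8h'^2}$.

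For $\mathrm I_3,\mathrm I_4,\mathrm I_5,\mathrm I_6$ I would majorize the integrand pointwise on $\mathcal E_x^{-1}(\mathcal B)$ by a non-negative function of $\|v\|_d$ alone using the bounds above, then integrate that majorant over all of $\R^d$ (licit since it is non-negative — this is why, unlike $\mathrm I_1$ and $\mathrm I_2$, these terms carry no $\mathcal B^c$ contribution), and compute the resulting Gaussian moments. In the main part one applies \cref{nb_exp} to get $\|v\|_d^{m_k}e^{-\|v\|_d^2/4h'^2}\le(\mathrm{const}\cdot h')^{m_k}e^{-\|v\|_d^2/8h'^2}$, pulls out the power $h'^{m_k}$, and is left with $\int_{\R^d}e^{-\|v\|_d^2/8h'^2}\,\mathrm dv=(8\pi)^{d/2}h'^d$; dividing by $h'^{d+2}(4\pi)^{d/2}$ produces exactly the $\tau_d\,h'^{m_k-2}$ term (the numerical prefactors $5,\tfrac{35}{4},\tfrac{5}{3}$ collecting $c_k$, the combinatorial constant $c$ and the \cref{nb_exp} constant). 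The two correction parts are integrals of $C_1\|v\|_d^{m_k+2}e^{-\|v\|_d^2/8h'^2}$ and of $\tfrac{C}{4}\|v\|_d^{m_k+4}h'^{-2}e^{-\|v\|_d^2/8h'^2}$; after rescaling $v=h'u$ they assemble — with weights $C_1$ and $\tfrac{C}{4}$ exactly matching \eqref{def_tDa} — into the $c_k\,\tilde D_{m_k+2}\,h'^{m_k}$ term, i.e.\ $\tilde D_5$ for $\mathrm I_3,\mathrm I_4,\mathrm I_6$ ($m_k=3$) and $\tilde D_6$ for $\mathrm I_5$ ($m_k=4$), the prefactors $\tfrac12,\tfrac12,\tfrac14,\tfrac16$ being inherited from $c_k$.

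For $\mathrm I_2$ the polynomial $P_2(v)=p(x)\langle f'(x),v\rangle$ is genuinely \emph{odd} in $v$ (neither factor depends on $v$), and this must be used: the crude scheme would only give a $\tilde D_1 h'$ bound, not the claimed $h'^{L^2/4-1}$ decay. I would split $\mathcal E_x^{-1}(\mathcal B)=B_d(0,r_0)\sqcup R$ with $r_0=Lh'\log(h'^{-1})^{1/2}$; the two-sided bound of \cref{lem_2.2} gives $B_d(0,r_0)\subseteq\mathcal E_x^{-1}(\mathcal B)\subseteq B_d(0,\sqrt2\,r_0)$, hence $R\subseteq B_d(0,\sqrt2\,r_0)\setminus B_d(0,r_0)$. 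On $B_d(0,r_0)$ I replace the weight by $e^{-\|v\|_d^2/4h'^2}$ and $\sqrt{\mathrm{det}(g_{ij})}(v)$ by $1$: the leading integral $\int_{B_d(0,r_0)}e^{-\|v\|_d^2/4h'^2}\langle f'(x),v\rangle\,\mathrm dv$ vanishes by oddness over the symmetric ball, and the weight and density errors are handled as in the common scheme with $m_k=1$, $j_k=0$, yielding the $C_{\mathcal F}\|p\|_\infty\tilde D_3 h'$ term. On the shell $R$, where $\|v\|_d<\sqrt2\,r_0$ and $CL^4h'^2\log^2(h'^{-1})\to0$ along the grid $\mathcal H$, \cref{lem_2.2} gives $e^{-\|\mathcal E_x(v)-x\|_m^2/4h'^2}\le e^{C\|v\|_d^4/4h'^2}e^{-\|v\|_d^2/4h'^2}\le 2\,e^{-\|v\|_d^2/4h'^2}$ for $h'$ small, so bounding also $|p(x)\langle f'(x),v\rangle|\le\|p\|_\infty C_{\mathcal F}\|v\|_d$ and $\sqrt{\mathrm{det}(g_{ij})}(v)\le2$ and passing to polar coordinates, the shell contribution is $\lesssim\int_{\{\|v\|_d>r_0\}}\|v\|_d e^{-\|v\|_d^2/4h'^2}\,\mathrm dv$, which by the substitution $t=\|v\|_d/h'$ and the definition \eqref{alphad} of $F_{d-1}$ equals a dimensional constant times $h'^{d+1}(r_0/h')^{d-1}e^{-(r_0/h')^2/4}F_{d-1}(r_0/h')$. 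Since $r_0/h'=L\log(h'^{-1})^{1/2}\ge\sqrt{2(d-1)}$ for $h'$ small, $F_{d-1}(r_0/h')\le F_{d-1}(\sqrt{2(d-1)})$ by monotonicity, which together with the surface constant and $\Gamma(d/2)$ packages into $\kappa_{d-1}$; using $(r_0/h')^{d-1}=L^{d-1}\log(h'^{-1})^{(d-1)/2}$ and $e^{-(r_0/h')^2/4}=h'^{L^2/4}$ and dividing by $h'^{d+2}(4\pi)^{d/2}$ produces the $\tfrac{\kappa_{d-1}}{(4\pi)^{d/2}}L^{d-1}\log(h'^{-1})^{(d-1)/2}h'^{L^2/4-1}$ term.

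The main obstacle is bookkeeping rather than ideas: one must keep precise track of the Taylor remainder points $\xi=\xi(x,v)$, $\zeta=\zeta(x,v)$, of the combinatorial and dimensional constants produced by expanding $\langle f^{(j)}(\cdot),v,\dots,v\rangle$ and $\langle p^{(j)}(\cdot),v,\dots,v\rangle$ in normal coordinates, and of the exponents of $h'$ generated by each rescaling $v=h'u$, so that the exact indices $\tau_d$, $\tilde D_5$, $\tilde D_6$, $\kappa_{d-1}$ and the numerical constants come out as stated; and for $\mathrm I_2$ one must additionally verify that the small-$h'$ regime used on the shell ($CL^4h'^2\log^2(h'^{-1})$ small and $L\log(h'^{-1})^{1/2}\ge\sqrt{2(d-1)}$) is guaranteed by $h'\in\mathcal H$ together with Condition~\eqref{CondhmaxRayInj}.
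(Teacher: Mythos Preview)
Your treatment of $\mathrm I_3,\mathrm I_4,\mathrm I_5,\mathrm I_6$ is exactly the paper's scheme: bound $|P_k(v)|$ by $C_{\mathcal F}\|p^{(j_k)}\|_\infty\|v\|_d^{m_k}$, replace $\sqrt{\det g_{ij}}$ by $1+C_1\|v\|_d^2$ via \cref{lem_2.2}, split off the pure Gaussian main term and a remainder, use \cref{nb_exp} on the main term to produce the $\tau_d\,h'^{m_k-2}$ contribution, and use \cref{nb_2} on the remainder to produce the $\tilde D_{m_k+2}$ contribution. Nothing to add there.

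For $\mathrm I_2$ your route is correct but differs from the paper's, and the paper's is a little cleaner. You split the \emph{domain} first, $\mathcal E_x^{-1}(\mathcal B)=B_d(0,r_0)\sqcup R$, exploit oddness on the symmetric ball, and then control the curved exponential on the shell by $e^{-\|\mathcal E_x(v)-x\|_m^2/4h'^2}\le 2\,e^{-\|v\|_d^2/4h'^2}$ under a smallness condition on $h'$. The paper instead replaces the \emph{weight} first, over the whole domain $\mathcal E_x^{-1}(\mathcal B)$: it writes
\[
\mathrm I_2 f(x)\le \frac{\|p\|_\infty}{h'^{d+2}(4\pi)^{d/2}}\left|\int_{\mathcal E_x^{-1}(\mathcal B)}e^{-\|v\|_d^2/4h'^2}\langle f'(x),v\rangle\,\mathrm dv\right|+\mathcal R_2,
\]
with $\mathcal R_2$ collecting both geometric corrections and already yielding $C_{\mathcal F}\|p\|_\infty\tilde D_3 h'$. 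The remaining integral is now a \emph{flat} Gaussian against an odd function, so its integral over all of $\R^d$ vanishes and one is left with the integral over $(\mathcal E_x^{-1}(\mathcal B))^c\subseteq\tilde{\mathcal B}^c=\{\|v\|_d\ge r_0\}$, which is bounded directly and gives exactly the $\kappa_{d-1}$ constant in the statement. What this buys over your version: no auxiliary small-$h'$ hypothesis on the shell, and no extra factors of $2$ (from the curved-exponential bound) or of $2$ (from $\sqrt{\det g_{ij}}\le 2$), so the constant comes out as stated rather than a multiple of it.

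One small slip in your motivation: the reason the ``crude scheme'' fails for $\mathrm I_2$ is not that it gives a $\tilde D_1 h'$ bound, but that its main term scales like $\tau_d\,h'^{m_k-2}=\tau_d/h'$ when $m_k=1$, i.e.\ it blows up; this is precisely why the oddness cancellation is indispensable here.
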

We choose $L= 2\sqrt{d+3}$ and we note that Condition~\ref{CondhmaxRayInj} is then satisfied. This concludes the proof.

\subsection{Proof of ~\cref{cor_c}}\label{proof6}

We first prove the following lemma:
\begin{lem} \label{lem:corfinalbound}
Assume that the density $p$ of $\mathrm P$ is in  $\mathcal C^1(M,\R^+)$. For $f \in \mathcal F$, we have
\begin{multline}\label{cor_a1}
\mathbb E \left[ \| (p\Delta_{\mathrm P}  - \hat \Delta_{\hat h}) f\|_{2,M} ^2 \right] \leq 
 2 \inf_{h \in \mathcal H} \left\{9 D(h)^2 + (1+\sqrt2)^2 {b V(h)}\right\}\\
+ 2  \ C_{\mathcal F}^2 \sum_{h\in \mathcal H} \delta(h)\ \left[ \frac{1}{h_{\text{min}}^{d+2}} \Big( w_d \| K_d\|_{2,d}^2 + \alpha_d(h_{\text{max}})\Big)+ 9  \big(\|p'\|_\infty+\|p\|_\infty/2 \big)^2 \mu(M) \tau_d^2 \right],
\end{multline}
where we recall that
$\tau_d = \frac{1}{(4\pi)^{d/2}} \int_{\mathbb R^d} e^{-\|u\|_d^2/8}\, \mathrm du $.
\end{lem}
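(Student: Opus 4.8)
The plan is to split the expectation according to the event $\Omega$ on which the conclusion of Theorem~\ref{prop_oineq} holds and its complement $\Omega^{c}$, recalling that $\mathbb P(\Omega^{c})\le 2\sum_{h\in\mathcal H}\delta(h)$, and to bound the two contributions separately.

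\textbf{Contribution on $\Omega$.} On $\Omega$ Theorem~\ref{prop_oineq} gives $\|(p\Delta_{\mathrm P}-\hat\Delta_{\hat h})f\|_{2,M}\le\inf_{h\in\mathcal H}\{3D(h)+(1+\sqrt2)\sqrt{bV(h)}\}$. Squaring, moving the square inside the infimum and using $(x+y)^{2}\le 2x^{2}+2y^{2}$, the right-hand side is the deterministic quantity $2\inf_{h\in\mathcal H}\{9D(h)^{2}+(1+\sqrt2)^{2}bV(h)\}$, which is the first term of the claimed bound; hence $\mathbb E\big[\|(p\Delta_{\mathrm P}-\hat\Delta_{\hat h})f\|_{2,M}^{2}\,\mathbbm 1_{\Omega}\big]$ is bounded by the same quantity.

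\textbf{Contribution on $\Omega^{c}$.} The key point is that, although $\hat h$ is data-dependent, $\|(p\Delta_{\mathrm P}-\hat\Delta_{\hat h})f\|_{2,M}^{2}$ has a fully deterministic envelope that is uniform over $\mathcal H$, so that $\mathbb E\big[\|(p\Delta_{\mathrm P}-\hat\Delta_{\hat h})f\|_{2,M}^{2}\,\mathbbm 1_{\Omega^{c}}\big]$ is at most this envelope times $\mathbb P(\Omega^{c})\le 2\sum_{h}\delta(h)$. I would start from $\|(p\Delta_{\mathrm P}-\hat\Delta_{\hat h})f\|_{2,M}^{2}\le 2\|p\Delta_{\mathrm P}f\|_{2,M}^{2}+2\|\hat\Delta_{\hat h}f\|_{2,M}^{2}$ and bound each term. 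For the empirical operator, Cauchy--Schwarz applied to $\hat\Delta_h f(y)=\tfrac1n\sum_i H_h(y-X_i)(f(X_i)-f(y))$ followed by integration in $y$ and Lemma~\ref{tec_lem1} gives $\|\hat\Delta_h f\|_{2,M}^{2}\le\tfrac1n\sum_i\int_M|H_h(y-X_i)(f(X_i)-f(y))|^{2}\,\mathrm d\mu(y)\le\mathcal I_2(h)$ for every $h\in\mathcal H$; since $h\mapsto h^{-(d+2)}$ is decreasing and $h\mapsto\alpha_d(h)$ is increasing on $\mathcal H$, $\sup_{h\in\mathcal H}\mathcal I_2(h)\le\frac{2C_{\mathcal F}^{2}}{h_{\text{min}}^{d+2}}\big(w_d\|K_d\|_{2,d}^{2}+\alpha_d(h_{\text{max}})\big)$, a deterministic bound on $\|\hat\Delta_{\hat h}f\|_{2,M}^{2}$. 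For the limiting operator I would use the pointwise identity behind the proof of Proposition~\ref{prop_Dh}, namely $p(x)\Delta_{\mathrm P}f(x)=\frac1{(4\pi)^{d/2}}\int_{\mathbb R^{d}}\mathcal S(u)\,e^{-\|u\|_d^{2}/4}\,\mathrm du$ with $\mathcal S$ as in~\eqref{st}; bounding $|\mathcal S(u)|\le C_{\mathcal F}\big(\|p'\|_\infty+\tfrac12\|p\|_\infty\big)\|u\|_d^{2}$ and then $\|u\|_d^{2}e^{-\|u\|_d^{2}/4}\le 3\,e^{-\|u\|_d^{2}/8}$ (so the $u$-integral is $\le 3\tau_d$), one gets $|p(x)\Delta_{\mathrm P}f(x)|\le 3C_{\mathcal F}\big(\|p'\|_\infty+\tfrac12\|p\|_\infty\big)\tau_d$ and hence $\|p\Delta_{\mathrm P}f\|_{2,M}^{2}\le 9C_{\mathcal F}^{2}\big(\|p'\|_\infty+\tfrac12\|p\|_\infty\big)^{2}\tau_d^{2}\,\mu(M)$; this step uses only $p\in\mathcal C^{1}$, consistently with the hypothesis. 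Adding the two bounds, multiplying by $2\sum_{h}\delta(h)$ and factoring out $C_{\mathcal F}^{2}$ yields the second term of the claimed bound (up to the absolute constants produced by the two applications of $(x+y)^{2}\le 2x^{2}+2y^{2}$ and by $\mathbb P(\Omega^{c})\le 2\sum_h\delta(h)$), and summing the $\Omega$ and $\Omega^{c}$ contributions completes the proof.

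\textbf{Main obstacle.} The only delicate step is the sample-independent control of $\|\hat\Delta_{\hat h}f\|_{2,M}$ off the good event: one cannot invoke $V(h)$ or Theorem~\ref{prop_oineq} there because $\hat h$ is random, and must instead produce the crude deterministic envelope above — which is precisely what Lemma~\ref{tec_lem1} together with the monotonicity of $h\mapsto\mathcal I_2(h)$ delivers. Everything else is routine bookkeeping of constants.
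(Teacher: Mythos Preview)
Your proposal is correct and follows essentially the same route as the paper: split according to the event on which the oracle inequality of Theorem~\ref{prop_oineq} holds, square and use $(x+y)^2\le 2x^2+2y^2$ on that event, and on its complement bound $\|p\Delta_{\mathrm P}f\|_{2,M}^2$ via the normal-coordinate representation of $p\Delta_{\mathrm P}f$ together with the elementary Gaussian-moment bound, and $\|\hat\Delta_{\hat h}f\|_{2,M}^2$ via Lemma~\ref{tec_lem1} uniformly in $h$. The only cosmetic difference is that the paper reaches the deterministic envelope for $\|\hat\Delta_h f\|_{2,M}^2$ by the triangle inequality in $L^2(M)$ before squaring, whereas you apply Jensen/Cauchy--Schwarz inside the integral; both give $\mathcal I_2(h)$ and the rest of the bookkeeping is identical.
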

\begin{proof}
Let $A$ be the set on which the oracle inequality in 
~\cref{prop_oineq}
holds true. 
Then 
\begin{multline}\label{esp_2eq}
\mathbb E \left[ \| (p\Delta_{\mathrm P}  - \hat \Delta_{\hat h}) f\|_{2,M} ^2 \right] \\
\leq \mathbb E \left[ \| (p\Delta_{\mathrm P}  - \hat \Delta_{\hat h}) f\|_{2,M} ^2 \mathbbm 1_{A}\right] 
+ 2 \mathbb E \left[\left( \| p\Delta_{\mathrm P} f\|_{2,M} ^2 +\| \hat \Delta_{\hat h} f\|_{2,M} ^2\right) \mathbbm 1_{A^c}\right].
\end{multline}
Observe that to bound the first term, it is sufficient to use ~\cref{prop_oineq}. 
Thus we only have to consider the second term. 
According to ~\cref{eqDL}, we can rewrite the weighted Laplace-Beltrami operator
in $x$-normal coordinates as 
\[
\Delta_{\mathrm P}f(x) = \frac{1}{(4\pi)^{d/2}} \int e^{-\|u\|_d^2/4} 
\left( \frac{\langle p'(x), u\rangle}{p(x)} \ \langle f'(x), u \rangle + \frac{1}{2} \langle f''(x)u, u \rangle  \right) \, \mathrm d u,
\]
so that 
\begin{align*}
\| p\Delta_{\mathrm P} f\|_{2,M} ^2
& = \frac{1}{(4\pi)^{d}} \int \left(\int e^{-\|u\|_d^2/4} 
\left( \langle p'(x), u\rangle \ \langle f'(x), u \rangle + \frac{p(x)}{2} \langle f''(x)u, u \rangle  \right) \, \mathrm d u \right)^2 \, \mathrm d \mu(x)\\
& \leq  \frac{C_{\mathcal F}^2 \left(\|p'\|_\infty+\|p\|_\infty/2 \right)^2 \mu(M)}{(4\pi)^{d}}  \left(\int e^{-\|u\|_d^2/4} \|u\|_d^2 \, \mathrm d u \right)^2\\
& \leq  9 C_{\mathcal F}^2 \left(\|p'\|_\infty+\|p\|_\infty/2 \right)^2 \mu(M) \tau_d^2
\end{align*}
where in the last line we have used ~\cref{nb_exp} with $s=2,$ $q=2$ so that $\lambda\leq 3.$
Moreover for any $h \in \mathcal H$ by ~\cref{tec_lem1}
\begin{multline*}
\| \hat \Delta_{h} f\|_{2,M}^2
 \leq \frac{1}{n^2} \left( \sum_{i=1}^n \| H_h(\cdot-X_i) \left( f(X_i) - f(\cdot) \right)\|_{2,M} \right)^2\\
 \leq \mathcal I_2(h)
\leq  \frac{2C_{\mathcal F}^2}{h_{\text{min}}^{d+2}} \Big[ w_d \| K_d\|_{2,d}^2 + \alpha_d(h_{\text{max}})\Big].
\end{multline*}
Hence
\begin{multline*}
\mathbb E \bigg[\Big( \| p\Delta_{\mathrm P} f\|_{2,M} ^2 +\| \hat \Delta_{\hat h} f\|_{2,M} ^2\Big) \mathbbm 1_{A^c}\bigg]\\
\leq  \ C_{\mathcal F}^2 \left[ \frac{1}{h_{\text{min}}^{d+2}} \Big( w_d \| K_d\|_{2,d}^2 + \alpha_d(h_{\text{max}})\Big)+ 9  \big(\|p'\|_\infty+\|p\|_\infty/2 \big)^2 \mu(M) \tau_d^2 \right]\ \sum_{h\in \mathcal H} \delta(h).
\end{multline*}
\end{proof}

We now prove \cref{cor_c}. Note that $ \gamma_d(h') \geq \frac{\mathcal C_{d}}{h'^d}$ with 
$
\mathcal C_{d}
=\frac{1}{\|p\|_\infty } \left[ \frac{ \omega_d \ \|K_d\|_{2,d}^2 + \alpha_d(h_{\text{min}})}{\left(\omega_d\ \| K_d\|_{1,d} + \beta_d(h_{\text{max}})\right)^2}   \right].
$
Thus,
\begin{multline*}
\sum_{h\in \mathcal H} \delta(h)
\leq \sum_{h\in \mathcal H}\sum_{h'\leq h} \max\left\{  \exp\left( -\frac{\min\{\epsilon^2,\epsilon\}   \sqrt n }{24} \right), \,  \exp\left( -
\frac{ \mathcal C_{d} \ c\ \epsilon^2}{h'^d} \right) \right\}\\
\leq |\mathcal H|^2  \exp\left( -\frac{\min\{\epsilon^2,\epsilon\} \ \mathcal C_{d}' }{h^d_{\text{max}}} \right)
\end{multline*}
where $ \mathcal C_{d}'>0$ is a constant
and $|\mathcal H|$ denotes the cardinality of the bandwidth set $\mathcal H$. For
\[
\mathcal H = \left\{ e^{-k} \ , \ \lceil \log\log(n)\rceil \leq k \leq \lfloor \log(n) \rfloor  \right\},
\]
we get that the second term in \cref{cor_a1} is negligible with respect to the first one. Finally, observe that combining ~\cref{prop_Dh} with the definition of the variance term $V(h)$, the optimal trade-off in \cref{lem:corfinalbound} is given by $h \sim n^{-\frac{1}{d+4}}$, which concludes the proof.

\subsection{Proof of ~\cref{tec_lem1}}\label{pf_lem62}

In order to prove ~\cref{tec_eq1}, we consider 
\[
\mathcal B = \mathcal B_x(h) = \{ y \in M \ | \ \| y-x\|_m < L h \log(h^{-1})^{1/2} \} \subset M
\]
where $L>0$ has to be chosen
and we observe that
\[
 \frac{1}{h^{d+2}(4\pi)^{d/2}} \int_{\mathcal B \cup \mathcal B^c} e^{-\|y-x\|_m^2/4h^2} |f(x) - f(y)| \, \mathrm d \mu(y).
\]
We first look at the integral on $\mathcal B$ and we consider the $x$-normal coordinates. 
Taking into account that the measure $\mu$ has a density $\sqrt{\mathrm{det}g_{ij}}$ in the $x$-normal coordinates, we get
\begin{multline*}
 \frac{1}{h^{d+2}(4\pi)^{d/2}} \int_{\mathcal B} e^{-\|y-x\|_m^2/4h^2} |f(x) - f(y)| \, \mathrm d \mu(y)\\
= \frac{1}{h^{d+2}(4\pi)^{d/2}} \int_{\mathcal E_x^{-1}(\mathcal B)} e^{-\|\mathcal E_x(v) -x\|_m^2/4h^2} |f(x) - f(\mathcal E_x(v))| \ \sqrt{\mathrm{det}g_{ij}}(v) \, \mathrm d v\\
\leq \frac{1}{h^{d+2}(4\pi)^{d/2}} \int_{\mathcal E_x^{-1}(\mathcal B)} e^{-\|\mathcal E_x(v) -x\|_m^2/4h^2} |f(x) - f(\mathcal E_x(v))| \ (1+C_1\|v\|_d^2) \, \mathrm d v
\end{multline*}
where in the last line we have used ~\cref{eq_310}.
Using the Taylor expansion of $f$ in $x$-normal coordinates
\begin{equation}\label{taylor1}
f(\mathcal E_x(v)) - f(x) = f(\mathcal E_x(v)) - f(\mathcal E_x(0)) = \langle f'(x), v\rangle + \frac{1}{2} \langle f''(\xi) v, v \rangle 
\end{equation}
where $\xi = \xi(x) \in M$ and $f^(k)$ denotes the $k$-th derivate with respect to $v$ of $f \circ \mathcal E_x(v)$ the above chain of inequalities is bounded by
\begin{multline*}
\frac{1}{h^{d+2}(4\pi)^{d/2}} \Bigg[ \int_{\mathcal E_x^{-1}(\mathcal B)} e^{-\|\mathcal E_x(v) -x\|_m^2/4h^2} |\langle f'(x), v\rangle| \ (1+C_1\|v\|_d^2) \, \mathrm d v\\
+  \frac{1}{2} \int_{\mathcal E_x^{-1}(\mathcal B)} e^{-\|\mathcal E_x(v) -x\|_m^2/4h^2} |\langle f''(\xi) v, v \rangle| \ (1+C_1\|v\|_d^2) \, \mathrm d v
\Bigg]\\
\leq \frac{C_{\mathcal F}}{h^{d+2}(4\pi)^{d/2}}
\Bigg[ \int_{\mathcal E_x^{-1}(\mathcal B)} e^{-\|\mathcal E_x(v) -x\|_m^2/4h^2} \| v\|_d \ (1+C_1\|v\|_d^2) \, \mathrm d v\\
+  \frac{1}{2} \int_{\mathcal E_x^{-1}(\mathcal B)} e^{-\|\mathcal E_x(v) -x\|_m^2/4h^2} \|v\|_d^2 \ (1+C_1\|v\|_d^2) \, \mathrm d v
\Bigg]
\end{multline*}
where in the last line we have used the fact $f \in \mathcal F$ is uniformly bounded up to the third order.
We now consider the two terms separately.
We write
\begin{multline*}
\frac{1}{h^{d+2}(4\pi)^{d/2}} \int_{\mathcal E_x^{-1}(\mathcal B)} e^{-\|\mathcal E_x(v) -x\|_m^2/4h^2} \| v\|_d \ (1+C_1\|v\|_d^2) \, \mathrm d v\\
= \frac{1}{h^{d+2}(4\pi)^{d/2}} \int_{\mathcal E_x^{-1}(\mathcal B)} e^{-\|v\|_d^2/4h^2} \| v\|_d\, \mathrm d v + \mathcal R_1
\end{multline*}
where, according to ~\cref{nb_2},
\begin{multline*}
\mathcal R_1
=  \frac{1}{(4\pi)^{d/2} h^{d+2}} \int_{\mathcal E_x^{-1}(\mathcal B)} \left(e^{-\|\mathcal E_x(v) -x\|_m^2/4h^2} - e^{-\|v\|_d^2/4h^2} \right) \|v\|_d  \, \mathrm d v\\
+  \frac{C_1}{(4\pi)^{d/2} h^{d+2}} \int_{\mathcal E_x^{-1}(\mathcal B)} e^{-\|\mathcal E_x(v) -x\|_m^2/4h^2} \|v\|_d^3  \, \mathrm d v\\
\leq \frac{1}{(4\pi)^{d/2} h^{d+2}} \int_{\mathbb R^d} \left( \frac{C\|v\|_d^5}{4h^2} + C_1\|v\|_d^3 \right) e^{-\|v\|_d^2/8h^2} \, \mathrm d v = h \tilde D_3
\end{multline*}
and by ~\cref{nb_exp}
\begin{multline*}
 \frac{1}{(4\pi)^{d/2} h^{d+2}} \int_{\mathcal E_x^{-1}(\mathcal B)} e^{-\|v\|_d^2 /4h^2} \| v\|_d \, \mathrm d v
\leq \frac{3}{2(4\pi)^{d/2} h^{d+1}} \int_{\mathbb R^d} e^{-\|v\|_d^2 /8h^2}\, \mathrm d v\\
 =  \frac{3 \times 2^{d/2}}{2(4\pi)^{d/2} h} \int_{\mathbb R^d} e^{-\|u\|_d^2 /4}\, \mathrm d v =  \frac{3 \times 2^{d/2-1} \| K_d\|_{1,d}}{h}.
\end{multline*}

\noindent
Similarly, the second term can be written as
\begin{multline*}
\frac{1}{h^{d+2}(4\pi)^{d/2}}
 \int_{\mathcal E_x^{-1}(\mathcal B)} e^{-\|\mathcal E_x(v) -x\|_m^2/4h^2} \|v\|_d^2 \ (1+C_1\|v\|_d^2) \, \mathrm d v\\
 = \frac{1}{h^{d+2}(4\pi)^{d/2}}
  \int_{\mathcal E_x^{-1}(\mathcal B)} e^{-\|v\|_d^2/4h^2} \|v\|_d^2 \, \mathrm d v + \mathcal R_2
\end{multline*}
where, according to ~\cref{nb_2},
\begin{multline*}
\mathcal R_2
=  \frac{1}{(4\pi)^{d/2} h^{d+2}} \int_{\mathcal E_x^{-1}(\mathcal B)} \Big(e^{-\|\mathcal E_x(v) -x\|_m^2/4h^2} -  e^{-\|v\|_d^2/4h^2} \Big) \|v\|_d^2  \, \mathrm d v\\
 +  \frac{C_1}{(4\pi)^{d/2} h^{d+2}} \int_{\mathcal E_x^{-1}(\mathcal B)} e^{-\|\mathcal E_x(v) -x\|_m^2/4h^2} \|v\|_d^4  \, \mathrm d v
 \leq h^2 \tilde D_4
\end{multline*}
and by ~\cref{nb_exp}
\[
\frac{1}{(4\pi)^{d/2} h^{d+2}} \int_{\mathcal E_x^{-1}(\mathcal B)} e^{-\|v\|_d^2 /4h^2} \| v\|_d^2 \, \mathrm d v
\leq 3 \times 2^{d/2} \| K_d\|_{1,d}.
\]
This proves that, on $\mathcal B$,
\begin{multline*}
 \frac{1}{h^{d+2}(4\pi)^{d/2}} \int_{\mathcal B} e^{-\|y-x\|_m^2/4h^2} |f(x) - f(y)| \, \mathrm d \mu(y)\\
 \leq C_{\mathcal F} \left[
 \frac{\omega_d \| K_d\|_{1,d}}{h} + \omega_d \| K_d\|_{1,d} +h \tilde D_3+ \frac{h^2 \tilde D_4}{2}
 \right]
\end{multline*}
where we recall the definition of $\omega_d=3 \times 2^{d/2-1}$ in ~\cref{od}.
We now consider the integral on $\mathcal B^c$. According to the definition of $\mathcal B$
\[
 \frac{1}{h^{d+2}(4\pi)^{d/2}} \int_{\mathcal B} e^{-\|y-x\|_m^2/4h^2} |f(x) - f(y)| \, \mathrm d \mu(y)
\leq \frac{2C_{\mathcal F}\mu(M)}{(4\pi)^{d/2} h^{d+2-L^2/4}}.
\]
Choosing $L=2\sqrt{d+2} $ so that $d+2 -L^2/4 = 0$ we prove ~\cref{tec_eq1}. \\[1mm]
In a similar way we prove ~\cref{tec_eq2}. We consider again the ball $\mathcal B$ defined as above 
and we write
\[
\frac{1}{h^{2d+4}(4\pi)^{d}} \int_{\mathcal B \cup \mathcal B^c} e^{-\|y-x\|_m^2/2h^2} |f(x) - f(y)|^2 \, \mathrm d \mu(y).
\]
Proceeding as above, on $\mathcal B$ we consider the $x$-normal coordinates
and according to ~\cref{eq_310} we get
\begin{multline*}
\frac{1}{h^{2d+4}(4\pi)^{d}} \int_{\mathcal B} e^{-\|y-x\|_m^2/2h^2} |f(x) - f(y)|^2 \, \mathrm d \mu(y)\\
\leq \frac{1}{h^{2d+4}(4\pi)^{d}} \int_{\mathcal E_x^{-1}(\mathcal B)} e^{-\|\mathcal E_x(v) -x\|_m^2/2h^2} |f(x) - f(\mathcal E_x(v))|^2 \ (1+C_1\|v\|_d^2) \, \mathrm d v.
\end{multline*}
Using the Taylor expansion of $f$ in ~\cref{taylor1} we bound the above quantity by
\begin{multline*}
\frac{2}{h^{2d+4}(4\pi)^{d}}
\Bigg[
\int_{\mathcal E_x^{-1}(\mathcal B)} e^{-\|\mathcal E_x(v) -x\|_m^2/2h^2} \langle f'(x), v\rangle^2 \ (1+C_1\|v\|_d^2) \, \mathrm d v\\
+ \frac{1}{4}\int_{\mathcal E_x^{-1}(\mathcal B)} e^{-\|\mathcal E_x(v) -x\|_m^2/2h^2}  \langle f''(\xi) v, v \rangle^2 \ (1+C_1\|v\|_d^2) \, \mathrm d v
\Bigg]\\
\leq \frac{2C_{\mathcal F}^2}{h^{2d+4}(4\pi)^{d}}
\Bigg[\int_{\mathcal E_x^{-1}(\mathcal B)} e^{-\|\mathcal E_x(v) -x\|_m^2/2h^2} \|v\|^2 \ (1+C_1\|v\|_d^2) \, \mathrm d v\\
+ \frac{1}{4}\int_{\mathcal E_x^{-1}(\mathcal B)} e^{-\|\mathcal E_x(v) -x\|_m^2/2h^2}  \| v\|^4 \ (1+C_1\|v\|_d^2) \, \mathrm d v
\Bigg]
\end{multline*}
where in the last line we have used the fact that $f \in \mathcal F$. 
We observe that the first term can be written as
\begin{multline*}
\frac{1}{h^{2d+4}(4\pi)^{d}}\int_{\mathcal E_x^{-1}(\mathcal B)} e^{-\|\mathcal E_x(v) -x\|_m^2/2h^2} \|v\|^2 \ (1+C_1\|v\|_d^2) \, \mathrm d v\\
= \frac{1}{h^{2d+4}(4\pi)^{d}}\int_{\mathcal E_x^{-1}(\mathcal B)} e^{-\|v\|_d^2/2h^2} \|v\|^2 \, \mathrm d v + \widetilde {\mathcal R}_1
\end{multline*}
where, by ~\cref{nb_2},
\begin{multline*}
\widetilde {\mathcal R}_1 
= \frac{1}{(4\pi)^{d} h^{2d+4}} \int_{\mathcal E_x^{-1}(\mathcal B)} \left( e^{-\|\mathcal E_x(v)- x\|_m^2/2h^2} - e^{- \| v\|_d^2/2h^2} \right)\ \| v\|_d^2   \, \mathrm d v\\
 +  \frac{C_1}{(4\pi)^{d} h^{2d+4}} \int_{\mathcal E_x^{-1}(\mathcal B)} e^{-\|\mathcal E_x(v)- x\|_m^2/2h^2} \| v\|_d^4 \, \mathrm d v\\
 \leq  \frac{1}{(4\pi)^d h^{2d+4}} \int_{\mathcal E_x^{-1}(\mathcal B)} \left( \frac{C \|v\|_d^6}{2h^2} + C_1  \|v\|_d^4 \right) e^{-\|v\|_d^2/4h^2}\, \mathrm d v
\leq \frac{D_4}{h^d}
\end{multline*}
and according to ~\cref{nb_exp} with $s^2 = 2h^2$ and $q=2$ so that $\lambda \leq \frac{3h^2}{2}$, 
\begin{multline*}
\frac{1}{(4\pi)^d h^{2d+4}} \int_{\mathcal E_x^{-1}(\mathcal B)} e^{-\|v\|_d^2/2h^2}\ \|v\|_d^2 \, \mathrm d v
 \leq \frac{3}{2(4\pi)^d h^{2d+2}} \int_{\mathbb R^d} e^{-\|v\|_d^2/4h^2} \, \mathrm d v\\
 = \frac{3\times 2^{d/2}}{2(4\pi)^d h^{d+2}} \int_{\mathbb R^d} e^{-\|u\|_d^2/2} \, \mathrm d u = \frac{3\times2^{d/2-1}\|K_d\|_{2,d}^2}{h^{d+2}}.
\end{multline*}
Moreover the second term writes
\begin{multline*}
\frac{1}{h^{2d+4}(4\pi)^{d}}\int_{\mathcal E_x^{-1}(\mathcal B)} e^{-\|\mathcal E_x(v) -x\|_m^2/2h^2} \|v\|^4 \ (1+C_1\|v\|_d^2) \, \mathrm d v\\
= \frac{1}{h^{2d+4}(4\pi)^{d}}\int_{\mathcal E_x^{-1}(\mathcal B)} e^{-\|v\|_d^2/2h^2} \|v\|^4 \, \mathrm d v + \widetilde {\mathcal R}_2
\end{multline*}
where, using again ~\cref{nb_2},
\begin{multline*}
\widetilde {\mathcal R}_2 = 
\frac{1}{(4\pi)^{d} h^{2d+4}} \int_{\mathcal E_x^{-1}(\mathcal B)} \left( e^{-\|\mathcal E_x(v)- x\|_m^2/2h^2} - e^{- \| v\|_d^2/2h^2} \right)\ \| v\|_d^4   \, \mathrm d v\\
 +  \frac{C_1}{(4\pi)^{d} h^{2d+4}} \int_{\mathcal E_x^{-1}(\mathcal B)} e^{-\|\mathcal E_x(v)- x\|_m^2/2h^2} \| v\|_d^6 \, \mathrm d v\\
 \leq \frac{1}{(4\pi)^d h^{2d+4}} \int_{\mathbb R^d} \left( \frac{C\|v\|_d^8}{2h^2} + C_1\|v\|_d^6\right) e^{-\|v\|_d^2/4h^2} \, \mathrm d v
= \frac{D_6}{h^{d-2}}
\end{multline*}
and by ~\cref{nb_exp}
with $\lambda \leq 9h^4$,
\[
 \frac{1}{(4\pi)^d h^{2d+4}} \int_{\mathcal E_x^{-1}(\mathcal B)} e^{-\|v\|_d^2/2h^2} \| v\|_d^4\, \mathrm dv
 \leq \frac{9\times 2^{d/2} \|K_d\|_{2,d}^2}{ h^d}.
\]
Then on $\mathcal B$ we have
\begin{multline*}
\frac{1}{h^{2d+4}(4\pi)^{d}} \int_{\mathcal B} e^{-\|y-x\|_m^2/2h^2} |f(x) - f(y)|^2 \, \mathrm d \mu(y)\\
\leq 2C_{\mathcal F}^2 \left[
\frac{\omega_d  \|K_d\|_{2,d}^2}{h^{d+2}}+ \frac{D_4+ 3\omega_d \|K_d\|_{2,d}^2/2}{h^d} + \frac{D_6}{4h^{d-2}}
\right]
\end{multline*}
while on $\mathcal B^c$ 
\[
\frac{1}{h^{2d+4}(4\pi)^{d}} \int_{\mathcal B^c} e^{-\|y-x\|_m^2/2h^2} |f(x) - f(y)|^2 \, \mathrm d \mu(y)
\leq \frac{4C_{\mathcal F}^2\mu(M)}{(4\pi)^{d}h^{2d+4-L^2/2}}.
\]
Thus choosing $L= \sqrt{2(d+4)}$ so that $ 2d+4-L^2/2= d$ we conclude the proof.

\subsection{Proofs of the technical lemmas in ~\cref{proof7}}\label{sec_proof_hor}

\subsubsection{Proof of ~\cref{lem1hor}}

According to ~\cref{eq_310}, 
\begin{multline*}
\mathrm I_1f(x) 
\leq  \left|  \frac{1}{h'^{d+2}(4\pi)^{d/2}} \int_{\mathcal E_x^{-1}(\mathcal B)} e^{- \| \mathcal E_x(v) -x\|_m^2/4h'^2}  \mathcal S(v) \, \mathrm d v - p(x) \Delta_{\mathrm P}f(x) \right| \\
+ \frac{C_1}{h'^{d+2}(4\pi)^{d/2}} \int_{\mathcal E_x^{-1}(\mathcal B)} e^{- \| \mathcal E_x(v) -x\|_m^2/4h'^2}  \mathcal S(v) \|v\|_d^2 \, \mathrm d v\\
\leq   \left|  \frac{1}{h'^{d+2}(4\pi)^{d/2}} \int_{\mathcal E_x^{-1}(\mathcal B)} e^{- \|v\|_d^2/4h'^2}  \mathcal S(v) \, \mathrm d v - p(x) \Delta_{\mathrm P}f(x) \right| + \mathcal R_1
\end{multline*}
where
\begin{multline*}
\mathcal R_1 
\leq \frac{1}{h'^{d+2}(4\pi)^{d/2}}\left| \int_{\mathcal E_x^{-1}(\mathcal B)} \left( e^{- \| \mathcal E_x(v) -x\|_m^2/4h'^2} - e^{- \|v\|_d^2/4h'^2}  \right) \mathcal S(v)\, \mathrm d v\right|\\
+ \frac{C_1}{h'^{d+2}(4\pi)^{d/2}}\left| \int_{\mathcal E_x^{-1}(\mathcal B)}e^{- \| \mathcal E_x(v) -x\|_m^2/4h'^2} \| v\|_d^2\ \mathcal S(v)\, \mathrm d v\right|.
\end{multline*}
Since
\[ 
| \mathcal S(t) | \leq C_{\mathcal F} \left( \|p'\|_\infty + \| p\|_\infty/2 \right) \|t \|_d^2
\]
by  ~\cref{nb_exp}
\begin{multline*}
\mathcal R_1 
\leq \frac{ C_{\mathcal F} \left( \|p'\|_\infty + \| p\|_\infty/2 \right)}{h'^{d+2}(4\pi)^{d/2}}
 \int_{\mathcal E_x^{-1}(\mathcal B)} e^{- \|v\|_d^2/8h'^2} \| v\|_d^2 \left( \frac{C\|v\|_d^4}{4h'^2}+ C_1\|v\|_d^2\right)  \, \mathrm d v\\
\leq C_{\mathcal F} \left( \|p'\|_\infty + \| p\|_\infty/2 \right) \tilde D_4 \ h'^2.
\end{multline*} 
Moreover according to ~\cref{eqDL} we can rewrite the weighted Laplace-Beltrami operator
in $x$-normal coordinates as 
\[
\Delta_{\mathrm P}f(x) = \frac{1}{(4\pi)^{d/2}} \int e^{-\|u\|_d^2/4} 
\left( \frac{\langle p'(x), u\rangle}{p(x)} \ \langle f'(x), u \rangle + \frac{1}{2} \langle f''(x)u, u \rangle  \right) \, \mathrm d u,
\]
so that
\[
p(x) \Delta_{\mathrm P}f(x) = \frac{1}{(4\pi)^{d/2}} \int e^{-\|u\|_d^2/4} \mathcal S(u) \, \mathrm d u.
\]
Hence, denoting 
\[
\tilde{\mathcal B} :=\left\{ v \in \mathbb R^d \ | \ \| v\|_d < L h' \log(h'^{-1})^{1/2}  \right\}
\subset \mathcal E_x^{-1}(\mathcal B). 
\]
we get
\begin{align*}
\bigg|  \frac{1}{h'^{d+2}(4\pi)^{d/2}}& \int_{\mathcal E_x^{-1}(\mathcal B)} e^{- \|v\|_d^2/4h'^2}  \mathcal S(v) \, \mathrm d v - p(x) \Delta_{\mathrm P}f(x) \bigg|\\
& = \frac{1}{h'^{d+2}(4\pi)^{d/2}}  \left| \int_{\left(\mathcal E_x^{-1}(\mathcal B)\right)^c} e^{- \|v\|_d^2/4h'^2}  \mathcal S(v) \, \mathrm d v\right| \\
& \leq  \frac{C_{\mathcal F} \left( \|p'\|_\infty + \| p\|_\infty/2 \right)}{h'^{d+2}(4\pi)^{d/2}} \int_{\tilde {\mathcal B}^c} e^{- \|v\|_d^2/4h'^2}  \|v\|_d^2 \, \mathrm d v\\
& \leq \frac{C_{\mathcal F} \left( \|p'\|_\infty + \| p\|_\infty/2 \right)}{(4\pi)^{d/2}} \int_{\left\{ \| u\|_d > L \log(h'^{-1})^{1/2}  \right\}} e^{- \|u\|_d^2/4}  \|u\|_d^2 \, \mathrm d u\\
& \leq \frac{ { \kappa_d} \ C_{\mathcal F} \left( \|p'\|_\infty + \| p\|_\infty/2 \right)}{(4\pi)^{d/2}} L^d \log(h'^{-1})^{d/2} h'^{L^2/4}
\end{align*}
which concludes the proof.

\subsubsection{Proof of ~\cref{lem2hor}}

By ~\cref{eq_310}
\begin{multline*}
\mathrm I_2f(x) 
\leq  \frac{\|p\|_\infty}{h'^{d+2}(4\pi)^{d/2}}  \left| \int_{\mathcal E_x^{-1}(\mathcal B)} e^{- \| \mathcal E_x(v) - x\|_m^2/4h'^2}
\langle f'(x), v\rangle \left(1+C_1\|v\|_d^2\right) \, \mathrm d v\right|  \\
\leq \frac{\|p\|_\infty}{h'^{d+2}(4\pi)^{d/2}}  \left| \int_{\mathcal E_x^{-1}(\mathcal B)} e^{- \| v\|_d^2/4h'^2}
\langle f'(x), v\rangle\, \mathrm d v\right| + \mathcal R_2
\end{multline*}
where
\begin{multline*}
\mathcal R_2 
=  \frac{\|p\|_\infty}{h'^{d+2}(4\pi)^{d/2}}  \Bigg[
\int_{\mathcal E_x^{-1}(\mathcal B)} \left( e^{- \| \mathcal E_x(v) - x\|_m^2/4h'^2} - e^{- \| v\|_d^2/4h'^2}\right) \langle f'(x), v\rangle\, \mathrm d v\\
+ C_1\int_{\mathcal E_x^{-1}(\mathcal B)} e^{- \| \mathcal E_x(v) - x\|_m^2/4h'^2} \|v\|_d^2  \ \langle f'(x), v\rangle\, \mathrm d v
\Bigg]\\
\end{multline*}
Observe that,  by ~\cref{nb_exp} and ~\cref{eq_310},
\begin{multline*}
\mathcal R_2 
\leq \frac{C_{\mathcal F}\|p\|_\infty}{h'^{d+2}(4\pi)^{d/2}}  \Bigg[
\int_{\mathcal E_x^{-1}(\mathcal B)} \left( e^{- \| \mathcal E_x(v) - x\|_m^2/4h'^2} - e^{- \| v\|_d^2/4h'^2}\right) \| v\|_d\, \mathrm d v\\
+ C_1\int_{\mathcal E_x^{-1}(\mathcal B)} e^{- \| \mathcal E_x(v) - x\|_m^2/4h'^2} \|v\|_d^3 \, \mathrm d v
\Bigg]\\
\leq \frac{C_{\mathcal F} \| p\|_\infty}{h'^{d+2}(4\pi)^{d/2}} \int_{\mathcal E_x^{-1}(\mathcal B)} \| v\|_d \ e^{- \| v\|_d^2/8h'^2}
\left(\frac{C \|v\|_d^4}{4h'^2} + C_1 \|v\|_d^2 \right)  \, \mathrm d v\\
\leq C_{\mathcal F} \| p\|_\infty \tilde D_3 \ h'
\end{multline*}
where in the last line we have used the definition of $\tilde D_\alpha$ introduced in ~\cref{def_tDa}.
Moreover, observe that 
\[
\int_{\mathbb R^d} e^{- \| v\|_d^2/4h'^2} \ \langle f'(x), v\rangle  \, \mathrm d v =0
\]
and define
\[
\tilde{\mathcal B} :=\left\{ v \in \mathbb R^d \ | \ \| v\|_d < L h' \log(h'^{-1})^{1/2}  \right\}
\subset \mathcal E_x^{-1}(\mathcal B). 
\]
Then
we get
\begin{multline*}
\frac{\|p\|_\infty}{h'^{d+2}(4\pi)^{d/2}}  \left| \int_{\mathcal E_x^{-1}(\mathcal B)} e^{- \| v\|_d^2/4h'^2}
\langle f'(x), v\rangle\, \mathrm d v\right| \\
= \frac{\|p\|_\infty}{h'^{d+2}(4\pi)^{d/2}}  \left| \int_{\mathcal E_x^{-1}(\mathcal B)^c} e^{- \| v\|_d^2/4h'^2}
\langle f'(x), v\rangle\, \mathrm d v\right| \\
\leq \frac{\|p\|_\infty}{h'^{d+2}(4\pi)^{d/2}}  \left| \int_{\tilde{\mathcal B}^c} e^{- \| v\|_d^2/4h'^2}
\langle f'(x), v\rangle\, \mathrm d v\right| \\\leq  \frac{C_{\mathcal F} \| p\|_\infty}{h'^{d+2}(4\pi)^{d/2}} \int_{\tilde{\mathcal B}^c} e^{- \| v\|_d^2/4h'^2} \|v\|_d \, \mathrm d v\\
\leq  \frac{C_{\mathcal F} \| p\|_\infty}{h'(4\pi)^{d/2}} \int_{\left\{ \| u\|_d > L \log(h'^{-1})^{1/2}  \right\}} e^{- \| u\|_d^2/4} \|u\|_d \, \mathrm d v\\
\leq  \frac{ \kappa_{d-1} \ C_{\mathcal F} \| p\|_\infty}{(4\pi)^{d/2}} L^{d-1} \log(h'^{-1})^{(d-1)/2} h'^{L^2/4 -1}.
\end{multline*}
Similarly we can bound $\mathrm I_3f(x)$. By ~\cref{eq_310} we have
\begin{multline*}
\mathrm I_3f(x) 
\leq  \frac{C_{\mathcal F} \| p''\|_\infty}{2h'^{d+2}(4\pi)^{d/2}} 
 \int_{\mathcal E_x^{-1}(\mathcal B)} e^{- \| \mathcal E_x(v) -x\|_m^2/4h'^2}\|v\|_d^3 \left( 1+ C_1\|v\|_d^2\right)  \, \mathrm d v\\
 =\frac{C_{\mathcal F} \| p''\|_\infty}{2h'^{d+2}(4\pi)^{d/2}} \int_{\mathcal E_x^{-1}(\mathcal B)} e^{- \| v\|_d^2/4h'^2}\|v\|_d^3  \, \mathrm d v+ \mathcal R_3
 \end{multline*}
 where, according to  ~\cref{nb_exp},
\begin{multline*}
\mathcal R_3 
= \frac{C_{\mathcal F} \| p''\|_\infty}{2h'^{d+2}(4\pi)^{d/2}} 
\Bigg[  \int_{\mathcal E_x^{-1}(\mathcal B)} \left(e^{- \| \mathcal E_x(v) -x\|_m^2/4h'^2} - e^{- \| v\|_d^2/4h'^2}\right) \|v\|_d^3  \, \mathrm d v\\
+ C_1 \int_{\mathcal E_x^{-1}(\mathcal B)} e^{- \| \mathcal E_x(v) -x\|_m^2/4h'^2}  \|v\|_d^5  \, \mathrm d v\Bigg]\\
\leq \frac{1}{2} C_{\mathcal F} \| p''\|_\infty \tilde D_5 \ h'^3
\end{multline*}
and by ~\cref{nb_exp} with $s= 2h'$ and $q=3$ so that $\lambda \leq 10h'^3$
\[
\frac{C_{\mathcal F} \| p''\|_\infty}{2h'^{d+2}(4\pi)^{d/2}} \int_{\mathcal E_x^{-1}(\mathcal B)} e^{- \| v\|_d^2/4h'^2}\|v\|_d^3  \, \mathrm d v
\leq 5 C_{\mathcal F} \| p''\|_\infty \tau_d\ h'.
\]
The bounds for $\mathrm I_4f(x), \ \mathrm I_5f(x), \mathrm I_6f(x)$ are obtained in the same way.

\end{document}